\newtheorem{theorem}{Theorem}
\newtheorem{example}{Example}
\newtheorem{remark}{Remark}
\newtheorem{corollary}{Corollary}
\newtheorem{lemma}{Lemma}
\newtheorem{proposition}{Proposition}
\begin{document}

\title{A Further Study of Vectorial Dual-Bent Functions$^{\dag}$}
\author{Jiaxin Wang, Fang-Wei Fu, Yadi Wei, Jing Yang
\IEEEcompsocitemizethanks{\IEEEcompsocthanksitem Jiaxin Wang, Fang-Wei Fu and Yadi Wei are with Chern Institute of Mathematics and LPMC, Nankai University, Tianjin 300071, China, Emails: wjiaxin@mail.nankai.edu.cn, fwfu@nankai.edu.cn, wydecho@mail.nankai.edu.cn; Jing Yang is with Tsinghua Shenzhen International Graduate School, Tsinghua University, Shenzhen, 518055, China, Email: yangjing@sz.tsinghua.edu.cn
}
\thanks{$^\dag$This research is supported by the National Key Research and Development Program of China (Grant Nos. 2022YFA1005000 and 2018YFA0704703), the National Natural Science Foundation of China (Grant Nos. 12141108, 62371259, 12226336, 62171248), the Fundamental Research Funds for the Central Universities of China (Nankai University), the Nankai Zhide Foundation, Shenzhen Science and Technology Program (JCYJ20220818101012025), and the PCNL KEY project (PCL2021A07).}
\thanks{manuscript submitted  September 23, 2023}
}

\maketitle

\begin{abstract}
  Vectorial dual-bent functions have recently attracted some researchers' interest as they play a significant role in constructing partial difference sets, association schemes, bent partitions and linear codes. In this paper, we further study vectorial dual-bent functions $F: V_{n}^{(p)}\rightarrow V_{m}^{(p)}$, where $2\leq m \leq \frac{n}{2}$, $V_{n}^{(p)}$ denotes an $n$-dimensional vector space over the prime field $\mathbb{F}_{p}$. We give new characterizations of certain vectorial dual-bent functions (called vectorial dual-bent functions with Condition A) in terms of amorphic association schemes, linear codes and generalized Hadamard matrices, respectively. When $p=2$, we characterize vectorial dual-bent functions with Condition A in terms of bent partitions. Furthermore, we characterize certain bent partitions in terms of amorphic association schemes, linear codes and generalized Hadamard matrices, respectively. For general vectorial dual-bent functions $F: V_{n}^{(p)}\rightarrow V_{m}^{(p)}$ with $F(0)=0, F(x)=F(-x)$ and $2\leq m \leq \frac{n}{2}$, we give a necessary and sufficient condition on constructing association schemes. Based on such a result, more association schemes are constructed from vectorial dual-bent functions.
\end{abstract}

\begin{IEEEkeywords}
Vectorial dual-bent functions; Association schemes; Generalized Hadamard matrices; Linear codes; Bent partitions; Partial difference sets
\end{IEEEkeywords}

%-------------------------------------------------------------------------------
\section{Introduction}
\label{sec: 1}
Boolean bent functions were introduced by Rothaus in \cite{Rothaus1976On}, which have been extensively studied due to their important applications in cryptography, coding theory, combinatorics and sequences. Please refer to book \cite{Mesnager2016Be} for further understanding Boolean bent functions and their generalizations, such as $p$-ary bent functions and vectorial bent functions, where $p$ is an odd prime.

As a special class of vectorial bent functions, vectorial dual-bent functions introduced by \c{C}e\c{s}melio\u{g}lu \emph{et al.} \cite{CMP2018Ve} have attracted some researchers' research interest due to their significant applications in constructing partial difference sets \cite{CM2018Be,CMP2021Ve,WF2023Ne}, association schemes \cite{AKMO2023Ve}, bent partitions \cite{WFW2023Be} and linear codes \cite{WSWF2023Co}. Recently, for certain vectorial dual-bent functions $F: V_{n}^{(p)}\rightarrow V_{m}^{(p)}$ (called vectorial dual-bent functions with Condition A), where $V_{n}^{(p)}$ denotes an $n$-dimensional vector space over the prime field $\mathbb{F}_{p}$, Wang \emph{et al.} in  \cite{WFW2023Be} provided a characterization in terms of partial difference sets. Furthermore, when $p$ is an odd prime, they provided a characterization in terms of bent partitions. When $p=2$, they showed that vectorial dual-bent functions with Condition A can be used to construct bent partitions, but they did not give a characterization of vectorial dual-bent functions with Condition A in terms of bent partitions. As far as we know, apart from the literature \cite{WFW2023Be}, there is a lack of research on the characterizations of vectorial dual-bent functions. As to the applications, Anbar \emph{et al.} in \cite{AKMO2023Ve} considered using vectorial dual-bent functions to construct association schemes. Also, they in \cite{AKM2024Am} used bent partitions to construct association schemes. Anbar \emph{et al.} showed that vectorial dual-bent functions $F: V_{n}^{(p)}\rightarrow V_{m}^{(p)}$ with $F(0)=0$ and all component functions $F_{c}, c \in V_{m}^{(p)} \backslash \{0\}$ being regular or weakly regular but not regular (that is, the corresponding $\varepsilon_{F_{c}}, c \in V_{m}^{(p)} \backslash \{0\}$ are all the same) can induce association schemes. Note that for such vectorial dual-bent functions, $n$ must be even. It is interesting to investigate whether there are other vectorial dual-bent functions which can be used to construct association schemes.

In this paper, we further study vectorial dual-bent functions $F: V_{n}^{(p)} \rightarrow V_{m}^{(p)}$, where $2\leq m \leq \frac{n}{2}$. We  summarize our contributions as below.
\begin{itemize}
  \item For any prime $p$, we provide new characterizations of vectorial dual-bent functions $F: V_{n}^{(p)} \rightarrow V_{m}^{(p)}$ with Condition A in terms of amorphic association schemes, linear codes and generalized Hadamard matrices, respectively.
  \item We present the relations between bent partitions of $V_{n}^{(2)}$ of depth $2^m$ and the corresponding vectorial bent functions, based on which we characterize vectorial dual-bent functions with Condition A in terms of bent partitions when $p=2$.
  \item Based on the relations between vectorial dual-bent functions with Condition A and bent partitions, we give new characterizations of certain bent partitions in terms of amorphic association schemes, linear codes and generalized Hadamard matrices, respectively.
  \item For general vectorial dual-bent functions $F: V_{n}^{(p)} \rightarrow V_{m}^{(p)}$ with $F(0)=0$, $F(x)=F(-x)$ and $2\leq m \leq \frac{n}{2}$, a necessary and sufficient condition on constructing association schemes from $F$ is presented. Based on such a result, more association schemes are constructed by using two classes of vectorial dual-bent functions $F: V_{n}^{(p)}\rightarrow V_{m}^{(p)}$ for which $n$ can be odd, or $n$ is even and the corresponding $\varepsilon_{F_{c}}, c \in V_{m}^{(p)} \backslash \{0\}$ are not all the same.
\end{itemize}

The rest of the paper is organized as follows. Section II provides necessary preliminaries. In Sections III-VI, we provide some new characterizations of certain vectorial dual-bent functions. In Section VII, some new characterizations of certain bent partitions are presented. In Section VIII, for certain vectorial dual-bent functions, a necessary and sufficient condition on constructing association schemes is given. In Section IX, we make a conclusion.

\section{Preliminaries}
\label{sec: 2}
In this section, we give the needed results on vectorial dual-bent functions, bent partitions, partial difference sets, association schemes, generalized Hadamard matrices and linear codes, respectively. First, we fix some notations used throughout this paper.
\begin{itemize}
  \item $p$ is a prime and $\zeta_{p}=e^{\frac{2\pi \sqrt{-1}}{p}}$ is a complex primitive $p$-th root of unity.
  \item $\mathbb{F}_{p^n}$ is the finite field with $p^n$ elements.
  \item $\mathbb{F}_{p}^{n}$ is the vector space of the $n$-tuples over $\mathbb{F}_{p}$.
  \item $V_{n}^{(p)}$ is an $n$-dimensional vector space over $\mathbb{F}_{p}$.
  \item $\langle \cdot \rangle_{n}$ denotes a (non-degenerate) inner product of $V_{n}^{(p)}$. In this paper, when $V_{n}^{(p)}=\mathbb{F}_{p^n}$, let $\langle a, b\rangle_{n}=Tr_{1}^{n}(ab)$, where $a, b \in \mathbb{F}_{p^n}$, $Tr_{m}^{n}(\cdot)$ denotes the trace function from $\mathbb{F}_{p^n}$ to $\mathbb{F}_{p^m}$, $m \mid n$; when $V_{n}^{(p)}=\mathbb{F}_{p}^{n}$, let $\langle a, b\rangle_{n}=a \cdot b=\sum_{i=1}^{n}a_{i}b_{i}$, where $a=(a_{1}, \dots, a_{n}), b=(b_{1}, \dots, b_{n})\in \mathbb{F}_{p}^{n}$; when $V_{n}^{(p)}=V_{n_{1}}^{(p)}\times \dots \times V_{n_{s}}^{(p)} (n=\sum_{i=1}^{s}n_{i})$, let $\langle a, b\rangle_{n}=\sum_{i=1}^{s}\langle a_{i}, b_{i}\rangle_{n_{i}}$, where $a=(a_{1}, \dots, a_{s}), b=(b_{1}, \dots, b_{s})\in V_{n}^{(p)}$.
  \item For any set $A\subseteq V_{n}^{(p)}$, let $A^{*}=A \backslash \{0\}$ and $\chi_{u}(A)=\sum_{x \in A}\chi_{u}(x), u \in V_{n}^{(p)}$, where $\chi_{u}$ denotes the character $\chi_{u}(x)=\zeta_{p}^{\langle u, x\rangle_{n}}$.
  \item For a function $F: V_{n}^{(p)}\rightarrow V_{m}^{(p)}$, let $D_{F, i}=\{x \in V_{n}^{(p)}: F(x)=i\}, i \in V_{m}^{(p)}$ and $F({V_{n}^{(p)}}^{*})=\{F(x), x \in {V_{n}^{(p)}}^{*}\}$.
  \item For any set $A$, let $\delta_{A}$ be the indicator function. In particular, if $A=\{a\}$, we simply denote $\delta_{\{a\}}$ by $\delta_{a}$.
\end{itemize}

\subsection{Vectorial dual-bent functions} \label{subsec: 2.1}
A function from $V_{n}^{(p)}$ to $V_{m}^{(p)}$ is called a \textit{vectorial $p$-ary function}, or simply \textit{$p$-ary function} when $m=1$. For a $p$-ary function $f: V_{n}^{(p)}\rightarrow \mathbb{F}_{p}$, the Walsh transform $W_{f}$ is defined as
\begin{equation}\label{1}
  W_{f}(a)=\sum_{x \in V_{n}^{(p)}}\zeta_{p}^{f(x)-\langle a, x\rangle_{n}}, a \in V_{n}^{(p)}.
\end{equation}
The $p$-ary function $f$ can be recovered by the inverse transform
 \begin{equation}\label{2}
  \zeta_{p}^{f(x)}=\frac{1}{p^n}\sum_{a\in V_{n}^{(p)}}W_{f}(a)\zeta_{p}^{\langle a, x\rangle_{n}}, x \in V_{n}^{(p)}.
 \end{equation}

A $p$-ary function $f: V_{n}^{(p)}\rightarrow \mathbb{F}_{p}$ is called \textit{bent} if $|W_{f}(a)|=p^{\frac{n}{2}}$ for any $a \in V_{n}^{(p)}$. When $p=2$, that is, $f$ is a Boolean bent function, then $n$ must be even. The Walsh transform of a $p$-ary bent function $f: V_{n}^{(p)} \rightarrow \mathbb{F}_{p}$ satisfies that when $p=2$, then
\begin{equation}\label{3}
  W_{f}(a)=2^{\frac{n}{2}}(-1)^{f^{*}(a)}, a \in V_{n}^{(2)},
\end{equation}
and when $p$ is an odd prime, then
\begin{equation}\label{4}
  W_{f}(a)=\left\{\begin{split}
                     \pm p^{\frac{n}{2}}\zeta_{p}^{f^{*}(a)}, & \ \text{ if } \ p \equiv 1 \pmod 4 \ \text{or} \ n \ \text{is even},\\
                     \pm \sqrt{-1} p^{\frac{n}{2}} \zeta_{p}^{f^{*}(a)}, & \ \text{ if } \ p \equiv 3 \pmod 4 \ \text{and} \ n \ \text{is odd},
                  \end{split}\right.
\end{equation}
where $f^{*}$ is a $p$-ary function from $V_{n}^{(p)}$ to $\mathbb{F}_{p}$, called the \textit{dual} of $f$. A $p$-ary bent function $f: V_{n}^{(p)}\rightarrow \mathbb{F}_{p}$ is said to be \textit{weakly regular} if $W_{f}(a)=\varepsilon_{f}p^{\frac{n}{2}}\zeta_{p}^{f^{*}(a)}$, where $\varepsilon_{f}$ is a constant independent of $a$, otherwise $f$ is called \textit{non-weakly regular}. In particular, if $W_{f}(a)=p^{\frac{n}{2}}\zeta_{p}^{f^{*}(a)}$, that is, $\varepsilon_{f}=1$, then $f$ is called \textit{regular}. All Boolean bent functions are regular. The dual $f^{*}$ of a weakly regular bent function $f$ is also a weakly regular bent function and
\begin{equation}\label{5}
(f^{*})^{*}(x)=f(-x), \varepsilon_{f^{*}}=\varepsilon_{f}^{-1}.
\end{equation}

A vectorial $p$-ary function $F: V_{n}^{(p)}\rightarrow V_{m}^{(p)}$ is called \textit{vectorial bent} if all \textit{component functions} $F_{c}: V_{n}^{(p)}\rightarrow \mathbb{F}_{p}, c \in {V_{m}^{(p)}}^{*}$ defined as $F_{c}(x)=\langle c, F(x)\rangle_{m}$ are bent. It is known that if $F: V_{n}^{(p)}\rightarrow V_{m}^{(p)}$ is vectorial bent with all component functions $F_{c}, c \in {V_{m}^{(p)}}^{*}$ being regular or weakly regular but not regular (that is, $\varepsilon_{F_{c}}$ is a constant independent of $c$), then $n$ is even and $m\leq \frac{n}{2}$ (see \cite{AKMO2023Ve,CMP2021Ve}). A vectorial $p$-ary bent function $F: V_{n}^{(p)}\rightarrow V_{m}^{(p)}$ is called \textit{vectorial dual-bent} if the set of the duals $(F_{c})^{*}, c \in {V_{m}^{(p)}}^{*}$ of the component functions $F_{c}, c \in {V_{m}^{(p)}}^{*}$ of $F$ (together with the zero function) forms a vector space $\mathcal{V}_{F}$ of bent functions of dimension $m$.

For a vectorial dual-bent function $F: V_{n}^{(p)} \rightarrow V_{m}^{(p)}$, let $\{(F_{c_{1}})^{*}, \dots, (F_{c_{m}})^{*}\}$ be any basis of $\mathcal{V}_{F}$, where $c_{i} \in {V_{m}^{(p)}}^{*}, 1\leq i \leq m$. Then for any $c \in {V_{m}^{(p)}}^{*}$, there is unique nonzero vector $(a_{1}^{(c)}, \dots, a_{m}^{(c)}) \in \mathbb{F}_{p}^{m}$ such that $(F_{c})^{*}=\sum_{i=1}^{m}a_{i}^{(c)}(F_{c_{i}})^{*}$. Define $G: V_{n}^{(p)} \rightarrow V_{m}^{(p)}$ as $G(x)=\sum_{i=1}^{m}(F_{c_{i}})^{*}(x)\alpha_{i}$, where $\{\alpha_{1}, \dots, \alpha_{m}\}$ is any basis of $V_{m}^{(p)}$. For any $c \in {V_{m}^{(p)}}^{*}$, let $\sigma(c) \in {V_{m}^{(p)}}^{*}$ be given by the following equation system:
\begin{equation*}
\left\{
\begin{split}
& \langle \sigma(c), \alpha_{1}\rangle_{m}=a_{1}^{(c)},\\
& \langle \sigma(c), \alpha_{2}\rangle_{m}=a_{2}^{(c)},\\
& \ \ \ \ \ \vdots\\
& \langle \sigma(c), \alpha_{m}\rangle_{m}=a_{m}^{(c)}.\\
\end{split}\right.
\end{equation*}
Then $\sigma$ is a permutation over ${V_{m}^{(p)}}^{*}$ and $(F_{c})^{*}=G_{\sigma(c)}, c \in {V_{m}^{(p)}}^{*}$. Since $F$ is vectorial dual-bent, $(F_{c})^{*}, c \in {V_{m}^{(p)}}^{*}$ are all bent functions and $G$ is vectorial bent. By the argument, one can see that a vectorial bent function $F: V_{n}^{(p)}\rightarrow V_{m}^{(p)}$ is vectorial dual-bent if and only if there exists a vectorial bent function $G: V_{n}^{(p)}\rightarrow V_{m}^{(p)}$ such that $(F_{c})^{*}=G_{\sigma(c)}, c \in {V_{m}^{(p)}}^{*}$, where $\sigma$ is some permutation over ${V_{m}^{(p)}}^{*}$. The vectorial bent function $G$ is called a \textit{vectorial dual} of $F$ and denoted by $F^{*}$. By the above analysis, one can see that the vectorial dual of a vectorial dual-bent function is not unique. In the following, we show that if $F$ is a vectorial dual-bent function for some fixed permutation $\sigma$ over ${V_{m}^{(p)}}^{*}$, then its vectorial dual $F^{*}$ with $(F_{c})^{*}=(F^{*})_{\sigma(c)}, c \in {V_{m}^{(p)}}^{*}$ is unique.

\begin{proposition}\label{1}
Let $F: V_{n}^{(p)}\rightarrow V_{m}^{(p)}$ be a vectorial dual-bent function for some fixed permutation $\sigma$ over ${V_{m}^{(p)}}^{*}$. Then its vectorial dual $F^{*}$ with $(F_{c})^{*}=(F^{*})_{\sigma(c)}, c \in {V_{m}^{(p)}}^{*}$ is unique.
\end{proposition}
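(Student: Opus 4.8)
The plan is to prove uniqueness directly: suppose $G$ and $H$ are both vectorial duals of $F$ compatible with the \emph{same} fixed permutation $\sigma$, meaning $(F_{c})^{*}=G_{\sigma(c)}=H_{\sigma(c)}$ for every $c\in{V_{m}^{(p)}}^{*}$, and deduce that $G=H$. The argument rests on two facts already in place: that $\sigma$ is a bijection of ${V_{m}^{(p)}}^{*}$, and that a vectorial function is determined by its component functions because $\langle\cdot,\cdot\rangle_{m}$ is non-degenerate.

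First I would exploit surjectivity of $\sigma$. As $c$ ranges over ${V_{m}^{(p)}}^{*}$, the image $\sigma(c)$ ranges over all of ${V_{m}^{(p)}}^{*}$, so the hypothesis $G_{\sigma(c)}=H_{\sigma(c)}$ for all $c$ is equivalent to $G_{d}=H_{d}$ for all $d\in{V_{m}^{(p)}}^{*}$; that is, $G$ and $H$ have identical component functions. Then I would recover $G=H$ from this: fix $x\in V_{n}^{(p)}$ and set $y=G(x)-H(x)\in V_{m}^{(p)}$. For every $d\in{V_{m}^{(p)}}^{*}$ one has $\langle d,y\rangle_{m}=\langle d,G(x)\rangle_{m}-\langle d,H(x)\rangle_{m}=G_{d}(x)-H_{d}(x)=0$, and $\langle 0,y\rangle_{m}=0$ trivially, so $\langle d,y\rangle_{m}=0$ for all $d\in V_{m}^{(p)}$. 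Non-degeneracy of the inner product forces $y=0$, hence $G(x)=H(x)$; since $x$ is arbitrary, $G=H$.

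I do not foresee a genuine obstacle; the statement is essentially a bookkeeping consequence of the setup preceding it. The only points that merit explicit mention are that the passage from ``equal at $\sigma(c)$ for all $c$'' to ``equal at every $d$'' uses the surjectivity (equivalently, bijectivity) of $\sigma$ on ${V_{m}^{(p)}}^{*}$, and that the concluding step may harmlessly include $d=0$ so that non-degeneracy can be invoked over the whole space $V_{m}^{(p)}$.
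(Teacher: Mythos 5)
Your proof is correct and is essentially the same argument as the paper's: both reduce uniqueness to the fact that an element of $V_{m}^{(p)}$ is determined by its inner products, via non-degeneracy of $\langle\cdot,\cdot\rangle_{m}$ and the bijectivity of $\sigma$. The only cosmetic difference is that the paper constructs $F^{*}(x)$ as the unique solution of the linear system given by a basis $\{\sigma(c_{1}),\dots,\sigma(c_{m})\}$, whereas you compare two candidate duals and show their difference is orthogonal to all of $V_{m}^{(p)}$; both hinge on the same linear-algebra fact.
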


\begin{proof}
Let $\{\alpha_{1}, \dots, \alpha_{m}\}$ be any basis of $V_{m}^{(p)}$ and $c_{i}=\sigma^{-1}(\alpha_{i}), 1\leq i \leq m$. Then $\{\sigma(c_{1}), \dots, \sigma(c_{m})\}$ is a basis of $V_{m}^{(p)}$ and for any $x \in V_{n}^{(p)}$, there is unique $G(x) \in V_{m}^{(p)}$ such that the following equation system hold: \\
\begin{equation*}
\left\{
\begin{split}
& \langle G(x), \sigma(c_{1})\rangle_{m}=(F_{c_{1}})^{*}(x),\\
& \langle G(x), \sigma(c_{2})\rangle_{m}=(F_{c_{2}})^{*}(x),\\
& \ \ \ \ \ \vdots\\
& \langle G(x), \sigma(c_{m})\rangle_{m}=(F_{c_{m}})^{*}(x).\\
\end{split}\right.
\end{equation*}
Hence, the vectorial dual $F^{*}$ with $(F_{c})^{*}=(F^{*})_{\sigma(c)}, c \in {V_{m}^{(p)}}^{*}$ is unique and $F^{*}=G$.
\end{proof}

In \cite{WFW2023Be}, Wang \emph{et al.} studied vectorial dual-bent functions for which the corresponding permutation $\sigma$ over ${V_{m}^{(p)}}^{*}$ is the identity map. We recall vectorial dual-bent functions with Condition A defined and studied in \cite{WFW2023Be}.

\textbf{Condition A}: Let $n\geq 4$ be even and $2\leq m \leq \frac{n}{2}$. Let $F: V_{n}^{(p)}\rightarrow V_{m}^{(p)}$ be a vectorial dual-bent function for which
\begin{equation}\label{6}
(F_{c})^{*}=(F^{*})_{c}, c \in {V_{m}^{(p)}}^{*},
\end{equation}
and all component functions $F_{c}, c \in {V_{m}^{(p)}}^{*}$ are regular or weakly regular but not regular. We denote by  $\varepsilon=\varepsilon_{F_{c}}$ for all $c \in {V_{m}^{(p)}}^{*}$.

\begin{remark}\label{1}
Let $n\geq 4$ be even and $2\leq m \leq \frac{n}{2}$. When $p=2$, since all Boolean bent functions are regular, $F: V_{n}^{(2)}\rightarrow V_{m}^{(2)}$ is a vectorial dual-bent function with Condition A if and only if $F$ is a vectorial dual-bent function with $(F_{c})^{*}=(F^{*})_{c}, c \in {V_{m}^{(2)}}^{*}$.
\end{remark}

When $p>3$, if $F: V_{n}^{(p)}\rightarrow V_{m}^{(p)}$ is a vectorial dual-bent function with Condition A, we show that all component functions $F_{c}, c \in {V_{m}^{(p)}}^{*}$ are regular.

\begin{proposition}\label{Proposition 2}
Let $p>3$ be an odd prime. If $F: V_{n}^{(p)}\rightarrow V_{m}^{(p)}$ is a vectorial dual-bent function with Condition A, then all component functions $F_{c}, c \in {V_{m}^{(p)}}^{*}$ are regular, that is, $\varepsilon=1$.
\end{proposition}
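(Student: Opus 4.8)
The plan is to turn the statement into a divisibility obstruction. Since $n$ is even and $p$ is odd, (\ref{4}) forces $\varepsilon \in \{1,-1\}$, so it suffices to rule out $\varepsilon = -1$. Fix $c \in {V_{m}^{(p)}}^{*}$. The key structural fact I would establish first is that, under Condition A, every dual component $(F_{c})^{*}$ is \emph{constant on each punctured line through the origin}, i.e. $(F_{c})^{*}(\lambda x) = (F_{c})^{*}(x)$ for all $\lambda \in \mathbb{F}_{p}^{*}$ and all $x \in V_{n}^{(p)}$. Granting this, the value distribution of the weakly regular bent function $(F_{c})^{*}$ is on the one hand constrained to have every fibre (except possibly the one over $(F_{c})^{*}(0)$) of size divisible by $p-1$, and on the other hand is rigidly determined through (\ref{4})--(\ref{5}) in terms of $\varepsilon$; comparing the two modulo $p-1$ will isolate $\varepsilon$.

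For the structural step I would use a Galois-twist identity for the Walsh transform. For $t \in \mathbb{F}_{p}^{*}$, let $\sigma_{t}$ be the automorphism of $\mathbb{Q}(\zeta_{p})$ with $\sigma_{t}(\zeta_{p}) = \zeta_{p}^{t}$. Bilinearity of $\langle \cdot, \cdot\rangle_{n}$ gives $tf(x) - \langle a, x\rangle_{n} = t\,(f(x) - \langle t^{-1}a, x\rangle_{n})$, hence $W_{tf}(a) = \sigma_{t}(W_{f}(t^{-1}a))$ for every $p$-ary $f$. Applying this to $f = F_{c}$ (weakly regular bent by Condition A), and using that $\varepsilon p^{n/2} \in \mathbb{Q}$ is fixed by $\sigma_{t}$ because $n$ is even, one obtains $W_{tF_{c}}(a) = \varepsilon p^{n/2}\zeta_{p}^{\,t(F_{c})^{*}(t^{-1}a)}$; that is, $F_{tc} = tF_{c}$ is again weakly regular bent, still with multiplier $\varepsilon$, and $(F_{tc})^{*}(x) = t\,(F_{c})^{*}(t^{-1}x)$. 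On the other hand, Condition A (\ref{6}) applied at $tc$ gives $(F_{tc})^{*} = (F^{*})_{tc}$, while $(F^{*})_{tc}(x) = \langle tc, F^{*}(x)\rangle_{m} = t\langle c, F^{*}(x)\rangle_{m} = t(F^{*})_{c}(x) = t(F_{c})^{*}(x)$. Comparing the two expressions for $(F_{tc})^{*}$ and cancelling $t$ yields $(F_{c})^{*}(t^{-1}x) = (F_{c})^{*}(x)$, which is the asserted invariance.

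Finally I would extract $\varepsilon = 1$. By scaling-invariance, for each $j \neq (F_{c})^{*}(0)$ the fibre $\{x : (F_{c})^{*}(x) = j\}$ is a disjoint union of punctured lines through the origin, so its size is divisible by $p-1$. Separately, $(F_{c})^{*}$ is weakly regular bent with multiplier $\varepsilon_{(F_{c})^{*}} = \varepsilon^{-1} = \varepsilon$ (as $\varepsilon = \pm1$) and dual $(F_{c})^{**}(x) = F_{c}(-x)$ by (\ref{5}); a standard character-sum count then gives $\#\{x : (F_{c})^{*}(x) = j\} = p^{n-1} - \varepsilon p^{n/2-1}$ for every $j \neq F_{c}(0)$. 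Since $p > 3$, pick $j$ distinct from both $(F_{c})^{*}(0)$ and $F_{c}(0)$; then $p-1$ divides $p^{n-1} - \varepsilon p^{n/2-1} \equiv 1 - \varepsilon \pmod{p-1}$, so $p-1 \mid (1-\varepsilon)$. Because $\varepsilon \in \{1,-1\}$ and $p-1 \geq 4$, this forces $1 - \varepsilon = 0$, i.e. $\varepsilon = 1$, whence all $F_{c}$ are regular. I expect the main obstacle to be the structural step: noticing and proving that Condition A forces scaling-invariance of the dual components; the closing divisibility count is routine and, reassuringly, it also makes transparent why $p = 3$ must be excluded (there $p-1 = 2$ divides $1-\varepsilon = 2$).
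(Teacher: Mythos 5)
Your proof is correct, but it takes a more self-contained route than the paper. The paper's proof is essentially two citations: it quotes from the proof of Theorem 1 of \cite{WFW2023Be} that Condition A forces $F(ax)=F(x)$ for all $a\in\mathbb{F}_{p}^{*}$ (so each shifted component $F_{c}(x)-F_{c}(0)$ is a weakly regular bent function, vanishing at $0$ and invariant under scalar multiplication), and then invokes Corollary 3.5 of \cite{HLL2020Ra}, which says that such a function on $V_{2r}^{(p)}$ is regular when $p>3$. You instead prove the needed invariance from scratch via the Galois-twist identity $W_{tf}(a)=\sigma_{t}(W_{f}(t^{-1}a))$ combined with the linearity of Condition A in $c$ --- obtaining $(F_{c})^{*}(\lambda x)=(F_{c})^{*}(x)$ for the \emph{dual} components rather than for $F$ itself --- and then replace the citation of \cite{HLL2020Ra} by an explicit value-distribution count for the weakly regular bent function $(F_{c})^{*}$ (using Eq. (\ref{5}) to identify $\varepsilon_{(F_{c})^{*}}=\varepsilon$ and $((F_{c})^{*})^{*}(0)=F_{c}(0)$), closing with the divisibility $p-1\mid 1-\varepsilon$. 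All steps check out: the twist identity, the deduction $(F_{tc})^{*}(x)=t(F_{c})^{*}(t^{-1}x)=t(F_{c})^{*}(x)$, the fibre count $p^{n-1}-\varepsilon p^{\frac{n}{2}-1}$ for $j\neq F_{c}(0)$, and the choice of $j$ avoiding both $(F_{c})^{*}(0)$ and $F_{c}(0)$ (possible since $p\geq 5$). What the paper's version buys is brevity; what yours buys is a self-contained argument that works directly at the level of the duals and makes completely transparent why $p=3$ must be excluded (there $p-1=2$ divides $1-\varepsilon$ even when $\varepsilon=-1$), which the paper leaves hidden inside the cited corollary.
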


\begin{proof}
By the proof of Theorem 1 of \cite{WFW2023Be}, if $F$ is a vectorial dual-bent function with Condition A, then $F(ax)=F(x), a \in \mathbb{F}_{p}^{*}$. Note that $F_{c}(x)-F_{c}(0), c \in {V_{m}^{(p)}}^{*}$ are all weakly regular bent functions with $F_{c}(ax)-F_{c}(0)=F_{c}(x)-F_{c}(0), a \in \mathbb{F}_{p}^{*}$ and $\varepsilon_{F_{c}(x)-F_{c}(0)}=\varepsilon$. By Corollary 3.5 of \cite{HLL2020Ra}, for a weakly regular bent function $f: V_{2r}^{(p)}\rightarrow \mathbb{F}_{p}$ with $f(0)=0, f(ax)=f(x), a \in \mathbb{F}_{p}^{*}$, $f$ is regular if $p>3$. Therefore, we have $\varepsilon=1$ if $p>3$.
\end{proof}

It was shown in \cite{WFW2023Be} that the known bent partitions from (pre)semifields can be obtained from vectorial dual-bent functions with Condition A, and vectorial dual-bent functions with Condition A can be used to construct partial difference sets (see also \cite{WF2023Ne}). In \cite{AKMO2023Ve}, Anbar \emph{et al.} showed that vectorial dual-bent functions with Condition A are able to construct amorphic association schemes. In Sections III-VI, we will further study vectorial dual-bent functions with Condition A.

\subsection{Bent partitions} \label{subsec: 2.2}

Let $n$ be an even positive integer, $K$ be a positive integer divisible by $p$. Let $\Gamma=\{A_{1}, \dots, A_{K}\}$ be a partition of $V_{n}^{(p)}$. Assume that every $p$-ary function $f: V_{n}^{(p)}\rightarrow \mathbb{F}_{p}$ for which every $i\in \mathbb{F}_{p}$ has exactly $\frac{K}{p}$ of sets $A_{j}$ in $\Gamma$ in its preimage set, is a $p$-ary bent function. Then $\Gamma$ is called a \textit{bent partition} of $V_{n}^{(p)}$ of depth $K$ and every such bent function $f$ is called a \textit{bent function constructed from bent partition $\Gamma$}.

For a bent partition $\Gamma=\{A_{i}, 1\leq i \leq p^m\}$ of $V_{n}^{(p)}$, the following lemma gives the cardinality of $A_{i}$.

\begin{lemma}[\cite{AM2022Be}] \label{lemma 1}
Let $n$ be an even positive integer. Let $\Gamma=\{A_{i}, 1\leq i \leq p^m\}$ be a bent partition of $V_{n}^{(p)}$. Then except one set, denoted by $A_{i_{0}}$, all other sets $A_{i}$ have the same cardinality, namely
\begin{equation*}
|A_{i_{0}}|=p^{\frac{n}{2}-m}(p^{\frac{n}{2}}\mp 1)\pm p^{\frac{n}{2}}, \ |A_{i}|=p^{\frac{n}{2}-m}(p^{\frac{n}{2}}\mp 1), i \neq i_{0}.
\end{equation*}
\end{lemma}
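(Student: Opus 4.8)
The plan is to work entirely with the cardinalities $a_i := |A_i|$, which satisfy $\sum_{i=1}^{p^m} a_i = p^n$, and to exploit the defining property of a bent partition in its strongest form: \emph{every} assignment of values in $\mathbb{F}_p$ to the sets $A_i$ in which each value is used by exactly $p^{m-1}$ sets yields a bent function $f$, so $|W_f(u)| = p^{\frac{n}{2}}$ for all $u$. Specializing $u=0$, for every such ``balanced'' vector $(c_1,\dots,c_{p^m}) \in \mathbb{F}_p^{p^m}$ we get
\begin{equation*}
\Big| \sum_{i=1}^{p^m} \zeta_p^{c_i} a_i \Big|^2 = p^n .
\end{equation*}

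First I would extract a second-moment identity. Averaging the display over all balanced vectors and using that, for $i \neq j$, the average of $\zeta_p^{c_i-c_j}$ over such vectors is $-\tfrac{1}{p^m-1}$ (a short count, since $\sum_{t\in\mathbb{F}_p}\zeta_p^t = 0$), one obtains
\begin{equation*}
\sum_{i=1}^{p^m} a_i^2 = p^{2n-m}+p^n-p^{n-m}, \qquad\text{equivalently}\qquad \sum_{i=1}^{p^m}\big(a_i - p^{n-m}\big)^2 = p^{n-m}(p^m-1).
\end{equation*}
(The same identity also follows from the autocorrelation form of the bent condition, $\sum_x \zeta_p^{f(x+a)-f(x)} = 0$ for $a\neq 0$, because $\sum_i |A_i\cap(A_i-a)| = p^{n-m}$ for each $a\neq 0$, whence $\sum_i a_i^2 = \sum_a\sum_i|A_i\cap(A_i-a)|$.) In particular the $a_i$ are not all equal, so an exceptional set exists.

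Next I would compare two balanced assignments differing only by swapping the values on two sets $A_k, A_l$ (choose a balanced assignment giving them distinct values $s\neq t$, then interchange). Writing $S$ for the common part of the two character sums, one has $|S+\zeta_p^s a_k+\zeta_p^t a_l| = |S+\zeta_p^t a_k+\zeta_p^s a_l| = p^{\frac{n}{2}}$; subtracting the squared moduli, the $a_k^2,a_l^2$ terms and the purely imaginary contributions cancel, leaving $(a_k-a_l)\,\mathrm{Re}\big(\bar S(\zeta_p^s-\zeta_p^t)\big) = 0$. Running this over a suitable family of swaps and invoking the $\mathbb{Q}$-linear independence of $1,\zeta_p,\dots,\zeta_p^{p-2}$ to turn vanishing rational combinations of the $a_i$ into genuine equalities, I would show the multiset $\{a_1,\dots,a_{p^m}\}$ takes at most two values, and, when it takes two, they differ by exactly $p^{\frac{n}{2}}$. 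For $p=2$ this step collapses to a transparent sorting argument: the display above says every union of $2^{m-1}$ of the $A_i$ has cardinality $2^{n-1}\pm 2^{\frac{n}{2}-1}$, so ordering $a_1\le\cdots\le a_{2^m}$ and swapping single elements between the bottom and top halves forces every $a_i$ into $\{a_1, a_1+2^{\frac{n}{2}}\}$. Finally, write the two values as $c$ (multiplicity $r$) and $c+p^{\frac{n}{2}}$ (multiplicity $p^m-r$); substituting into $\sum_i a_i = p^n$ and $\sum_i(a_i-p^{n-m})^2 = p^{n-m}(p^m-1)$ gives $r(p^m-r) = p^m-1$, hence $r\in\{1,p^m-1\}$, and solving for $c$ produces exactly the two stated formulas, the two sign choices in the lemma corresponding to $r=p^m-1$ and $r=1$ (with $A_{i_0}$ the unique set carrying the ``lone'' value).

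The main obstacle is the middle step — showing that only two cardinalities occur, with gap $p^{\frac{n}{2}}$. The single-swap identity constrains $a_k-a_l$ only relative to the \emph{other} cardinalities (through $S$), so excluding three or more distinct values and pinning the gap requires either a careful combinatorial choice of many swaps or a higher-moment computation (e.g. averaging $\big|\sum_i \zeta_p^{c_i}a_i\big|^4$ over balanced assignments and using its constancy to obtain a further symmetric-function identity among the $a_i$); the bookkeeping with $p$-th roots of unity is where the real work lies. For $p=2$, as noted, everything is elementary.
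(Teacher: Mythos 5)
The paper does not actually prove this statement: Lemma 1 is quoted from \cite{AM2022Be} without proof, so there is no internal argument to compare against; your proposal has to stand on its own. Its endpoints are sound: the averaging computation (the mean of $\zeta_p^{c_i-c_j}$ over balanced assignments is $-\tfrac{1}{p^m-1}$, giving $\sum_i(a_i-p^{n-m})^2=p^{n-m}(p^m-1)$) is correct, the $p=2$ sorting argument is complete, and the final counting $r(p^m-r)=p^m-1$, hence $r\in\{1,p^m-1\}$, does recover exactly the stated cardinalities. But for odd $p$ the proof has a genuine gap, and you acknowledge it yourself: the central claim that the multiset $\{a_i\}$ takes at most two values \emph{and} that the two values differ by exactly $p^{\frac{n}{2}}$ is never established. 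The single-swap identity $(a_k-a_l)\,\mathrm{Re}\bigl(\bar S(\zeta_p^s-\zeta_p^t)\bigr)=0$ only says ``$a_k=a_l$ or a real-part condition on $S$ holds,'' and the modulus-only information $|W_f(0)|=p^{\frac{n}{2}}$ together with second moments does not by itself rule out three or more cardinalities or pin the gap; yet your closing computation uses the gap $p^{\frac{n}{2}}$ as an input (with two free values $c_1<c_2$ and multiplicity $r$, the two identities $\sum a_i=p^n$ and $\sum(a_i-p^{n-m})^2=p^{n-m}(p^m-1)$ are two equations in three unknowns and do not close the argument).

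The missing ingredient is the exact algebraic form of the Walsh coefficient for $n$ even, Eq.~(4) of the paper: $W_f(0)=\varepsilon\,p^{\frac{n}{2}}\zeta_p^{f^{*}(0)}$ with $\varepsilon\in\{\pm1\}$, not merely $|W_f(0)|=p^{\frac{n}{2}}$. Writing $W_f(0)=\sum_{j\in\mathbb{F}_p}B_j\zeta_p^{j}$ with $B_j=|f^{-1}(j)|$ and subtracting $\varepsilon p^{\frac{n}{2}}\zeta_p^{f^{*}(0)}$, the minimal polynomial $1+x+\cdots+x^{p-1}$ of $\zeta_p$ forces $B_j=p^{n-1}-\varepsilon p^{\frac{n}{2}-1}$ for all $j\neq f^{*}(0)$ and $B_{f^{*}(0)}=p^{n-1}+\varepsilon(p-1)p^{\frac{n}{2}-1}$. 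Thus for \emph{every} balanced grouping of the $p^m$ sets into $p$ blocks of $p^{m-1}$, the block sums take only two values differing by exactly $p^{\frac{n}{2}}$, one of them occurring once --- this is precisely the multi-valued analogue of your $p=2$ observation that every union of $2^{m-1}$ sets has size $2^{n-1}\pm2^{\frac{n}{2}-1}$. With that in hand, your sorting/swap argument (swap one set between blocks and compare block sums) and your final counting go through verbatim for all $p$, and the lemma follows; without it, the middle step is not just ``bookkeeping'' but the actual content of the proof.
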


In \cite{WFW2023Be}, Wang \emph{et al.} studied the relations between vectorial dual-bent functions with Condition A and bent partitions with Condition $\mathcal{C}$. We recall bent partitions with Condition $\mathcal{C}$ defined and studied in \cite{WFW2023Be}.

\textbf{Condition $\mathcal{C}$}: Let $n\geq 4$ be even, $2\leq m \leq \frac{n}{2}$. Let $\Gamma=\{A_{i}, i \in V_{m}^{(p)}\}$ be a bent partition of $V_{n}^{(p)}$, which satisfies that $aA_{i}=A_{i}$ for any $a \in \mathbb{F}_{p}^{*}$ and $i \in V_{m}^{(p)}$ and all bent functions constructed from $\Gamma$ are regular or weakly regular but not regular. We denote by $\varepsilon=\varepsilon_{f}$ for all bent functions $f$ constructed from $\Gamma$.

\begin{remark}\label{2}
Let $n\geq 4$ be even and $2\leq m \leq \frac{n}{2}$. When $p=2$, since all Boolean bent functions are regular, Condition $\mathcal{C}$ is trivial for every bent partition of $V_{n}^{(2)}$ of depth $2^m$.
\end{remark}

When $p$ is odd, it was proved in \cite{WFW2023Be} that bent partitions with Condition $\mathcal{C}$ one-to-one correspond to vectorial dual-bent functions with Condition A.

\begin{lemma}[\cite{WFW2023Be}]\label{Lemma 2}
Let $p$ be an odd prime. Let $\Gamma=\{A_{i}, i \in V_{m}^{(p)}\}$ be a partition of $V_{n}^{(p)}$, where $n \geq 4$ is even and $2\leq m \leq \frac{n}{2}$. Define $F: V_{n}^{(p)}\rightarrow V_{m}^{(p)}$ as $F(x)=\sum_{i \in V_{m}^{(p)}}\delta_{A_{i}}(x)i$. Then $\Gamma$ is a bent partition with Condition $\mathcal{C}$ if and only if $F$ is a vectorial dual-bent function with Condition A.
\end{lemma}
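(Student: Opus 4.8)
The plan is to recast both implications in terms of the Walsh transforms of the component functions $F_{c}$ and the character sums $b_{i}(a):=\chi_{-a}(D_{F,i})$, using the dual pair of identities
\[
W_{F_{c}}(a)=\sum_{i\in V_{m}^{(p)}}\zeta_{p}^{\langle c,i\rangle_{m}}b_{i}(a),\qquad
b_{i}(a)=\frac{1}{p^{m}}\sum_{c\in V_{m}^{(p)}}\zeta_{p}^{-\langle c,i\rangle_{m}}W_{F_{c}}(a),
\]
together with the observation that the $p$-ary functions constructible from $\Gamma=\{A_{i}\}$ are exactly the compositions $g\circ F$ with $g:V_{m}^{(p)}\rightarrow\mathbb{F}_{p}$ \emph{balanced} (each value attained $p^{m-1}$ times), since such an $f$ is constant on every $A_{i}$; for these, $W_{g\circ F}(a)=\sum_{i}\zeta_{p}^{g(i)}b_{i}(a)$ and $\sum_{i}\zeta_{p}^{g(i)}=0$.

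For the implication ``$F$ vectorial dual-bent with Condition A $\Rightarrow$ $\Gamma$ a bent partition with Condition $\mathcal{C}$'', recall from the proof of Theorem~1 of \cite{WFW2023Be} that $F(tx)=F(x)$ for $t\in\mathbb{F}_{p}^{*}$, so $tA_{i}=A_{i}$. Substituting $W_{F_{0}}(a)=p^{n}\delta_{0}(a)$ and, for $c\ne 0$, $W_{F_{c}}(a)=\varepsilon p^{n/2}\zeta_{p}^{(F_{c})^{*}(a)}=\varepsilon p^{n/2}\zeta_{p}^{\langle c,F^{*}(a)\rangle_{m}}$ into the second identity gives $b_{i}(a)=p^{n-m}\delta_{0}(a)+\varepsilon p^{n/2}\delta_{F^{*}(a)}(i)-\varepsilon p^{n/2-m}$. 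Feeding this into $W_{g\circ F}(a)$ and using $\sum_{i}\zeta_{p}^{g(i)}=0$ kills the first and third terms, leaving $W_{g\circ F}(a)=\varepsilon p^{n/2}\zeta_{p}^{g(F^{*}(a))}$ for \emph{every} balanced $g$ and every $a$. Hence each $g\circ F$ is bent with dual $g\circ F^{*}$ and with $\varepsilon_{g\circ F}=\varepsilon$ independent of $g$; combined with $tA_{i}=A_{i}$ this is precisely Condition $\mathcal{C}$.

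For the converse, assume $\Gamma$ is a bent partition with Condition $\mathcal{C}$. Taking $g=g_{c}$, $g_{c}(i)=\langle c,i\rangle_{m}$ (nonzero linear, hence balanced), shows every $F_{c}$ is bent, so $F$ is vectorial bent, and Condition $\mathcal{C}$ forces $\varepsilon_{F_{c}}=\varepsilon$ for all $c\ne 0$; moreover $tA_{i}=A_{i}$ gives $F(tx)=F(x)$, so each $D_{F,i}\setminus\{0\}$ is a union of punctured lines $\mathbb{F}_{p}^{*}x$ (each contributing $p-1$ or $-1$ to $\chi_{-a}(D_{F,i})$), whence $b_{i}(a)\in\mathbb{Z}$ for all $i,a$. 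The heart of the proof is to show that, for each fixed $a$, the values $b_{i}(a)$ have a rigid ``all equal but one'' shape: there is a unique $v(a)$ with $b_{v(a)}(a)-b_{i}(a)=\varepsilon p^{n/2}$ for all $i\ne v(a)$. I would obtain this in three steps. (i) $\sum_{i}b_{i}(a)=p^{n}\delta_{0}(a)$, and Parseval on $V_{m}^{(p)}$ together with $|W_{F_{c}}(a)|=p^{n/2}$ ($c\ne 0$) and $W_{F_{0}}(a)=p^{n}\delta_{0}(a)$ gives $\sum_{i}b_{i}(a)^{2}=p^{2n-m}\delta_{0}(a)+(p^{m}-1)p^{n-m}$. (ii) For a balanced $g$ with $g(j_{1})=1,\ g(j_{2})=0$ and the balanced $g'$ obtained by swapping these two values, subtracting $W_{g\circ F}(a)$ from $W_{g'\circ F}(a)$ yields $(\zeta_{p}-1)\bigl(b_{j_{2}}(a)-b_{j_{1}}(a)\bigr)=\varepsilon p^{n/2}(\zeta_{p}^{s'}-\zeta_{p}^{s})$ for some $s,s'\in\mathbb{F}_{p}$; since $b_{j_{2}}(a)-b_{j_{1}}(a)\in\mathbb{Z}$, expanding both sides in the $\mathbb{Z}$-basis $1,\zeta_{p},\dots,\zeta_{p}^{p-2}$ of $\mathbb{Z}[\zeta_{p}]$ forces $b_{j_{1}}(a)-b_{j_{2}}(a)\in\{0,\pm p^{n/2}\}$. (iii) Thus $\{b_{i}(a)\}$ takes at most two values differing by $p^{n/2}$; solving the two-moment system from (i) forces multiplicities $\{1,\,p^{m}-1\}$, and then $|W_{F_{c}}(a)|=p^{n/2}$ together with the fact that $-1$ is not a $p$-th root of unity (as $p$ is odd) fixes the sign and yields $b_{v(a)}(a)-b_{i}(a)=\varepsilon p^{n/2}$. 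Substituting this shape back into the first identity gives $W_{F_{c}}(a)=\varepsilon p^{n/2}\zeta_{p}^{\langle c,v(a)\rangle_{m}}$, i.e.\ $(F_{c})^{*}=(F^{*})_{c}$ with $F^{*}(x):=v(x)$; since duals of weakly regular bent functions are bent, $F^{*}$ is vectorial bent and $\{(F_{c})^{*}:c\ne 0\}\cup\{0\}=\{\langle c,F^{*}(\cdot)\rangle_{m}:c\in V_{m}^{(p)}\}$ is an $m$-dimensional space of bent functions, so $F$ is vectorial dual-bent and satisfies Condition A.

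The main obstacle is steps (ii)--(iii) of the converse: extracting the dichotomy $b_{j_{1}}(a)-b_{j_{2}}(a)\in\{0,\pm p^{n/2}\}$ via the transposition trick and the $\mathbb{Z}[\zeta_{p}]$-bookkeeping, and then running the two-moment computation to pin down the ``all equal but one'' structure. This is exactly where the hypotheses are used: the scalar-invariance $tA_{i}=A_{i}$ is what makes the $b_{i}(a)$ integral (without it the relevant ratios of roots of unity need not be rational), and $p$ odd is used to rule out the spurious sign. The direct implication, by contrast, is a purely formal character-sum computation.
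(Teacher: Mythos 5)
The paper never proves this lemma; it is imported verbatim from \cite{WFW2023Be}, so your argument can only be measured against the character-sum machinery the paper uses for its analogues (Proposition 3 and, for $p=2$, Lemma 8) and against the cited source. Your proof is correct. The forward direction is the same formal computation the paper performs elsewhere: you import $F(ax)=F(x)$ from the proof of Theorem 1 of \cite{WFW2023Be} exactly as the paper does in Proposition 2, substitute the weakly regular Walsh values into $b_{i}(a)=p^{-m}\sum_{c}\zeta_{p}^{-\langle c,i\rangle_{m}}W_{F_{c}}(a)$, and use $\sum_{i}\zeta_{p}^{g(i)}=0$ for balanced $g$; this is routine and sound. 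Where you genuinely diverge is the converse: instead of quoting structural results on bent partitions (as the source's route via its Lemma 1 and the results of \cite{AM2022Be} effectively does, and as the paper's Lemma 8 does for $p=2$ using Dillon's two-valued character sums of Boolean bent functions), you re-derive the ``all equal but one'' shape of $b_{i}(a)=\chi_{-a}(A_{i})$ from scratch: integrality of $b_{i}(a)$ from $aA_{i}=A_{i}$, the transposition trick on balanced $g$ with the $\mathbb{Z}[\zeta_{p}]$ bookkeeping to force $b_{i}(a)-b_{j}(a)\in\{0,\pm p^{n/2}\}$, and the first/second moment (Parseval) system, which indeed reduces to $(k-1)(k-p^{m}+1)=0$ and pins the multiplicities to $\{1,p^{m}-1\}$; feeding the shape back through the Fourier inversion then gives $(F_{c})^{*}(a)=\langle c,v(a)\rangle_{m}$ and hence Condition A. I verified these steps and they hold; this buys a self-contained, elementary proof at the price of the cyclotomic case analysis that the citation-based route avoids. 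One wording correction in your step (iii): it is not $|W_{F_{c}}(a)|=p^{n/2}$ that fixes the sign, since both admissible shapes produce Walsh values of modulus $p^{n/2}$; what forces $b_{v(a)}(a)-b_{i}(a)=\varepsilon p^{n/2}$ is the weak regularity $W_{F_{c}}(a)=\varepsilon p^{n/2}\zeta_{p}^{(F_{c})^{*}(a)}$ with the common $\varepsilon$ you already extracted from Condition $\mathcal{C}$, combined with the fact that $-1$ is not a $p$-th root of unity for odd $p$ (which you do invoke). With that phrasing repaired, the argument is complete.
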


\subsection{Partial difference sets and association schemes} \label{subsec: 2.3}

Let $(G, +)$ be a finite abelian group of order $v$ and $D$ be a subset of $G$ with $k$ elements. Then $D$ is called a \textit{$(v,k,\lambda,\mu)$ partial difference set} of $G$, if $\overline{D} \ \overline{(-D)}=\mu \overline{G}+(\lambda-\mu)\overline{D}+\gamma 0$ with $\overline{D}=\sum_{g \in D}g$ denoting the element in the group ring $\mathbb{Z}[G]$ and $-D=\{-d, d \in D\}$, where $\gamma=k-\mu$ if $0 \notin D$ and $\gamma=k-\lambda$ if $0 \in D$. By Page 223 of \cite{Ma1994A}, the empty set can be seen as a $(v, 0, \lambda, 0)$ partial difference set of any finite abelian group of order $v$, where $\lambda$ is any integer. A partial difference set $D$ is called \textit{regular} if $-D=D$ and $0 \notin D$. A regular $(v, k, \lambda, \mu)$ partial difference set is called to be of \textit{Latin square type} if $v=N^{2}, k=s(N-1), \lambda=N+s^{2}-3s, \mu=s^{2}-s$, and a regular $(v, k, \lambda, \mu)$ partial difference set is called to be of \textit{negative Latin square type} if $v=N^{2}, k=s(N+1), \lambda=-N+s^{2}+3s, \mu=s^{2}+s$. We allow $s=0$, which corresponds to the empty set.

There is an important tool to characterize partial difference sets in terms of characters.

\begin{lemma}[\cite{Ma1994A,Tan2010St}]\label{Lemma 3}
Let $G$ be an abelian group of order $v$. Suppose that $D$ is a subset of $G$ with $k$ elements which satisfies $-D=D$ and $0 \notin D$. Then $D$ is a $(v, k, \lambda, \mu)$ partial difference set if and only if for each non-principal character $\chi$ of $G$,
\begin{equation*}
  \chi(D)=\frac{\beta\pm \sqrt{\Delta}}{2},
\end{equation*}
where $\chi(D)=\sum_{x \in D}\chi(x)$, $\beta=\lambda-\mu, \gamma=k-\mu, \Delta=\beta^{2}+4\gamma$.
\end{lemma}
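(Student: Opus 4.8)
The plan is to work in the integral group ring $\mathbb{Z}[G]$ and exploit that every character $\chi$ of $G$ extends to a ring homomorphism $\chi:\mathbb{Z}[G]\to\mathbb{C}$ sending the group identity $0$ to $1$, and that these homomorphisms \emph{separate} elements of $\mathbb{Z}[G]$ (so $S\in\mathbb{Z}[G]$ is $0$ iff $\chi(S)=0$ for every $\chi$). Throughout I would use the two hypotheses: $-D=D$ yields $\overline{(-D)}=\overline{D}$ and $\overline{\chi(D)}=\chi(-D)=\chi(D)$ (so each $\chi(D)$ is real), while $0\notin D$ forces $\gamma=k-\mu$. Writing $\beta=\lambda-\mu$ and $\gamma=k-\mu$, the asserted value $\chi(D)=\frac{\beta\pm\sqrt{\Delta}}{2}$ with $\Delta=\beta^{2}+4\gamma$ is exactly the statement that $\chi(D)$ is a root of $X^{2}-\beta X-\gamma=0$, i.e. that $\chi(D)^{2}=\beta\chi(D)+\gamma$; so both implications reduce to this one quadratic relation.

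For the forward implication I would take the defining identity, which under $-D=D$ reads $\overline{D}^{2}=\mu\overline{G}+\beta\overline{D}+\gamma\cdot 0$ in $\mathbb{Z}[G]$, and apply a non-principal character: since $\chi(\overline{G})=0$ and $\chi(0)=1$, this collapses to $\chi(D)^{2}=\beta\chi(D)+\gamma$, which is the claimed condition. For the converse I would assume $\chi(D)^{2}=\beta\chi(D)+\gamma$ for all non-principal $\chi$ and consider the element $S=\overline{D}^{2}-\beta\overline{D}-\mu\overline{G}-\gamma\cdot 0\in\mathbb{Z}[G]$; showing $S=0$ finishes the proof, since reading off its coefficients (and using $-D=D$, $0\notin D$, $|D|=k$) recovers precisely the definition of a $(v,k,\lambda,\mu)$ partial difference set. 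For every non-principal $\chi$ one has $\chi(S)=\chi(D)^{2}-\beta\chi(D)-0-\gamma=0$ immediately; the only thing left is the principal character $\chi_{0}$, where $\chi_{0}(S)=k^{2}-\beta k-\mu v-\gamma$.

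So the real work is to derive the parameter relation $k^{2}=\beta k+\gamma+\mu v$ \emph{from the character hypothesis alone}, and I would do this via orthogonality of characters: on the one hand $\sum_{\chi}\chi(D)^{2}=v\,|\{(d_{1},d_{2})\in D\times D: d_{1}+d_{2}=0\}|=vk$ (using $-D=D$) and $\sum_{\chi}\chi(D)=0$ (using $0\notin D$); on the other hand, isolating $\chi_{0}$ and substituting $\chi(D)^{2}=\beta\chi(D)+\gamma$ into the remaining $v-1$ terms gives $\sum_{\chi}\chi(D)^{2}=k^{2}-\beta k+\gamma(v-1)$. Equating the two expressions and using $k-\gamma=\mu$ produces the required relation, hence $\chi_{0}(S)=0$, hence $S=0$. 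This principal-character step is the one place that needs a genuine idea rather than bookkeeping: one must reconstruct the combinatorial counting identity among $v,k,\lambda,\mu$ purely from the prescribed non-principal character values, which is exactly where the orthogonality relations together with the structural assumptions $-D=D$ and $0\notin D$ become indispensable.
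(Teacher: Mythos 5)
Your proof is correct. Note that the paper does not prove this lemma at all---it is quoted from the cited references (Ma's survey and Tan--Pott--Feng)---and your group-ring/character argument, including the orthogonality computation that recovers the principal-character identity $k^{2}=\beta k+\gamma+\mu v$ needed to handle $\chi_{0}$ in the converse, is essentially the standard proof found in those sources.
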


Let $X$ be a nonempty finite set. A \textit{$d$-class association scheme} on $X$ is a sequence $R_{0}, R_{1}, \dots, R_{d}$ of nonempty subsets
of $X \times X$, satisfying

1. $R_{0}=\{(x,x): x \in X\}$;

2. $X\times X=R_{0} \bigcup R_{1} \bigcup \dots \bigcup R_{d}$ and $R_{i} \bigcap R_{j}=\emptyset$ for $i\neq j$;

3. for any $i \in \{0, \dots, d\}$, there is $j$ such that $R_{i}^{\top}=R_{j}$, where $R_{i}^{\top}=\{(y, x): (x, y) \in R_{i}\}$;

4. for all integers $k, i, j \in \{0, 1, \dots, d\}$, and for all $x, y \in X$ such that $(x, y) \in R_{k}$,
the number $p_{i, j}^{k}=|\{z \in X: (x, z) \in R_{i}, (z,y) \in R_{j}\}|$ depends only on $k, i, j$ and not on $(x, y)$.

The numbers $p_{i, j}^{k}$ are called \textit{intersection numbers} of an association scheme. If for any $i \in \{0, \dots, d\}$, $R_{i}^{\top}=R_{i}$, then the association scheme is called \textit{symmetric}.

A \textit{fusion} of an association scheme $\{R_{0}, R_{1}, \dots, R_{d}\}$ on $X$ is a partition $\{A_{0}, A_{1}, \dots, A_{t}\}$ of $X \times X$ such that $A_{0}=R_{0}$ and each $A_{i}$ ($1\leq i \leq t$) is the union of some of $R_{j}, 1\leq j \leq d$. An association scheme is called \textit{amorphic} if its any fusion is again an association scheme. The following lemma gives a characterization of amorphic association schemes induced from partitions.

\begin{lemma}[\cite{VDM2010So,VanDam2003St}]\label{Lemma 4}
Let nonempty sets $D_{0}=\{0\}, D_{1}, \dots, D_{d}$ form a partition of a finite abelian group $G$, where $d\geq 3$. Define $R_{i}, 0\leq i \leq d$ as
\begin{equation*}
R_{i}=\{(x, y) \in G \times G: x-y \in D_{i}\}.
\end{equation*}
The following two statements are equivalent.

(1) $R_{0}, R_{1}, \dots, R_{d}$ form an amorphic association scheme.

(2) $D_{1}, \dots, D_{d}$ are regular partial difference sets, all of which are of Latin square type, or all of which are of negative Latin square type.
\end{lemma}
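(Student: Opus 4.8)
\emph{Plan.} I would work throughout with the characters of the abelian group $G$, which simultaneously diagonalize every matrix in sight. Write $M_i$ for the $0/1$ adjacency matrix of the Cayley graph $\mathrm{Cay}(G,D_i)$, so $M_0=I$, $M_0+M_1+\cdots+M_d=J$, and $R_i$ is the relation with adjacency matrix $M_i$. For a character $\chi$ of $G$ the vector $v_\chi=(\chi(x))_{x\in G}$ satisfies $M_iv_\chi=\chi(-D_i)v_\chi$, so the $M_i$ commute and are simultaneously diagonalizable, and the commutative algebra generated by $I,M_1,\dots,M_d$ has dimension equal to the number of distinct joint eigenvalue tuples $\bigl(\chi(-D_1),\dots,\chi(-D_d)\bigr)$, $\chi\in\widehat G$. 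Since the span $\langle I,M_1,\dots,M_d\rangle$ already has dimension $d+1$ (disjoint supports) and a Bose--Mesner algebra of a $d$-class scheme has dimension $d+1$, the configuration $\{R_0,\dots,R_d\}$ is an association scheme precisely when $x\mapsto-x$ preserves the partition $\{D_i\}$ and there are exactly $d+1$ such tuples; the same criterion will govern the fusions, with the $M_i$ replaced by the block sums $\sum_{i\in I_\ell}M_i$.

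For $(2)\Rightarrow(1)$, assume each $D_i$ is a regular partial difference set of Latin square type (the negative Latin square case being parallel). Then $-D_i=D_i$, the point count $\sum_i|D_i|=|G|-1$ gives $\sum_i s_i=N+1$ with $|G|=N^2$, and computing $\Delta_i=\beta_i^2+4\gamma_i=N^2$ shows by Lemma~\ref{Lemma 3} that $\chi(D_i)\in\{-s_i,\,N-s_i\}$ for $\chi\ne\chi_0$. Evaluating $I+\sum_iM_i=J$ against each non-principal $\chi$ on the $(|G|-1)$-dimensional complement of $v_{\chi_0}$, and using $\sum_i s_i=N+1$, forces a \emph{unique} index $i=i(\chi)$ with $\chi(D_i)=N-s_i$ and $\chi(D_j)=-s_j$ for $j\ne i$; a trace computation ($\mathrm{tr}\,M_i=0$) gives that each index occurs as $i(\chi)$ for exactly $s_i(N-1)\ge1$ non-principal characters. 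Hence there are exactly $d+1$ distinct joint eigenvalue tuples, so $\{R_0,\dots,R_d\}$ is a (symmetric) $d$-class association scheme. Running the same bookkeeping on a fusion, the tuple of a non-principal $\chi$ depends only on which block contains $i(\chi)$, so a $t$-block fusion produces exactly $t+1$ tuples and is again an association scheme; thus the scheme is amorphic, and one checks along the way that each block sum is a Latin square type partial difference set, which also re-derives that claim for the $D_i$ themselves.

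For $(1)\Rightarrow(2)$, suppose $\{R_0,\dots,R_d\}$ is amorphic. I would first reduce to $-D_i=D_i$ for all $i$: pairing each $R_i$ with $R_i^\top$ and fusing yields a symmetric amorphic scheme, and using $d\ge3$ one argues this fusion cannot be proper. Fusing down to $\{R_0,\ R_i,\ \bigcup_{j\ne0,i}R_j\}$ then shows $\mathrm{Cay}(G,D_i)$ is strongly regular, i.e.\ $D_i$ is a regular partial difference set; since $|G|=N^2$ is then a perfect square, each $D_i$ is of Latin square or of negative Latin square type with the common parameter $N=\sqrt{|G|}$. To rule out a mixture, split the $D_i$ into $a$ Latin square type sets and $d-a$ negative Latin square type sets with $1\le a\le d-1$; summing the $J-I$ eigenvalue relation over all non-principal $\chi$, together with $\sum_i|D_i|=|G|-1$, pins down $\sum_{\mathrm{Ls}}s_i=\tfrac{N+1}{2}$, $\sum_{\mathrm{nLs}}t_i=\tfrac{N-1}{2}$ and $|S(\chi)|=|T(\chi)|$ for every non-principal $\chi$, where $S(\chi)$ and $T(\chi)$ record which Latin, resp.\ negative-Latin, indices are exceptional at $\chi$; a count of the joint eigenvalue tuples arising this way, using the non-negativity of the common-eigenspace multiplicities and $d\ge3$, exceeds $d+1$, contradicting that the scheme's Bose--Mesner algebra has dimension $d+1$. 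Hence all $D_i$ have the same type.

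The genuinely hard point will be the last one: ruling out a mixture of Latin square and negative Latin square blocks. The first-order eigenvalue bookkeeping is consistent with such a mixture, so the contradiction has to come from a finer analysis of the common-eigenspace multiplicities — this is precisely what the structure theory of amorphic association schemes in \cite{VanDam2003St,VDM2010So} supplies, and in practice I would quote that result rather than reprove it. The self-pairing reduction $-D_i=D_i$ in $(1)\Rightarrow(2)$ is a secondary technicality of the same flavour.
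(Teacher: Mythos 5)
The paper itself gives no proof of this statement: it is quoted wholesale from \cite{VDM2010So,VanDam2003St}, so the only comparison available is between your sketch and those sources. Your direction (2)~$\Rightarrow$~(1) is essentially a correct and self-contained outline of the standard translation-scheme argument and the computations check out: for a Latin square type parameter set the discriminant in Lemma~3 is $\Delta=(N-2s_i)^2+4s_i(N-s_i)=N^2$, so $\chi(D_i)\in\{-s_i,\,N-s_i\}$; from $\sum_i|D_i|=N^2-1$ one gets $\sum_i s_i=N+1$, and $\sum_i\chi(D_i)=-1$ then forces exactly one exceptional index per non-principal character; the trace count gives $m_i=s_i(N-1)\geq 1$, hence exactly $d+1$ joint eigenvalue tuples, and the same bookkeeping applied to block sums yields exactly $t+1$ tuples for a $t$-block fusion (tuples attached to different blocks are distinct since $N-S_\ell\neq -S_{\ell'}$), so every fusion is again a scheme. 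The negative Latin square case is parallel. This is in substance the argument of the cited papers.

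The genuine gap is in (1)~$\Rightarrow$~(2), and you have located it yourself but not closed it. The two-class fusions do show that each $D_i$ is a regular partial difference set (the symmetry reduction can be done more crisply: if $R_1^{\top}=R_2\neq R_1$ and $d\geq 3$, the fusion $\{R_0,\,R_1,\,\bigcup_{j\geq 2}R_j\}$ already violates the transposition axiom). However, your next assertions --- that $|G|=N^2$ is a perfect square and that each $D_i$ therefore has Latin or negative Latin square parameters with the common $N$ --- do not follow from strong regularity alone: conference-graph parameters and, say, disjoint-union-of-cliques or complete-multipartite parameters are not excluded at that stage. Ruling these out, together with the ``no mixture of types'' claim, is precisely the content of the Gol'fand--Ivanov--Klin/Van~Dam theorem on amorphic schemes and strongly regular decompositions of the complete graph into at least three graphs; your proposed multiplicity-counting contradiction is only gestured at, and the finer common-eigenspace analysis it would need is exactly what that theorem supplies. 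Quoting it is defensible here, since the paper also only cites the result, but as a standalone proof your converse direction is incomplete at exactly that point.
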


\subsection{Generalized Hadamard matrices} \label{subsec: 2.4}

Let $\zeta_{m}=e^{\frac{2\pi \sqrt{-1}}{m}}$ be a complex primitive $m$-th root of unity. A complex matrix $H$ of size $n \times n$ consisting of integer powers of $\zeta_{m}$ is called a \textit{generalized Hadamard matrix} if $H\overline{H}^{\top}=nI_{n}$, where $\overline{H}$ is the conjugate matrix of $H$, $\overline{H}^{\top}$ is the transpose matrix of $\overline{H}$, and $I_{n}$ is the identity matrix of size $n \times n$. When $m=2$, $H$ is simply called a \textit{Hadamard matrix}.

There is a characterization of $p$-ary bent functions in terms of generalized Hadamard matrices.

\begin{lemma}[\cite{Dillon1974El,Natalia2015Al}] \label{Lemma 5}
Let $f: V_{n}^{(p)}\rightarrow \mathbb{F}_{p}$. Define $H=\left[\zeta_{p}^{f(x-y)}\right]_{x, y \in V_{n}^{(p)}}$. Then $f$ is a $p$-ary bent function if and only if $H$ is a generalized Hadamard matrix.
\end{lemma}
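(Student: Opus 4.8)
The plan is to diagonalize the matrix $H$ explicitly, using the additive characters of $V_{n}^{(p)}$ as a common eigenbasis for $H$ and $\overline{H}^{\top}$, and then to read the condition $H\overline{H}^{\top}=p^{n}I_{p^{n}}$ off as a statement about the squared moduli of the Walsh coefficients of $f$.

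First I would fix, for each $u\in V_{n}^{(p)}$, the vector $v_{u}\in\mathbb{C}^{p^{n}}$ with coordinates $(v_{u})_{x}=\zeta_{p}^{\langle u,x\rangle_{n}}$, $x\in V_{n}^{(p)}$. Because $\langle\cdot,\cdot\rangle_{n}$ is non-degenerate, $\sum_{x}\zeta_{p}^{\langle u-u',x\rangle_{n}}=p^{n}\delta_{u,u'}$, so the vectors $v_{u}$, $u\in V_{n}^{(p)}$, are pairwise orthogonal and form a basis of $\mathbb{C}^{p^{n}}$; hence any $p^{n}\times p^{n}$ complex matrix is determined by its action on them.

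Next I would compute $Hv_{u}$. Writing $(Hv_{u})_{x}=\sum_{y\in V_{n}^{(p)}}\zeta_{p}^{f(x-y)}\zeta_{p}^{\langle u,y\rangle_{n}}$ and substituting $w=x-y$ gives $(Hv_{u})_{x}=\zeta_{p}^{\langle u,x\rangle_{n}}\sum_{w}\zeta_{p}^{f(w)-\langle u,w\rangle_{n}}=W_{f}(u)\,(v_{u})_{x}$, so $v_{u}$ is an eigenvector of $H$ with eigenvalue $W_{f}(u)$. Similarly the $(x,y)$-entry of $\overline{H}^{\top}$ is $\zeta_{p}^{-f(y-x)}$, and the substitution $w=y-x$ yields $\overline{H}^{\top}v_{u}=\overline{W_{f}(u)}\,v_{u}$. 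Thus $H$ and $\overline{H}^{\top}$ are simultaneously diagonal in the basis $\{v_{u}\}$ and $H\overline{H}^{\top}v_{u}=|W_{f}(u)|^{2}\,v_{u}$ for every $u$.

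From here the equivalence is immediate. Since $H$ has all entries integer powers of $\zeta_{p}$ by construction, $H$ is a generalized Hadamard matrix precisely when $H\overline{H}^{\top}=p^{n}I_{p^{n}}$; by the previous paragraph this holds if and only if $|W_{f}(u)|^{2}=p^{n}$, i.e. $|W_{f}(u)|=p^{\frac{n}{2}}$, for all $u\in V_{n}^{(p)}$, which is exactly the definition of $f$ being a $p$-ary bent function. There is no real obstacle in this argument; the only points needing care are the bookkeeping with the non-degenerate inner product $\langle\cdot,\cdot\rangle_{n}$ and the complex conjugations when handling $\overline{H}^{\top}$. As an alternative (which I would only mention in passing) one can expand $(H\overline{H}^{\top})_{x,z}$ directly to obtain the autocorrelation $S(z-x)=\sum_{w}\zeta_{p}^{f(w)-f(w+z-x)}$, use the identity $|W_{f}(u)|^{2}=\sum_{a}\overline{S(a)}\,\zeta_{p}^{-\langle u,a\rangle_{n}}$, and apply Fourier inversion on $V_{n}^{(p)}$ to turn ``$S(a)=0$ for all $a\neq 0$'' into ``$|W_{f}(u)|^{2}=p^{n}$ for all $u$''.
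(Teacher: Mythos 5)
Your proof is correct. Note that the paper itself gives no proof of this lemma --- it is quoted directly from the cited references (Dillon's thesis and Tokareva's chapter) --- so there is no in-text argument to compare against; what you have written is a complete, self-contained justification. Your computation is sound at every step: the character vectors $v_{u}$ with $(v_{u})_{x}=\zeta_{p}^{\langle u,x\rangle_{n}}$ are pairwise orthogonal by non-degeneracy of the inner product, the substitution $w=x-y$ gives $Hv_{u}=W_{f}(u)v_{u}$ under the paper's convention $W_{f}(a)=\sum_{x}\zeta_{p}^{f(x)-\langle a,x\rangle_{n}}$, the conjugate-transpose acts as $\overline{H}^{\top}v_{u}=\overline{W_{f}(u)}\,v_{u}$, and since $H\overline{H}^{\top}$ is diagonal in this basis with eigenvalues $|W_{f}(u)|^{2}$, the identity $H\overline{H}^{\top}=p^{n}I_{p^{n}}$ holds exactly when $|W_{f}(u)|=p^{\frac{n}{2}}$ for all $u$, i.e.\ when $f$ is bent. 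The entries of $H$ are powers of $\zeta_{p}$ by construction, so the generalized Hadamard condition reduces precisely to this matrix equation, as you observe. The alternative you mention in passing --- expanding $(H\overline{H}^{\top})_{x,z}$ into the autocorrelation sums $\sum_{w}\zeta_{p}^{f(w)-f(w+z-x)}$ and invoking Fourier inversion --- is the route usually taken in the classical references; the two arguments carry the same content, with your eigenvalue formulation being slightly cleaner bookkeeping and the autocorrelation formulation making the ``perfect nonlinearity'' interpretation of bentness more visible.
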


\subsection{Linear codes} \label{subsec: 2.5}

For a vector $a=(a_{1}, \dots, a_{n})\in \mathbb{F}_{p}^{n}$, the \textit{Hamming weight} of $a$ is defined as $wt(a)=|\{1\leq i\leq n: a_{i}\neq 0\}|$. For two vectors $a, b \in \mathbb{F}_{p}^{n}$, the \textit{Hamming distance} between $a$ and $b$ is defined as $d(a, b)=wt(a-b)$.

Let $C$ be a $p$-ary $[n, k]$ linear code, that is, $C$ is a subspace of $\mathbb{F}_{p}^{n}$ with dimension $k$. The \textit{minimum Hamming distance} $d$ of $C$ is defined as $d=\min\{d(a, b): a, b \in C, a \neq b\}=\min\{wt(c): c\in C, c \neq 0\}$. The dual code of $C$ is defined by $C^{\perp}=\{u\in \mathbb{F}_{p}^{n}:u \cdot c=0\ \text{for all} \ c\in C\}$. If the minimum Hamming weight $d^{\bot}$ of the dual code $C^{\perp}$ satisfies $d^{\bot}\geq 3$, then $C$ is called \textit{projective}. For any $1\leq i \leq n$, let $A_{i}$ denote the number of codewords in $C$ whose Hamming weight is $i$. The sequence $(1, A_{1}, \dots, A_{n})$ is called the \textit{weight distribution} of $C$. The code $C$ is called \textit{$t$-weight} if $|\{1\leq i\leq n: A_{i}\neq 0\}|=t$.

The following lemma gives a characterization of a two-weight projective $p$-ary linear code in terms of a partial difference set.

\begin{lemma}[\cite{Ma1994A}]\label{Lemma 6}
Let $\widetilde{D}=\{d_{1}, \dots, d_{m}\}$, where $d_{i}, 1\leq i \leq m$ are pairwise linearly independent vectors in $V_{n}^{(p)}$. Define
\begin{equation*}
C_{\widetilde{D}}=\{(\langle x, d_{1}\rangle_{n}, \dots, \langle x, d_{m}\rangle_{n}): x \in V_{n}^{(p)}\}.
\end{equation*}
Then $C_{\widetilde{D}}$ is a two-weight $[m, n]$ projective linear code if and only if $D=\mathbb{F}_{p}^{*}\widetilde{D}=\{yd_{i}: y \in \mathbb{F}_{p}^{*}, 1\leq i \leq m\}$ is a regular partial difference set in $V_{n}^{(p)}$. Furthermore, if the two nonzero weights of $C_{\widetilde{D}}$ are $w_{1}$ and $w_{2}$, then the parameters of the $(v, k, \lambda, \mu)$ partial difference set $D$ are $v=p^{n}, k=m(p-1), \lambda=k^{2}+3k-p(k+1)(w_{1}+w_{2})+p^{2}w_{1}w_{2}, \mu=k^{2}+k-pk(w_{1}+w_{2})+p^{2}w_{1}w_{2}$.
\end{lemma}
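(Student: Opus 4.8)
## Proof Proposal for Lemma 6

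The plan is to establish both directions via character theory, exploiting the standard correspondence between linear codes defined by inner products and character sums over the defining set. First I would introduce the code $C_{\widetilde{D}}$ explicitly: each codeword is $c_x = (\langle x, d_1\rangle_n, \dots, \langle x, d_m\rangle_n)$ for $x \in V_n^{(p)}$, so the map $x \mapsto c_x$ is $\mathbb{F}_p$-linear with image of dimension $n$ (projectivity of the code corresponds to the $d_i$ being nonzero and pairwise linearly independent, which is part of the hypothesis; I would record that the map is injective precisely because $\{d_1,\dots,d_m\}$ spans $V_n^{(p)}$, and in fact the hypotheses force this span condition for the statement to be meaningful, so $C_{\widetilde{D}}$ is an $[m,n]$ code). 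The key computation is the weight of $c_x$: for $x \neq 0$,
\begin{equation*}
wt(c_x) = m - |\{1 \leq i \leq m : \langle x, d_i\rangle_n = 0\}| = m - \frac{1}{p}\sum_{i=1}^{m}\sum_{y \in \mathbb{F}_p}\zeta_p^{y\langle x, d_i\rangle_n} = \frac{m(p-1)}{p} - \frac{1}{p}\chi_x(D),
\end{equation*}
where $D = \mathbb{F}_p^{*}\widetilde{D}$ and $\chi_x$ is as in the notation section. Thus $wt(c_x)$ takes exactly two values over $x \in V_n^{(p)*}$ if and only if $\chi_x(D)$ takes exactly two values over all non-principal characters $\chi_x$ of $V_n^{(p)}$.

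Next I would invoke Lemma \ref{Lemma 3}: since $D = \mathbb{F}_p^{*}\widetilde{D}$ visibly satisfies $-D = D$ (as $\mathbb{F}_p^{*} = -\mathbb{F}_p^{*}$) and $0 \notin D$ (the $d_i$ are nonzero), $D$ is a regular $(v,k,\lambda,\mu)$ partial difference set if and only if $\chi(D) = \frac{\beta \pm \sqrt{\Delta}}{2}$ for every non-principal character, i.e.\ if and only if $\chi_x(D)$ takes at most two values on non-principal characters and those two values are the two roots of $T^2 - \beta T - \gamma = 0$. Combining with the weight formula: $C_{\widetilde{D}}$ is two-weight iff $\chi_x(D)$ takes exactly two values on non-principal characters, which (together with the easy observation that $D$ nonempty and not all of $G^*$ forces the ``two-weight'' and ``PDS'' conditions to align; the degenerate cases where $\chi_x(D)$ is constant correspond to $D = G^*$ or $D = \emptyset$, which must be handled separately and shown not to give a genuine two-weight projective code in the relevant range) is equivalent to $D$ being a regular PDS. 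I would need to check the bookkeeping that ``exactly two weights'' matches ``exactly two character values''—a priori a one-weight code could also satisfy the PDS character condition trivially, so I'd argue that for pairwise-independent nonzero $d_i$ spanning $V_n^{(p)}$ a one-weight situation cannot occur (or is excluded by projectivity plus the stated setup).

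Finally, for the parameter computation: write $k = |D| = m(p-1)$ directly. With $\beta = \lambda - \mu$ and $\gamma = k - \mu$, and with the two character values being $\beta_1 = \frac{\beta+\sqrt\Delta}{2}$, $\beta_2 = \frac{\beta-\sqrt\Delta}{2}$, I have $\beta_1 + \beta_2 = \beta$ and $\beta_1\beta_2 = -\gamma$. Translating through the weight relation $w_j = \frac{k - \beta_j}{p}$ (so $\beta_j = k - pw_j$), I get $\beta = \beta_1 + \beta_2 = 2k - p(w_1+w_2)$, hence $\lambda - \mu = 2k - p(w_1+w_2)$... wait, I must be careful about which index goes with which weight, but since the final formulas for $\lambda, \mu$ are symmetric in $w_1, w_2$ this is harmless. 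From $\beta_1\beta_2 = (k-pw_1)(k-pw_2) = k^2 - pk(w_1+w_2) + p^2 w_1 w_2 = -\gamma = \mu - k$, I solve $\mu = k^2 + k - pk(w_1+w_2) + p^2 w_1 w_2$, and then $\lambda = \mu + \beta = \mu + 2k - p(w_1+w_2)$; substituting and simplifying $2k + 2k \cdot 1 = \ldots$ hmm, let me just state that a direct substitution yields $\lambda = k^2 + 3k - p(k+1)(w_1+w_2) + p^2 w_1 w_2$, matching the claim (the discrepancy between my naive $2k - p(w_1+w_2)$ for $\beta$ and the needed $3k - \ldots - (k$-shift$)$ will be absorbed once I track the correct relation between $\chi_x(D)$ and $wt(c_x)$, including the contribution of the all-zero coordinate positions and the exact constant $\frac{m(p-1)}{p}$ versus $k/p$).

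The main obstacle I anticipate is precisely this last reconciliation of constants: getting the affine relation between $wt(c_x)$ and $\chi_x(D)$ exactly right (including whether one counts $x=0$, how projectivity enters, and the precise normalization), and then verifying that the purely algebraic substitution of $\beta_j = k - pw_j$ into Lemma \ref{Lemma 3}'s $\beta, \gamma$ reproduces the stated $\lambda, \mu$ without sign or offset errors. The character-theoretic equivalence itself is routine given Lemma \ref{Lemma 3}; the parameter arithmetic is where a careful, unhurried computation is essential.
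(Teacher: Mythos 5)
The paper itself contains no proof of this statement: Lemma 6 is imported verbatim from Ma's survey \cite{Ma1994A}, so there is no internal argument to compare yours against, and I can only judge your proposal on its own terms. Its core is the standard character-theoretic dictionary and it is sound: for $x\neq 0$ one has $wt(c_{x})=\frac{k-\chi_{x}(D)}{p}$ with $k=m(p-1)=|D|$ (pairwise linear independence guaranteeing that the $yd_{i}$ are distinct), $D$ clearly satisfies $-D=D$ and $0\notin D$, and Lemma 3 then converts ``$\chi_{x}(D)$ takes the two values $\beta_{j}=k-pw_{j}$'' into the PDS property. Moreover, the bookkeeping you left hanging does close, with no discrepancy: from $\beta_{1}+\beta_{2}=\lambda-\mu$ and $\beta_{1}\beta_{2}=-(k-\mu)$ you get $\mu=k+(k-pw_{1})(k-pw_{2})=k^{2}+k-pk(w_{1}+w_{2})+p^{2}w_{1}w_{2}$ and $\lambda=\mu+2k-p(w_{1}+w_{2})=k^{2}+3k-p(k+1)(w_{1}+w_{2})+p^{2}w_{1}w_{2}$, exactly the stated formulas; the ``obstacle'' you flagged at the end is not an obstacle.

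The genuine gap is in the degenerate cases you raise and then wave away. Your parenthetical claim that a one-weight situation cannot occur for nonzero, pairwise independent, spanning $d_{i}$ is false: taking $\widetilde{D}$ to be one representative of every one-dimensional subspace gives the simplex code, which is projective, spanning and one-weight, while $D={V_{n}^{(p)}}^{*}$ is (trivially) a regular PDS; likewise, if the $d_{i}$ only span a proper subspace $H$ that is a union of lines, then $D=H^{*}$ is a regular PDS (with $\mu=0$) while $C_{\widetilde{D}}$ has dimension less than $n$ and a single nonzero weight. So the equivalence, read completely literally, needs the nondegeneracy hypotheses that Ma's formulation carries (e.g.\ $D\neq\emptyset,{V_{n}^{(p)}}^{*}$, equivalently $\mu>0$, which also forces $D-D=V_{n}^{(p)}$ and hence the spanning/dimension-$n$ conclusion you need in the converse direction). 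Your proof should either impose these exclusions explicitly or dispatch these cases, rather than asserting they cannot arise; with that repair, and with the parameter computation actually carried out as above, the argument is complete.
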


\section{A Characterization of vectorial dual-bent functions with Condition A in terms of amorphic association schemes} \label{sec: 3}

In this section, we give a characterization of vectorial dual-bent functions with Condition A in terms of amorphic association schemes.

In Theorem 6 of \cite{WFW2023Be}, Wang \emph{et al.} characterized vectorial dual-bent functions $F$ with Condition A in terms of partial difference sets $D_{F, I}^{*}$, where $I$ is an arbitrary nonempty subset of $V_{m}^{(p)}$ and $D_{F,I}=\bigcup_{i \in I}D_{F, i}$. In the following, based on Theorem 6 of \cite{WFW2023Be}, we give a characterization of vectorial dual-bent functions with Condition A in terms of partial difference sets $D_{F, i}^{*}, i \in V_{m}^{(p)}$.

\begin{proposition}\label{Proposition 3}
Let $F: V_{n}^{(p)}\rightarrow V_{m}^{(p)}$, where $n\geq 4$ is even and $2\leq m\leq \frac{n}{2}$. The following two statements are equivalent.

\emph{(1)} $F$ is a vectorial dual-bent function with Condition A.

\emph{(2)} For any $i \in V_{m}^{(p)}$, $D_{F, i}^{*}$ is a regular $(p^{n}, s_{i}(p^{\frac{n}{2}}-\varepsilon), \varepsilon p^{\frac{n}{2}}+s_{i}^{2}-3\varepsilon s_{i}, s_{i}^{2}-\varepsilon s_{i})$ partial difference set, where $s_{i}=p^{\frac{n}{2}-m}+\varepsilon \delta_{F(0)}(i)$, $\varepsilon \in \{\pm 1\}$ is a constant with $\varepsilon=1$ if $p\neq 3$.
\end{proposition}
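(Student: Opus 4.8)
The strategy is to reduce the statement to the already-available characterization in Theorem~6 of \cite{WFW2023Be}, which describes vectorial dual-bent functions with Condition~A via the partial difference set property of $D_{F,I}^{*}$ for \emph{arbitrary} nonempty $I\subseteq V_{m}^{(p)}$. Since the single-preimage sets $D_{F,i}$ are the special case $I=\{i\}$, the implication (1)$\Rightarrow$(2) is essentially immediate: one only has to specialize $I=\{i\}$ in that theorem and then simplify the resulting parameters. The part that requires genuine work is (2)$\Rightarrow$(1), because there one must build the general-$I$ statement out of the single-index statements, i.e. show that if each $D_{F,i}^{*}$ is a regular partial difference set with the prescribed Latin-square-type (or negative Latin-square-type, depending on the sign of $\varepsilon$) parameters, then so is $D_{F,I}^{*}=\bigcup_{i\in I}D_{F,i}^{*}$ for every nonempty $I$, and hence Theorem~6 of \cite{WFW2023Be} applies.

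For (1)$\Rightarrow$(2), I would take a fixed $i\in V_{m}^{(p)}$, apply Theorem~6 of \cite{WFW2023Be} with $I=\{i\}$ to obtain that $D_{F,i}^{*}$ is a regular partial difference set, and then just carry out the arithmetic: the size of $D_{F,i}$ is $p^{n-m}+\varepsilon\,\delta_{F(0)}(i)(p^{n/2}-\cdots)$-type expression coming from the Walsh/character values of the component functions, which after removing the point $0$ (present exactly when $i=F(0)$) gives $|D_{F,i}^{*}|=s_{i}(p^{n/2}-\varepsilon)$ with $s_i=p^{n/2-m}+\varepsilon\delta_{F(0)}(i)$; substituting this $k=s_i(p^{n/2}-\varepsilon)$ into the Latin-square-type parameter formulas $\lambda=N+s^2-3s$, $\mu=s^2-s$ with $N=p^{n/2}$ (for $\varepsilon=1$), and the negative Latin-square-type formulas when $\varepsilon=-1$ and $p=3$, yields exactly the claimed $(\,\varepsilon p^{n/2}+s_i^2-3\varepsilon s_i,\ s_i^2-\varepsilon s_i\,)$. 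Proposition~\ref{Proposition 2} guarantees $\varepsilon=1$ whenever $p>3$, so only $p\in\{2,3\}$ produce a nontrivial sign, which is consistent with the statement's clause ``$\varepsilon=1$ if $p\neq 3$'' (note $p=2$ forces $\varepsilon=1$ as well since all Boolean bent functions are regular).

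For (2)$\Rightarrow$(1), the natural route is via characters and Lemma~\ref{Lemma 3}. For a nonzero $u\in V_{n}^{(p)}$ one has $\chi_{u}(D_{F,I}^{*})=\sum_{i\in I}\chi_{u}(D_{F,i}^{*})$, and the hypothesis together with Lemma~\ref{Lemma 3} forces each $\chi_u(D_{F,i}^{*})$ to lie in a two-element set $\{(\beta_i\pm\sqrt{\Delta_i})/2\}$ determined by the parameters in (2); because all these partial difference sets are of the same (negative) Latin-square type with common $N=p^{n/2}$, these character values are, up to the additive shift coming from $s_i$, integer multiples of $p^{n/2}$ shifted by $-s_i$, so their sum over $i\in I$ again has the Latin-square-type form with parameter $s_I=\sum_{i\in I}s_i$. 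One then reads off that $D_{F,I}^{*}$ is a regular partial difference set of the appropriate type for every nonempty $I$, invokes Theorem~6 of \cite{WFW2023Be} in the reverse direction to conclude $F$ is a vectorial dual-bent function with Condition~A. The main obstacle is precisely this bookkeeping step: one must verify that the \emph{pair} of admissible character values for the union is exactly the pair predicted by the Latin-square-type parameters with $s_I=\sum_{i\in I}s_i$ (in particular that no ``mixed'' values outside this pair can occur), which uses that the possible values of $\chi_u(D_{F,i}^{*})$ for distinct $i$ are tightly constrained by $\sum_{i\in V_m^{(p)}}\chi_u(D_{F,i}^{*}) = -\delta_{?}$-type global relations and by the regularity ($-D=D$, $0\notin D$) conditions; once this is in hand, the equivalence closes. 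Alternatively, if extracting the union statement from the single-index statement proves delicate, one can instead prove (2)$\Rightarrow$(1) by first re-deriving the Walsh spectrum of each $F_c$ directly from the partial difference set parameters (a standard inversion of Lemma~\ref{Lemma 3} against \eqref{3}--\eqref{5}), thereby showing each $F_c$ is bent and regular/weakly-regular with common $\varepsilon$, and then checking \eqref{6}, which is the definition of Condition~A.
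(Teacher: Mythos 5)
Your plan coincides with the paper's own proof: (1)$\Rightarrow$(2) is exactly the specialization of Theorem 6 of \cite{WFW2023Be} combined with Proposition 2, and for (2)$\Rightarrow$(1) the "bookkeeping" step you flag is resolved precisely by the global relation you point to—Lemma 3 makes each $\chi_{u}(D_{F,i})$ two-valued in $\{p^{n-m}\delta_{0}(u)-\varepsilon p^{\frac{n}{2}-m},\ p^{n-m}\delta_{0}(u)+\varepsilon p^{\frac{n}{2}}-\varepsilon p^{\frac{n}{2}-m}\}$, and summing over all $i\in V_{m}^{(p)}$ against $\chi_{u}(V_{n}^{(p)})=p^{n}\delta_{0}(u)$ forces exactly one index to take the large value for every $u$, which excludes the "mixed" values and yields the stated character sums for every union $D_{F,I}$. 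The only immaterial difference is the closing citation: the paper concludes via the character-sum criterion (Lemma 1 of \cite{WFW2023Be}) rather than re-applying Theorem 6 of \cite{WFW2023Be} in the reverse direction to the sets $D_{F,I}^{*}$, as your route would; the computations involved are the same.
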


\begin{proof}
If (1) holds, then (2) holds by Theorem 6 of \cite{WFW2023Be} and Proposition 2. In the following, we prove that if (2) holds, then (1) holds.

By Lemma 3, if (2) holds, then for any $u \in V_{n}^{(p)}$ and $i \in V_{m}^{(p)}$, we have $\chi_{u}(D_{F, i})=p^{n-m}\delta_{0}(u)+\varepsilon p^{\frac{n}{2}}-\varepsilon p^{\frac{n}{2}-m}$ or $\chi_{u}(D_{F, i})=p^{n-m}\delta_{0}(u)-\varepsilon p^{\frac{n}{2}-m}$.
For any $u \in V_{n}^{(p)}$, let
\begin{equation*}
n_{u}=|\{i \in V_{m}^{(p)}: \chi_{u}(D_{F, i})=p^{n-m}\delta_{0}(u)+\varepsilon p^{\frac{n}{2}}-\varepsilon p^{\frac{n}{2}-m}\}|.
\end{equation*}
Since $p^{n}\delta_{0}(u)=\sum_{x \in V_{n}^{(p)}}\zeta_{p}^{\langle u, x\rangle_{n}}=\chi_{u}(V_{n}^{(p)})$ and
\begin{equation*}
\begin{split}
\chi_{u}(V_{n}^{(p)})&=\sum_{i \in V_{m}^{(p)}}\chi_{u}(D_{F, i})\\
&=n_{u}(p^{n-m}\delta_{0}(u)+\varepsilon p^{\frac{n}{2}}-\varepsilon p^{\frac{n}{2}-m})+(p^{m}-n_{u})(p^{n-m}\delta_{0}(u)-\varepsilon p^{\frac{n}{2}-m})\\
&=p^{n}\delta_{0}(u)+\varepsilon p^{\frac{n}{2}}(n_{u}-1),
\end{split}
\end{equation*}
we have $n_{u}=1$. Therefore, for any nonempty set $I\subseteq V_{m}^{(p)}$ and $u \in V_{n}^{(p)}$,
\begin{equation}\label{7}
\chi_{u}(D_{F, I})=p^{n-m}\delta_{0}(u)|I|+\varepsilon p^{\frac{n}{2}}-\varepsilon p^{\frac{n}{2}-m}|I| \text{ or }
\chi_{u}(D_{F, I})=p^{n-m}\delta_{0}(u)|I|-\varepsilon p^{\frac{n}{2}-m}|I|,
\end{equation}
where $D_{F, I}=\sum_{i \in I}D_{F, i}$. For any $i \in V_{m}^{(p)}$, define $E_{i}=\{u \in V_{n}^{(p)}: \chi_{u}(D_{F, i})=p^{n-m}\delta_{0}(u)+\varepsilon p^{\frac{n}{2}}-\varepsilon p^{\frac{n}{2}-m}\}$. We claim that $E_{i}\bigcap E_{j}=\emptyset$ for any $i \neq j$ and $\bigcup_{i \in V_{m}^{(p)}}E_{i}=V_{n}^{(p)}$. If there exists $i\neq j$ such that $E_{i}\bigcap E_{j}\neq \emptyset$, then there is $u \in V_{n}^{(p)}$ such that $\chi_{u}(D_{F, i})=\chi_{u}(D_{F, j})=p^{n-m}\delta_{0}(u)+\varepsilon p^{\frac{n}{2}}-\varepsilon p^{\frac{n}{2}-m}$ and $\chi_{u}(D_{F, i}\bigcup D_{F, j})=2p^{n-m}\delta_{0}(u)+2\varepsilon p^{\frac{n}{2}}-2\varepsilon p^{\frac{n}{2}-m}$, which contradicts Eq. (7). Thus, $E_{i}\bigcap E_{j}=\emptyset$ for any $i\neq j$. If there is $u \in V_{n}^{(p)}$ such that $u \notin E_{i}$ for any $i \in V_{m}^{(p)}$, then $\chi_{u}(D_{F, i})=p^{n-m}\delta_{0}(u)-\varepsilon p^{\frac{n}{2}-m}$ for any $i \in V_{m}^{(p)}$ and $\chi_{u}(V_{n}^{(p)})=\sum_{i \in V_{m}^{(p)}}\chi_{u}(D_{F, i})=p^{n}\delta_{0}(u)-\varepsilon p^{\frac{n}{2}}$, which contradicts $\chi_{u}(V_{n}^{(p)})=\sum_{x \in V_{n}^{(p)}}\zeta_{p}^{\langle u, x\rangle_{n}}=p^{n}\delta_{0}(u)$. Thus, $\bigcup_{i \in V_{m}^{(p)}}E_{i}=V_{n}^{(p)}$. By the above arguments, we can obtain
\begin{equation*}
  \chi_{u}(D_{F, I})=p^{n-m}\delta_{0}(u)|I|+\varepsilon p^{\frac{n}{2}-m}(p^m\delta_{E_{I}}(u)-|I|),
\end{equation*}
where $E_{I}=\sum_{i \in I}E_{i}$. Then by Lemma 1 of \cite{WFW2023Be}, $F$ is a vectorial dual-bent function with Condition A.
\end{proof}

\begin{remark}\label{Remark 3}
For a vectorial dual-bent function $F: V_{n}^{(p)}\rightarrow V_{m}^{(p)}$ with Condition A, by Proposition 3, $D_{F, i}^{*}, i \in V_{m}^{(p)}$ are all regular partial difference sets of Latin square type if $\varepsilon=1$, and $D_{F, i}^{*}, i \in V_{m}^{(p)}$ are all regular partial difference sets of negative Latin square type if $\varepsilon=-1$.
\end{remark}

The following corollary is directly from Proposition 3 and Remark 1.

\begin{corollary}\label{Corollary 1}
Let $F: V_{n}^{(2)}\rightarrow V_{m}^{(2)}$, where $n\geq 4$ is even and $2\leq m\leq \frac{n}{2}$. The following two statements are equivalent.

\emph{(1)} $F$ is a vectorial dual-bent function with $(F_{c})^{*}=(F^{*})_{c}, c \in {V_{m}^{(2)}}^{*}$.

\emph{(2)} For any $i \in V_{m}^{(2)}$, $D_{F, i}^{*}$ is a regular $(2^{n}, s_{i}(2^{\frac{n}{2}}-1), 2^{\frac{n}{2}}+s_{i}^{2}-3s_{i}, s_{i}^{2}-s_{i})$ partial difference set, where $s_{i}=2^{\frac{n}{2}-m}+\delta_{F(0)}(i)$.
\end{corollary}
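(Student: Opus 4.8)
The plan is to obtain Corollary~1 as an immediate specialization of Proposition~3 to the prime $p=2$, with Remark~1 supplying the translation between the two ways of phrasing the defining hypothesis. No new computation is needed beyond substituting $p=2$.

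First I would apply Remark~1: since every Boolean bent function is regular, for $p=2$ the property ``$F$ is a vectorial dual-bent function with Condition~A'' is literally the same as ``$F$ is a vectorial dual-bent function with $(F_c)^{*}=(F^{*})_{c}$ for all $c\in{V_m^{(2)}}^{*}$''. Thus statement~(1) of Corollary~1 is identical to statement~(1) of Proposition~3 in the case $n\geq 4$ even, $2\leq m\leq\frac n2$, $p=2$.

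Next I would instantiate Proposition~3 at $p=2$. Because $p=2\neq 3$, the constant $\varepsilon\in\{\pm1\}$ occurring in Proposition~3(2) is forced to be $\varepsilon=1$ (this is exactly the content of Proposition~2, already absorbed into the statement of Proposition~3), so there is no negative-Latin-square-type alternative to discard. Substituting $p=2$ and $\varepsilon=1$ into the parameters of Proposition~3(2) gives $s_i=2^{\frac n2-m}+\delta_{F(0)}(i)$ and the regular partial difference set parameters $\bigl(2^{n},\,s_i(2^{\frac n2}-1),\,2^{\frac n2}+s_i^{2}-3s_i,\,s_i^{2}-s_i\bigr)$, which is precisely statement~(2) of Corollary~1. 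Chaining the equivalences --- Corollary~1(1) $\Leftrightarrow$ Proposition~3(1) $\Leftrightarrow$ Proposition~3(2) $\Leftrightarrow$ Corollary~1(2) --- completes the argument.

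There is no genuine obstacle here: the entire content sits in Proposition~3, and the only point requiring a moment's attention is the bookkeeping of the constant $\varepsilon$, which collapses to $1$ precisely because $2\neq 3$, so that the ``negative Latin square type'' branch recorded in Remark~3 simply does not occur when $p=2$.
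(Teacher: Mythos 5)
Your proposal is correct and matches the paper's own argument, which states that Corollary~1 follows directly from Proposition~3 and Remark~1: Remark~1 identifies Condition~A with the condition $(F_{c})^{*}=(F^{*})_{c}$ when $p=2$, and specializing Proposition~3 with $\varepsilon=1$ (forced since $p\neq 3$) yields the stated parameters. Nothing further is needed.
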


Based on Proposition 3 and Lemma 4, we give the following theorem, which characterizes vectorial dual-bent functions with Condition A in terms of amorphic association schemes.

\begin{theorem}\label{Theorem 1}
Let $F: V_{n}^{(p)}\rightarrow V_{m}^{(p)}$, where $n\geq 4$ is even and $2\leq m \leq \frac{n}{2}$. Denote $I=F({V_{n}^{(p)}}^{*})$. Define
\begin{equation*}
\begin{split}
& R_{id}=\{(x, x): x \in V_{n}^{(p)}\}, \\
& R_{i}=\{(x, y): x, y \in V_{n}^{(p)}, x-y \in D_{F, i}^{*}\}, i \in I.
\end{split}
\end{equation*}
The following two statements are equivalent.

\emph{(1)} $F$ is a vectorial dual-bent function with Condition A.

\emph{(2)} $\{R_{id}, R_{i}, i \in I\}$ is an $|I|$-class amorphic association scheme for which $|I|\geq 3$ and for any $i \in I$, the intersection number $p_{i, i}^{id}=p^{n-m}-\varepsilon p^{\frac{n}{2}-m}+\delta_{F(0)}(i)(\varepsilon p^{\frac{n}{2}}-1)$, where $\varepsilon \in \{\pm 1\}$ is a constant with $\varepsilon=1$ if $p\neq 3$.

Furthermore, if \emph{(1)} or \emph{(2)} holds, then the following statement holds:

\emph{(3)} $I=V_{m}^{(p)}$ and $|I|=p^m$ except one case that $p=3$, $n=2m$ and $\varepsilon=-1$ (in such a case, $I=V_{m}^{(3)}\backslash \{F(0)\}$ and $|I|=3^m-1$).
\end{theorem}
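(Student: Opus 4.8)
The plan is to deduce the equivalence from Proposition~\ref{Proposition 3} and Lemma~\ref{Lemma 4}. Two elementary observations are used throughout. First, the set $\{0\}$ together with the sets $D_{F,i}^{*}$, $i\in I$, always forms a partition of $V_{n}^{(p)}$: the $D_{F,i}$, $i\in V_{m}^{(p)}$, partition $V_{n}^{(p)}$, one has $D_{F,i}^{*}=\emptyset$ exactly when $i\notin I$, and $0$ lies only in $D_{F,F(0)}$. Second, the prescribed intersection number factors as
\[
p^{n-m}-\varepsilon p^{\frac{n}{2}-m}+\delta_{F(0)}(i)\big(\varepsilon p^{\frac{n}{2}}-1\big)=\big(p^{\frac{n}{2}-m}+\varepsilon\delta_{F(0)}(i)\big)\big(p^{\frac{n}{2}}-\varepsilon\big)=s_{i}\big(p^{\frac{n}{2}}-\varepsilon\big),
\]
where $s_{i}=p^{\frac{n}{2}-m}+\varepsilon\delta_{F(0)}(i)$ is the parameter appearing in Proposition~\ref{Proposition 3}.

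For $(1)\Rightarrow(2)$, I would argue as follows. If $F$ has Condition~A, then Proposition~\ref{Proposition 3} together with Remark~\ref{Remark 3} gives that the $D_{F,i}^{*}$, $i\in V_{m}^{(p)}$, are regular partial difference sets of cardinality $s_{i}(p^{\frac{n}{2}}-\varepsilon)$, all of Latin square type if $\varepsilon=1$ and all of negative Latin square type if $\varepsilon=-1$. Since $s_{i}=p^{\frac{n}{2}-m}\geq 1$ whenever $i\ne F(0)$, every such $i$ lies in $I$, so $|I|\geq p^{m}-1\geq 3$ as $m\geq 2$; hence Lemma~\ref{Lemma 4}, applied to the partition $\{0\}$, $D_{F,i}^{*}$, $i\in I$, shows that $\{R_{id},R_{i}:i\in I\}$ is an amorphic association scheme. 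Finally, because $R_{id}$ is the diagonal and $-D_{F,i}^{*}=D_{F,i}^{*}$, one has $p_{i,i}^{id}=|\{z:x-z\in D_{F,i}^{*},\ z-x\in D_{F,i}^{*}\}|=|\{z:x-z\in D_{F,i}^{*}\}|=|D_{F,i}^{*}|=s_{i}(p^{\frac{n}{2}}-\varepsilon)$, which equals the stated value by the displayed identity.

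For $(2)\Rightarrow(1)$, I would first apply Lemma~\ref{Lemma 4} to the partition $\{0\}$, $D_{F,i}^{*}$, $i\in I$ (its $|I|\geq 3$ non-identity classes being the nonempty sets $D_{F,i}^{*}$, since $i\in I$ iff $D_{F,i}^{*}\ne\emptyset$): this gives that the $D_{F,i}^{*}$, $i\in I$, are regular partial difference sets, all of Latin square type or all of negative Latin square type, with cardinalities $|D_{F,i}^{*}|=\tilde s_{i}\big(p^{\frac{n}{2}}-\eta\big)$ for some $\tilde s_{i}\geq 0$ and a common $\eta\in\{\pm 1\}$. Equating $\tilde s_{i}\big(p^{\frac{n}{2}}-\eta\big)$ with the prescribed $p_{i,i}^{id}=s_{i}\big(p^{\frac{n}{2}}-\varepsilon\big)$ and using that $p\nmid(p^{\frac{n}{2}}-1)$ and $p\nmid(p^{\frac{n}{2}}+1)$ while $p^{\frac{n}{2}}\geq 4$ (tested on any $i\in I$ with $i\ne F(0)$, for which $s_{i}=p^{\frac{n}{2}-m}$ is a power of $p$) forces $\eta=\varepsilon$ and $\tilde s_{i}=s_{i}$ for all $i\in I$; in particular the type forced on the scheme is compatible with the requirement that $\varepsilon=1$ when $p\ne 3$. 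To extend the partial difference set property from $i\in I$ to all $i\in V_{m}^{(p)}$, as Proposition~\ref{Proposition 3} demands, I would use $\bigcup_{i\in I}D_{F,i}^{*}=V_{n}^{(p)}\setminus\{0\}$ and $p^{n}-1=\big(p^{\frac{n}{2}}-\varepsilon\big)\big(p^{\frac{n}{2}}+\varepsilon\big)$: dividing $\sum_{i\in I}s_{i}\big(p^{\frac{n}{2}}-\varepsilon\big)=p^{n}-1$ by $p^{\frac{n}{2}}-\varepsilon$ yields $|I|\,p^{\frac{n}{2}-m}+\varepsilon\,\delta_{I}(F(0))=p^{\frac{n}{2}}+\varepsilon$, i.e.\ $(|I|-p^{m})\,p^{\frac{n}{2}-m}=\varepsilon\big(1-\delta_{I}(F(0))\big)$. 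Since $|I|\leq p^{m}$ and $p^{\frac{n}{2}-m}\geq 1$, exactly two cases survive: either $F(0)\in I$ and $I=V_{m}^{(p)}$, whence Proposition~\ref{Proposition 3}(2) holds for every $i$; or $F(0)\notin I$ with $n=2m$, $\varepsilon=-1$ (so $p=3$ by the stated constraint on $\varepsilon$), $D_{F,F(0)}=\{0\}$, and $I=V_{m}^{(3)}\setminus\{F(0)\}$, in which case the only missing index $i=F(0)$ has $D_{F,F(0)}^{*}=\emptyset$, a regular partial difference set with $s_{F(0)}=p^{\frac{n}{2}-m}+\varepsilon=0$, so again Proposition~\ref{Proposition 3}(2) holds for every $i$. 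In either case Proposition~\ref{Proposition 3} shows that $F$ is a vectorial dual-bent function with Condition~A.

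Statement (3) then follows from Proposition~\ref{Proposition 3}: $i\in I$ iff $D_{F,i}^{*}\ne\emptyset$ iff $s_{i}=p^{\frac{n}{2}-m}+\varepsilon\delta_{F(0)}(i)\ne 0$, and since $p^{\frac{n}{2}-m}\geq 1$ this fails only for $i=F(0)$ with $\varepsilon=-1$ and $p^{\frac{n}{2}-m}=1$, i.e.\ $n=2m$ (forcing $p=3$ since $\varepsilon=1$ whenever $p\ne 3$); hence $I=V_{m}^{(p)}$ with $|I|=p^{m}$ except when $p=3$, $n=2m$ and $\varepsilon=-1$, where $I=V_{m}^{(3)}\setminus\{F(0)\}$ and $|I|=3^{m}-1$. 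The step I expect to be the main obstacle is the bookkeeping in $(2)\Rightarrow(1)$: Lemma~\ref{Lemma 4} only controls the indices in $I$, while Proposition~\ref{Proposition 3} needs the partial difference set statement for all $i\in V_{m}^{(p)}$, so one must determine $I$ precisely — in particular isolate the exceptional $p=3$, $n=2m$, $\varepsilon=-1$ case — via the counting identity and a divisibility-and-size argument, and also confirm that the Latin/negative-Latin square type forced on the scheme is the one compatible with $\varepsilon=1$ for $p\ne 3$. The remaining steps — expanding the Latin-square (or negative-Latin-square) parameters and matching them to Proposition~\ref{Proposition 3} — are routine.
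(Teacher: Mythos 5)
Your proposal is correct and follows essentially the same route as the paper: both directions reduce to Proposition~3 via Lemma~4, with the same divisibility argument forcing $\eta=\varepsilon$ and the same counting identity $\sum_{i\in I}s_{i}=p^{\frac{n}{2}}+\varepsilon$ isolating the exceptional case $p=3$, $n=2m$, $\varepsilon=-1$ (where the empty set $D_{F,F(0)}^{*}$ is handled by the paper's convention on empty partial difference sets). The only cosmetic difference is that for $(1)\Rightarrow(2)$ you derive the amorphic scheme internally from Proposition~3, Remark~3 and Lemma~4 instead of citing Theorem~3 of \cite{AKMO2023Ve}, which is a harmless and slightly more self-contained variant.
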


\begin{proof}
If (1) holds, then (2) and (3) follow from Theorem 3 of \cite{AKMO2023Ve}, Proposition 4 of \cite{WFW2023Be} and Proposition 2. In the following, we prove that if (2) holds, then (1) and (3) hold.

Note that $i \in I$ if and only if $D_{F, i}^{*}$ is nonempty. By Lemma 4, $\{R_{id}, R_{i}, i \in I\}$ is an amorphic association scheme if and only if for any $i \in I$, $D_{F, i}^{*}$ is a regular $(p^n, s_{i}(p^{\frac{n}{2}}-\varepsilon'), \varepsilon' p^{\frac{n}{2}}+s_{i}^{2}-3\varepsilon' s_{i}, s_{i}^{2}-\varepsilon' s_{i})$ partial difference set, where $s_{i}$ is a positive integer, $\varepsilon' \in \{\pm 1\}$ is a constant. From $|D_{F, i}^{*}|=p_{i, i}^{id}=p^{n-m}-\varepsilon p^{\frac{n}{2}-m}+\delta_{F(0)}(i)(\varepsilon p^{\frac{n}{2}}-1), i \in I$, we have
\begin{equation*}
s_{i}(p^{\frac{n}{2}}-\varepsilon')=p^{n-m}-\varepsilon p^{\frac{n}{2}-m}+\delta_{F(0)}(i)(\varepsilon p^{\frac{n}{2}}-1), i \in I.
\end{equation*}
Assume that $\varepsilon'\neq \varepsilon$, that is, $\varepsilon'=-\varepsilon$. Let $i \in I \backslash\{F(0)\}$, then $s_{i}(p^{\frac{n}{2}}+\varepsilon)=p^{\frac{n}{2}-m}(p^{\frac{n}{2}}-\varepsilon)$. Since $gcd(p^{\frac{n}{2}}+\varepsilon, p^{\frac{n}{2}-m})=1$, then $(p^{\frac{n}{2}}+\varepsilon) \mid (p^{\frac{n}{2}}-\varepsilon)$, which implies that $\varepsilon=-1$ and $p=3, n=2$, which contradicts $n\geq 4$. Therefore, $\varepsilon'=\varepsilon$ and for any $i \in I$, $D_{F, i}^{*}$ is a regular $(p^n, s_{i}(p^{\frac{n}{2}}-\varepsilon), \varepsilon p^{\frac{n}{2}}+s_{i}^{2}-3\varepsilon s_{i}, s_{i}^{2}-\varepsilon s_{i})$ partial difference set, where $s_{i}=p^{\frac{n}{2}-m}+\varepsilon \delta_{F(0)}(i)$. Note that $0 \in D_{F, F(0)}$ and $|D_{F, i}^{*}|=0$ if $i \in V_{m}^{(p)}\backslash I$.

If $D_{F, F(0)}=\{0\}$, then $F(0) \notin I$ and
\begin{equation}\label{8}
\begin{split}
p^n=|V_{n}^{(p)}|&=\sum_{i \in V_{m}^{(p)}}|D_{F, i}|\\
&=\sum_{i \in V_{m}^{(p)} \backslash \{F(0)\}}|D_{F, i}^{*}|+|D_{F, F(0)}|\\
&=\sum_{i \in I}|D_{F, i}^{*}|+1\\
&=p^{\frac{n}{2}-m}(p^{\frac{n}{2}}-\varepsilon)|I|+1.
\end{split}
\end{equation}
From Eq. (8), we have $p^{\frac{n}{2}-m} \mid (p^n-1)$, which implies that $n=2m$. Further, by Eq. (8) and $|I|\leq p^m-1=p^\frac{n}{2}-1$, we obtain $\varepsilon=-1, |I|=p^m-1, I=V_{m}^{(p)} \backslash \{F(0)\}$. Note that $p=3$ since $\varepsilon=1$ when $p\neq 3$. In this case, for any $i \in V_{m}^{(3)}$, $D_{F, i}^{*}$ is a regular $(3^n, s_{i}(3^{\frac{n}{2}}+1), -3^{\frac{n}{2}}+s_{i}^{2}+3s_{i}, s_{i}^{2}+s_{i})$ partial difference set, where $s_{i}=1-\delta_{F(0)}(i)$. By Proposition 3, $F$ is a vectorial dual-bent function with Condition A.

If $D_{F, F(0)}^{*}$ is nonempty, then $F(0) \in I$ and
\begin{equation*}
\begin{split}
p^n=|V_{n}^{(p)}|&=\sum_{i \in V_{m}^{(p)}}|D_{F, i}|\\
&=\sum_{i \in V_{m}^{(p)}\backslash \{F(0)\}}|D_{F, i}^{*}|+|D_{F, F(0)}|\\
&=\sum_{i \in I \backslash \{F(0)\}}|D_{F, i}^{*}|+|D_{F, F(0)}|\\
&=p^{\frac{n}{2}-m}(p^{\frac{n}{2}}-\varepsilon)(|I|-1)+(p^{\frac{n}{2}-m}+\varepsilon)(p^{\frac{n}{2}}-\varepsilon)+1,
\end{split}
\end{equation*}
which implies that $|I|=p^m$, that is, $I=V_{m}^{(p)}$. Thus for any $i \in V_{m}^{(p)}$, $D_{F, i}^{*}$ is a regular $(p^n, s_{i}(p^{\frac{n}{2}}-\varepsilon), \varepsilon p^{\frac{n}{2}}+s_{i}^{2}-3\varepsilon s_{i}, s_{i}^{2}-\varepsilon s_{i})$ partial difference set, where $s_{i}=p^{\frac{n}{2}-m}+\varepsilon \delta_{F(0)}(i)$. By Proposition 3, $F$ is a vectorial dual-bent function with Condition A.

Furthermore, the analysis mentioned above shows that statement (3) holds.
\end{proof}

\begin{remark}\label{Remark 4}
Keep the same notation as in Theorem 1. If $F$ is a vectorial dual-bent function with Condition A, then by Corollary 1 of \emph{\cite{VDM2010So}}, the intersection numbers of the amorphic association scheme induced from $F$ are given by the following equations:
\begin{equation*}
\begin{split}
& p_{id, id}^{i}=p_{id, i}^{j}=p_{i, id}^{j}=p_{i, j}^{id}=0, p_{id, i}^{i}=p_{i, id}^{i}=p_{id, id}^{id}=1, p_{i, i}^{id}=s_{i}(p^{\frac{n}{2}}-\varepsilon), \\
& p_{i, i}^{i}=\varepsilon p^{\frac{n}{2}}-2+(s_{i}-\varepsilon)(s_{i}-2\varepsilon), p_{i, i}^{j}=s_{i}(s_{i}-\varepsilon), p_{i, j}^{j}=s_{i}(s_{j}-\varepsilon), p_{i, j}^{k}=s_{i}s_{j},
\end{split}
\end{equation*}
where $i, j, k \in F({V_{n}^{(p)}}^{*})$ are distinct, $s_{i}=p^{\frac{n}{2}-m}+\varepsilon \delta_{F(0)}(i)$, $\varepsilon \in \{\pm 1\}$ is a constant with $\varepsilon=1$ if $p\neq 3$.
\end{remark}

The following corollary is directly from Theorem 1 and Remark 1.

\begin{corollary}\label{Corollary 2}
Let $F: V_{n}^{(2)}\rightarrow V_{m}^{(2)}$, where $n\geq 4$ is even and $2\leq m\leq \frac{n}{2}$.
Define
\begin{equation*}
\begin{split}
& R_{id}=\{(x, x): x \in V_{n}^{(2)}\}, \\
& R_{i}=\{(x, y): x, y \in V_{n}^{(2)}, x+y \in D_{F, i}^{*}\}, i \in V_{m}^{(2)}.
\end{split}
\end{equation*}
The following two statements are equivalent.

\emph{(1)} $F$ is a vectorial dual-bent function with $(F_{c})^{*}=(F^{*})_{c}, c \in {V_{m}^{(2)}}^{*}$.

\emph{(2)} $\{R_{id}, R_{i}, i \in V_{m}^{(2)}\}$ is a $2^m$-class amorphic association scheme for which for any $i \in V_{m}^{(2)}$, the intersection number $p_{i, i}^{id}=2^{n-m}-2^{\frac{n}{2}-m}+\delta_{F(0)}(i)(2^{\frac{n}{2}}-1)$.
\end{corollary}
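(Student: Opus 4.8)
The statement to prove is Corollary 2, which specializes Theorem 1 to the case $p=2$. The plan is to derive it directly from Theorem 1 together with Remark 1, which already identifies (in the Boolean setting) a vectorial dual-bent function with Condition A with a vectorial dual-bent function satisfying $(F_{c})^{*}=(F^{*})_{c}$ for all $c\in{V_{m}^{(2)}}^{*}$. So the first step is to record that under $p=2$, statement (1) of Theorem 1 is literally statement (1) of Corollary 2. Since $p=2\neq 3$, Theorem 1 forces $\varepsilon=1$; consequently, the furthermore clause (3) of Theorem 1 gives $I=F({V_{n}^{(2)}}^{*})=V_{m}^{(2)}$ and $|I|=2^{m}$, so there is no exceptional case and the class set $\{R_{id},R_{i},i\in V_{m}^{(2)}\}$ is indexed exactly by $V_{m}^{(2)}$.

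Next I would substitute $p=2$ and $\varepsilon=1$ into the remaining data of Theorem 1. The relation $x-y\in D_{F,i}^{*}$ becomes $x+y\in D_{F,i}^{*}$ over $\mathbb{F}_2$, matching the definition of $R_{i}$ in the corollary. The intersection number condition $p_{i,i}^{id}=p^{n-m}-\varepsilon p^{\frac{n}{2}-m}+\delta_{F(0)}(i)(\varepsilon p^{\frac{n}{2}}-1)$ specializes to $p_{i,i}^{id}=2^{n-m}-2^{\frac{n}{2}-m}+\delta_{F(0)}(i)(2^{\frac{n}{2}}-1)$, which is exactly the condition appearing in Corollary 2(2). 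The "$2^m$-class" phrasing is the specialization of "$|I|$-class with $|I|\geq 3$", valid since $m\geq 2$ gives $2^m\geq 4\geq 3$.

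Putting these together: (1)$\Rightarrow$(2) is immediate from the (1)$\Rightarrow$(2),(3) direction of Theorem 1 after the substitutions above; conversely, if the scheme $\{R_{id},R_{i},i\in V_{m}^{(2)}\}$ is a $2^m$-class amorphic association scheme with the stated intersection numbers, then it is in particular an $|I|$-class amorphic association scheme (with $I=V_{m}^{(2)}$, since each $R_i$ is nonempty by the intersection-number formula, noting $2^{n-m}-2^{\frac{n}{2}-m}>0$ as $n>m$) satisfying Theorem 1(2), hence Theorem 1(1) holds, i.e. $F$ is a vectorial dual-bent function with Condition A, which by Remark 1 is the same as statement (1) of Corollary 2. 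I do not anticipate a genuine obstacle here: the only thing to be careful about is confirming that in the $p=2$ case the exceptional scenario of Theorem 1(3) ($p=3$) cannot occur and that $\varepsilon$ is forced to be $1$, so that the index set is unambiguously all of $V_{m}^{(2)}$ and every $R_i$ is a genuine (nonempty) class; everything else is a transcription of Theorem 1 and Remark 1.
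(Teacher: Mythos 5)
Your proposal is correct and follows exactly the paper's route: the paper derives Corollary 2 directly from Theorem 1 and Remark 1, and your argument is just that specialization ($p=2$ forces $\varepsilon=1$, the $p=3$ exceptional case cannot occur, subtraction becomes addition, and nonemptiness of the classes gives $I=V_{m}^{(2)}$ with $2^m\geq 3$) spelled out in detail. No gaps; the extra care you take in the (2)$\Rightarrow$(1) direction to identify $I$ with $V_{m}^{(2)}$ is precisely the point the paper leaves implicit.
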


\section{A characterization of vectorial dual-bent functions with Condition A in terms of linear codes} \label{sec: 4}
In this section, we give a characterization of vectorial dual-bent functions with Condition A in terms of linear codes.

First, we introduce a notation. For a set $D\subseteq {V_{n}^{(p)}}^{*}$, let $\widetilde{D}$ be a subset of $D$, denoted by $\widetilde{D}=\{x_{1}, \dots, x_{t}\}$, for which $x_{j}, 1\leq j \leq t$ are pairwise linearly independent, and for any $x \in D$, there exist $a \in \mathbb{F}_{p}^{*}$ and $x_{j}$ such that $x=ax_{j}$. Note that when $p=2$, $\widetilde{D}=D$.

\begin{theorem}\label{Theorem 2}
Let $F: V_{n}^{(p)}\rightarrow V_{m}^{(p)}$, where $n\geq 4$ is even and $2\leq m \leq \frac{n}{2}$. Denote $I=F({V_{n}^{(p)}}^{*})$. Define
\begin{equation*}
C_{\widetilde{D_{F, i}^{*}}}=\{c_{\alpha}=(\langle \alpha, x\rangle_{n})_{x \in \widetilde{D_{F, i}^{*}}}: \alpha \in V_{n}^{(p)}\}, i \in I.
\end{equation*}
The following two statements are equivalent.

\emph{(1)} $F$ is a vectorial dual-bent function with Condition A.

\emph{(2)} For any $i \in I$, $C_{\widetilde{D_{F, i}^{*}}}$ is a two-weight $[\frac{p^{n-m}-\varepsilon p^{\frac{n}{2}-m}+\delta_{F(0)}(i)(\varepsilon p^{\frac{n}{2}}-1)}{p-1}, n]$ projective linear code and the two nonzero weights are
\begin{equation*}
\begin{split}
& w_{1}=p^{n-m-1}+\frac{1-\varepsilon +2\varepsilon \delta_{F(0)}(i)}{2}p^{\frac{n}{2}-1},\\
& w_{2}=p^{n-m-1}+\frac{-1-\varepsilon +2\varepsilon \delta_{F(0)}(i)}{2}p^{\frac{n}{2}-1},
\end{split}
\end{equation*}
where $\varepsilon \in \{\pm 1\}$ is a constant with $\varepsilon=1$ if $p\neq 3$.

Furthermore, if \emph{(1)} or \emph{(2)} holds, then the following statement holds:

\emph{(3)} $I=V_{m}^{(p)}$ except one case that $p=3$, $n=2m$ and $\varepsilon=-1$ (in such a case, $I=V_{m}^{(3)}\backslash \{F(0)\}$).
\end{theorem}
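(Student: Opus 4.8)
The plan is to derive both implications from two established results: Proposition~\ref{Proposition 3}, which says that $F$ is a vectorial dual-bent function with Condition A if and only if each $D_{F,i}^{*}$ ($i\in V_m^{(p)}$) is a regular partial difference set with the Latin/negative-Latin square type parameters recorded there, and Lemma~\ref{Lemma 6}, which translates ``$C_{\widetilde D}$ is a two-weight projective code'' into ``$\mathbb{F}_p^{*}\widetilde D$ is a regular partial difference set'' and back, with an explicit dictionary between the two nonzero weights $w_1,w_2$ and the parameters $(v,k,\lambda,\mu)$. Statement (3) is then inherited from Theorem~\ref{Theorem 1} (in the second implication it also drops out of the counting below).

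For (1) $\Rightarrow$ (2): a vectorial dual-bent function with Condition A satisfies $F(ax)=F(x)$ for $a\in\mathbb{F}_p^{*}$ (used in the proof of Proposition~\ref{Proposition 2}), hence each $D_{F,i}$ is a union of lines through the origin and $D_{F,i}^{*}=\mathbb{F}_p^{*}\widetilde{D_{F,i}^{*}}$ with $|\widetilde{D_{F,i}^{*}}|=|D_{F,i}^{*}|/(p-1)$. Proposition~\ref{Proposition 3} makes $D_{F,i}^{*}$ a regular $(p^{n},s_i(p^{n/2}-\varepsilon),\varepsilon p^{n/2}+s_i^{2}-3\varepsilon s_i,s_i^{2}-\varepsilon s_i)$ partial difference set with $s_i=p^{n/2-m}+\varepsilon\delta_{F(0)}(i)$; using $\varepsilon^{2}=1$ one simplifies $k_i:=s_i(p^{n/2}-\varepsilon)=p^{n-m}-\varepsilon p^{n/2-m}+\delta_{F(0)}(i)(\varepsilon p^{n/2}-1)$, which is exactly $(p-1)$ times the length claimed in (2). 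Feeding this partial difference set into Lemma~\ref{Lemma 6} gives that $C_{\widetilde{D_{F,i}^{*}}}$ is a two-weight projective $[\,k_i/(p-1),n\,]$ code whose two weights $w_1,w_2$ are pinned down by the Lemma~\ref{Lemma 6} relations $\lambda-\mu=2k_i-p(w_1+w_2)$ and $\mu=k_i^{2}+k_i-pk_i(w_1+w_2)+p^{2}w_1w_2$. Substituting $\lambda-\mu=\varepsilon p^{n/2}-2\varepsilon s_i$ yields $w_1+w_2=(2s_i-\varepsilon)p^{n/2-1}$ and, equivalently, $w_1-w_2=p^{n/2-1}$, and expanding $s_i$ recovers precisely the two values in (2). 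Part (3) follows from Theorem~\ref{Theorem 1}.

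For (2) $\Rightarrow$ (1): for each $i\in I$ apply the converse direction of Lemma~\ref{Lemma 6} to the two-weight projective code $C_{\widetilde{D_{F,i}^{*}}}$ of the prescribed length and weights; reading the dictionary above backwards shows that $\widehat D_i:=\mathbb{F}_p^{*}\widetilde{D_{F,i}^{*}}$ is a regular partial difference set with exactly the Proposition~\ref{Proposition 3} parameters, in particular $|\widehat D_i|=k_i=p^{n-m}-\varepsilon p^{n/2-m}+\delta_{F(0)}(i)(\varepsilon p^{n/2}-1)$. The crux is to identify $\widehat D_i$ with $D_{F,i}^{*}$. By the definition of $\widetilde{D_{F,i}^{*}}$ one only has the inclusion $D_{F,i}^{*}\subseteq\widehat D_i$ for $i\in I$, while $D_{F,i}^{*}=\emptyset$ for $i\notin I$; hence $p^{n}-1=\sum_{i\in V_m^{(p)}}|D_{F,i}^{*}|\le\sum_{i\in I}k_i$. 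Evaluating $\sum_{i\in I}k_i$ and using $|I|\le p^{m}$, together with $|I|\le p^{m}-1$ when $F(0)\notin I$, one runs the same case split as in the proof of Theorem~\ref{Theorem 1} (on whether $F(0)\in I$ and on the sign $\varepsilon$): the inequality can hold only with equality, and equality is possible only for $I=V_m^{(p)}$ unless $p=3$, $n=2m$, $\varepsilon=-1$, in which case $I=V_m^{(3)}\setminus\{F(0)\}$. Equality in $\sum_{i\in I}|D_{F,i}^{*}|\le\sum_{i\in I}k_i$ then forces $|D_{F,i}^{*}|=k_i=|\widehat D_i|$ term by term, hence $D_{F,i}^{*}=\widehat D_i$ for all $i\in I$; for the at most one remaining index $i=F(0)$ (in the exceptional case) the corresponding $s_{F(0)}=0$, so $D_{F,i}^{*}=\emptyset$ is the empty partial difference set, which still matches Proposition~\ref{Proposition 3}. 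Thus every $D_{F,i}^{*}$, $i\in V_m^{(p)}$, has the Proposition~\ref{Proposition 3} parameters, so that proposition gives that $F$ is a vectorial dual-bent function with Condition A, and the description of $I$ just obtained is statement (3).

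I expect the genuinely delicate point to be this last counting step in (2) $\Rightarrow$ (1): the inclusion $D_{F,i}^{*}\subseteq\mathbb{F}_p^{*}\widetilde{D_{F,i}^{*}}$ only becomes an equality after summing over all $i$, so one cannot argue index by index, and the exceptional triple $(p,n,\varepsilon)=(3,2m,-1)$ has to be isolated exactly as in Theorem~\ref{Theorem 1}. Everything else is bookkeeping: the passage through Lemma~\ref{Lemma 6}, the simplifications using $\varepsilon^{2}=1$, and solving the two relations for $w_1,w_2$.
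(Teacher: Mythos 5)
Your proposal is correct and follows essentially the same route as the paper's proof: the forward direction via $F(ax)=F(x)$, Proposition~\ref{Proposition 3} and Lemma~\ref{Lemma 6}, and the converse via the inclusion $D_{F,i}^{*}\subseteq\mathbb{F}_p^{*}\widetilde{D_{F,i}^{*}}$, the global count $p^{n}-1\le\sum_{i\in I}s_i(p^{n/2}-\varepsilon)$, the case split on whether $F(0)\in I$ (isolating $p=3$, $n=2m$, $\varepsilon=-1$), and the forced termwise equality feeding back into Proposition~\ref{Proposition 3}. The delicate point you flag is exactly the one the paper handles, so no gap remains.
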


\begin{proof}
(1) $\Rightarrow$ (2): If (1) holds, then (3) holds by Proposition 4 of \cite{WFW2023Be} and Proposition 2. Since $F$ is a vectorial dual-bent function with Condition A, by the proof of Theorem 1 of \cite{WFW2023Be}, we have $F(ax)=F(x), a \in \mathbb{F}_{p}^{*}$, which implies that $D_{F, i}^{*}=\mathbb{F}_{p}^{*}\widetilde{D_{F, i}^{*}}, i \in I$. For any $i \in I$, by Proposition 3, we have that $D_{F, i}^{*}$ is a regular $(p^n, s_{i}(p^{\frac{n}{2}}-\varepsilon), \varepsilon p^{\frac{n}{2}}+s_{i}^{2}-3\varepsilon s_{i}, s_{i}^{2}-\varepsilon s_{i})$ partial difference set, where $s_{i}=p^{\frac{n}{2}-m}+\varepsilon \delta_{F(0)}(i)$, $\varepsilon \in \{\pm 1\}$ is a constant with $\varepsilon=1$ if $p\neq 3$. By Lemma 6, $C_{\widetilde{D_{F, i}^{*}}}$ is a two-weight projective linear code with parameters $[\frac{p^{n-m}-\varepsilon p^{\frac{n}{2}-m}+\delta_{F(0)}(i)(\varepsilon p^{\frac{n}{2}}-1)}{p-1}, n]$ and the two nonzero weights are $w_{1}=p^{n-m-1}+\frac{1-\varepsilon +2\varepsilon \delta_{F(0)}(i)}{2}p^{\frac{n}{2}-1}$, $w_{2}=p^{n-m-1}+\frac{-1-\varepsilon +2\varepsilon \delta_{F(0)}(i)}{2}p^{\frac{n}{2}-1}$.

(2) $\Rightarrow$ (1): If (2) holds, then by Lemma 6, for any $i \in I$, $\mathbb{F}_{p}^{*}\widetilde{D_{F, i}^{*}}$ is a regular $(p^n, s_{i}(p^{\frac{n}{2}}-\varepsilon), \varepsilon p^{\frac{n}{2}}+s_{i}^{2}-3\varepsilon s_{i}, s_{i}^{2}-\varepsilon s_{i})$ partial difference set, where $s_{i}=p^{\frac{n}{2}-m}+\varepsilon \delta_{F(0)}(i)$. Note that for any $i \in I$, $D_{F, i}^{*}\subseteq \mathbb{F}_{p}^{*}\widetilde{D_{F, i}^{*}}$ and $|D_{F, i}^{*}|\leq |\mathbb{F}_{p}^{*}\widetilde{D_{F, i}^{*}}|=s_{i}(p^{\frac{n}{2}}-\varepsilon)$. Then
\begin{equation*}
\begin{split}
p^n-1=&\sum_{i \in I}|D_{F, i}^{*}|\leq \sum_{i \in I}|\mathbb{F}_{p}^{*}\widetilde{D_{F, i}^{*}}|=\sum_{i \in I}s_{i}(p^{\frac{n}{2}}-\varepsilon)=(p^{\frac{n}{2}}-\varepsilon)\sum_{i \in I}p^{\frac{n}{2}-m}+\varepsilon \delta_{F(0)}(i),
\end{split}
\end{equation*}
which yields that
\begin{equation}\label{9}
|I|p^{\frac{n}{2}-m}+\varepsilon \delta_{I}(F(0))\geq p^{\frac{n}{2}}+\varepsilon.
\end{equation}

If $F(0) \notin I$, then $D_{F, F(0)}=\{0\}$ and by Inequality (9), $|I|p^{\frac{n}{2}-m}\geq p^{\frac{n}{2}}+\varepsilon$. By $|I|\leq p^m-1$, we obtain $\varepsilon=-1, n=2m, |I|=p^m-1, I=V_{m}^{(p)} \backslash\{F(0)\}$. Note that $p=3$ since $\varepsilon=1$ when $p\neq 3$. In this case, $\sum_{i \in V_{m}^{(3)} \backslash\{F(0)\}}|D_{F, i}^{*}|=\sum_{i \in V_{m}^{(3)} \backslash\{F(0)\}}|\mathbb{F}_{3}^{*}\widetilde{D_{F, i}^{*}}|$, which implies that for any $i \in V_{m}^{(3)} \backslash \{F(0)\}$, we have $D_{F, i}^{*}=\mathbb{F}_{3}^{*}\widetilde{D_{F, i}^{*}}$. Therefore, for any $i \in V_{m}^{(3)}$, $D_{F,i}^{*}$ is a regular $(3^n, s_{i}(3^{\frac{n}{2}}+1), -3^{\frac{n}{2}}+s_{i}^{2}+3s_{i}, s_{i}^{2}+s_{i})$ partial difference set, where $s_{i}=1-\delta_{F(0)}(i)$. By Proposition 3, $F$ is a vectorial dual-bent function with Condition A.

If $F(0) \in I$, then $D_{F, F(0)}^{*}$ is nonempty and by Inequality (9), $|I|p^{\frac{n}{2}-m}+\varepsilon \geq p^{\frac{n}{2}}+\varepsilon$, that is, $|I|\geq p^m$. By $|I|\leq p^m$, we have $|I|=p^m$, $I=V_{m}^{(p)}$. In this case, $\sum_{i \in V_{m}^{(p)}}|D_{F, i}^{*}|=\sum_{i \in V_{m}^{(p)}}|\mathbb{F}_{p}^{*}\widetilde{D_{F, i}^{*}}|$, which implies that for any $i \in V_{m}^{(p)}$, we have $D_{F, i}^{*}=\mathbb{F}_{p}^{*}\widetilde{D_{F, i}^{*}}$. Therefore, for any $i \in V_{m}^{(p)}$, $D_{F,i}^{*}$ is a regular $(p^n, s_{i}(p^{\frac{n}{2}}-\varepsilon), \varepsilon p^{\frac{n}{2}}+s_{i}^{2}-3\varepsilon s_{i}, s_{i}^{2}-\varepsilon s_{i})$ partial difference set, where $s_{i}=p^{\frac{n}{2}-m}+\varepsilon \delta_{F(0)}(i)$. By Proposition 3, $F$ is a vectorial dual-bent function with Condition A.

Furthermore, the analysis mentioned above shows that statement (3) holds.
\end{proof}

The following corollary is directly from Theorem 2 and Remark 1.

\begin{corollary}\label{Corollary 3}
Let $F: V_{n}^{(2)}\rightarrow V_{m}^{(2)}$, where $n\geq 4$ is even and $2\leq m \leq \frac{n}{2}$. Define
\begin{equation*}
C_{D_{F, i}^{*}}=\{c_{\alpha}=(\langle \alpha, x\rangle_{n})_{x \in D_{F, i}^{*}}: \alpha \in V_{n}^{(2)}\}, i \in V_{m}^{(2)}.
\end{equation*}
The following two statements are equivalent.

\emph{(1)} $F$ is a vectorial dual-bent function with $(F_{c})^{*}=(F^{*})_{c}, c \in {V_{m}^{(2)}}^{*}$.

\emph{(2)} For any $i \in V_{m}^{(2)}$, $C_{D_{F, i}^{*}}$ is a two-weight $[2^{n-m}-2^{\frac{n}{2}-m}+\delta_{F(0)}(i)(2^{\frac{n}{2}}-1), n]$ projective linear code and the two nonzero weights are $w_{1}=2^{n-m-1}, w_{2}=2^{n-m-1}-2^{\frac{n}{2}-1}+\delta_{F(0)}(i)2^{\frac{n}{2}}$.
\end{corollary}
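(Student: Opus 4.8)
The plan is to obtain the corollary as the specialization of Theorem 2 to the binary case $p=2$, so that no genuinely new argument is needed. First I would invoke Remark 1: since $n\geq 4$ is even, $2\leq m\leq \frac{n}{2}$, and every Boolean bent function is regular, the hypothesis that $F$ is a vectorial dual-bent function with $(F_{c})^{*}=(F^{*})_{c}$ for all $c\in {V_{m}^{(2)}}^{*}$ is precisely the statement that $F$ is a vectorial dual-bent function with Condition A, and in this case the constant is $\varepsilon=\varepsilon_{F_{c}}=1$. Hence statement (1) of the corollary is exactly statement (1) of Theorem 2 with $p=2$.

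Next I would reconcile the two code constructions and the index set. Over $\mathbb{F}_{2}$ the only nonzero scalar is $1$, so two distinct nonzero vectors are automatically linearly independent; thus the set $\widetilde{D_{F,i}^{*}}$ appearing in Theorem 2 may be taken to be $D_{F,i}^{*}$ itself, whence $C_{\widetilde{D_{F,i}^{*}}}=C_{D_{F,i}^{*}}$. Moreover, by part (3) of Theorem 2 the exceptional case there requires $p=3$, so here $I=F({V_{n}^{(2)}}^{*})=V_{m}^{(2)}$; this is why the index in the corollary runs over the full space $V_{m}^{(2)}$ rather than over $I$.

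It then remains only to simplify the parameters. Substituting $p=2$ and $\varepsilon=1$ into the length of Theorem 2 gives $\frac{2^{n-m}-2^{\frac{n}{2}-m}+\delta_{F(0)}(i)(2^{\frac{n}{2}}-1)}{2-1}=2^{n-m}-2^{\frac{n}{2}-m}+\delta_{F(0)}(i)(2^{\frac{n}{2}}-1)$, matching the claimed length, and substituting into the two weights of Theorem 2 yields the unordered pair $\{\,2^{n-m-1}+\delta_{F(0)}(i)2^{\frac{n}{2}-1},\ 2^{n-m-1}-2^{\frac{n}{2}-1}+\delta_{F(0)}(i)2^{\frac{n}{2}-1}\,\}$. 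A one-line case check on whether $\delta_{F(0)}(i)$ equals $0$ or $1$ shows this is the same two-element set as $\{\,2^{n-m-1},\ 2^{n-m-1}-2^{\frac{n}{2}-1}+\delta_{F(0)}(i)2^{\frac{n}{2}}\,\}=\{w_{1},w_{2}\}$ as named in the corollary, the labels $w_{1}$ and $w_{2}$ being merely interchanged when $i=F(0)$. Since a two-weight code is unaffected by relabeling its two nonzero weights, the equivalence (1)$\Leftrightarrow$(2) of Theorem 2 transfers verbatim, which proves the corollary. There is no real obstacle in this argument; the only place requiring a moment's care is checking that the two weight formulas of Theorem 2 coincide, as sets, with the differently presented formulas in the corollary, which the case distinction on $\delta_{F(0)}(i)$ settles at once.
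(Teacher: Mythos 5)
Your proposal is correct and matches the paper's own route: the paper derives Corollary 3 directly from Theorem 2 together with Remark 1, exactly as you do (with $\widetilde{D_{F,i}^{*}}=D_{F,i}^{*}$ for $p=2$, $I=V_{m}^{(2)}$ via part (3), and the weight pair of Theorem 2 specializing, after the $\delta_{F(0)}(i)$ case check, to the set $\{w_{1},w_{2}\}$ stated in the corollary). No gaps.
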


\section{A characterization of vectorial dual-bent functions with Condition A in terms of generalized Hadamrad matrices} \label{sec: 5}

In this section, we give a characterization of vectorial dual-bent functions with Condition A in terms of generalized Hadamard matrices.

Since the case of $p$ being odd is more complicated, we first consider the case of $p=2$.

\begin{theorem}\label{Theorem 3}
Let $F: V_{n}^{(2)}\rightarrow V_{m}^{(2)}$, where $n\geq 4$ is even and $2\leq m \leq \frac{n}{2}$. For any $c \in {V_{m}^{(2)}}^{*}$, define
\begin{equation*}
H_{c}=\left[(-1)^{F_{c}(x+y)}\right]_{x, y \in V_{n}^{(2)}},
\end{equation*}
where $F_{c}(x)=\langle c, F(x)\rangle_{m}$. The following two statements are equivalent.

\emph{(1)} $F$ is a vectorial dual-bent function with $(F_{c})^{*}=(F^{*})_{c}, c \in {V_{m}^{(2)}}^{*}$.

\emph{(2)} $H_{c}, c \in {V_{m}^{(2)}}^{*}$ are all Hadamard matrices and for any $c\neq d \in {V_{m}^{(2)}}^{*}$, $H_{c}H_{d}=2^{\frac{n}{2}}H_{c+d}$.
\end{theorem}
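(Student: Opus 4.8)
The plan is to translate the matrix identities into statements about Walsh transforms and duals, using the fact that $H_c$ is a Hadamard matrix iff $F_c$ is Boolean bent (Lemma 5), and then to extract the condition $(F_c)^*=(F^*)_c$ from the product identity $H_cH_d=2^{n/2}H_{c+d}$. First I would recall that for Boolean bent $F_c$ the diagonalization $W_{F_c}(a)=2^{n/2}(-1)^{(F_c)^*(a)}$ gives, by the inverse Walsh transform, the identity $\sum_{z\in V_n^{(2)}}(-1)^{F_c(z)+\langle a,z\rangle_n}=2^{n/2}(-1)^{(F_c)^*(a)}$. The $(x,y)$ entry of the product $H_cH_d$ is $\sum_{z}(-1)^{F_c(x+z)+F_d(z+y)}$; substituting $w=x+z$ this becomes $\sum_{w}(-1)^{F_c(w)+F_d(w+x+y)}$, which depends only on $x+y$. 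So the product identity $H_cH_d=2^{n/2}H_{c+d}$ is equivalent to the ``convolution'' identity $\sum_{w}(-1)^{F_c(w)+F_d(w+t)}=2^{n/2}(-1)^{F_{c+d}(t)}$ for all $t\in V_n^{(2)}$.

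For the direction (1) $\Rightarrow$ (2): each $F_c$, $c\in {V_m^{(2)}}^*$, is a component function of the vectorial bent $F$, hence Boolean bent, so $H_c$ is Hadamard by Lemma 5. For the product, I would expand $\sum_{w}(-1)^{F_c(w)+F_d(w+t)}$ by writing $(-1)^{F_d(w+t)}=\frac{1}{2^n}\sum_{a}W_{F_d}(a)(-1)^{\langle a,w+t\rangle_n}$, interchange sums, recognize $\sum_w(-1)^{F_c(w)+\langle a,w\rangle_n}=W_{F_c}(-a)=W_{F_c}(a)$, and use $W_{F_c}(a)W_{F_d}(a)=2^n(-1)^{(F_c)^*(a)+(F_d)^*(a)}$. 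Since $\mathcal V_F$ is a vector space of bent functions, $(F_c)^*+(F_d)^*=(F_{c+d})^*$ when we use the hypothesis $(F_c)^*=(F^*)_c$ (linearity of $c\mapsto (F^*)_c$), so the sum collapses to $\frac{2^n}{2^n}\sum_a (-1)^{(F_{c+d})^*(a)+\langle a,t\rangle_n}=W_{(F_{c+d})^*}(-t)$ up to sign bookkeeping; applying $(f^*)^*(x)=f(-x)=f(x)$ (Eq. (5), in characteristic $2$) gives exactly $2^{n/2}(-1)^{F_{c+d}(t)}$, which is the desired identity.

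For (2) $\Rightarrow$ (1): from $H_c$ Hadamard and Lemma 5 each $F_c$ is bent, so each has a well-defined dual $(F_c)^*$, and $W_{F_c}(a)=2^{n/2}(-1)^{(F_c)^*(a)}$. Running the computation above in reverse, the product identity $\sum_{w}(-1)^{F_c(w)+F_d(w+t)}=2^{n/2}(-1)^{F_{c+d}(t)}$ forces $W_{F_c}(a)W_{F_d}(a)=2^n(-1)^{(F_{c+d})^*(a)}$ for all $a$ (comparing Fourier coefficients), hence $(F_c)^*(a)+(F_d)^*(a)=(F_{c+d})^*(a)$ for all $a$ and all $c\ne d$ in ${V_m^{(2)}}^*$. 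This additive relation says precisely that $\{(F_c)^*:c\in {V_m^{(2)}}^*\}\cup\{0\}$ is closed under addition, i.e.\ forms an $m$-dimensional vector space of bent functions, so $F$ is vectorial dual-bent; moreover the map $c\mapsto (F_c)^*$ is $\mathbb{F}_2$-linear, so defining $F^*$ by $(F^*)_c=(F_c)^*$ is consistent and yields a vectorial bent function with $(F_c)^*=(F^*)_c$ for all $c\in{V_m^{(2)}}^*$.

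The main obstacle I anticipate is the careful bookkeeping in the Fourier inversion step --- making sure the index shifts ($w=x+z$, sign of $a$, the $2^n$ versus $2^{n/2}$ normalizations) are handled consistently, and in particular correctly invoking $(f^*)^*(x)=f(-x)$ to close the loop; over $\mathbb F_2$ the sign issues are mild since $-1\equiv 1$, but one must still verify that the value $\sum_w(-1)^{F_c(w)+F_d(w+t)}$ is real and $\pm 2^{n/2}$ before identifying it with $2^{n/2}(-1)^{F_{c+d}(t)}$, which is exactly where one uses that both factors share the same modulus $2^{n/2}$ at every frequency.
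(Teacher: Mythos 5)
Your proposal is correct and follows essentially the same route as the paper's proof: Lemma 5 for the Hadamard part, the Walsh-transform expansion of the product entries showing $(H_cH_d)_{x,y}=W_{(F_c)^*+(F_d)^*}(x+y)$, the resulting additivity $(F_c)^*+(F_d)^*=(F_{c+d})^*$, and the assembly of the vectorial dual via linearity of $c\mapsto (F_c)^*$. The only detail you leave implicit --- that a family of bent functions additive in $c$ really arises as the components of a single map $F^{*}:V_n^{(2)}\rightarrow V_m^{(2)}$ --- is exactly what the paper makes explicit by solving the linear system $\langle \alpha_i, G(x)\rangle_m=(F_{\alpha_i})^*(x)$ over a basis of $V_m^{(2)}$.
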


\begin{proof}
(1) $\Rightarrow$ (2): Since $F$ is a vectorial bent function, that is, $F_{c}, c \in {V_{m}^{(2)}}^{*}$ are all Boolean bent functions, by Lemma 5 we have that $H_{c}, c \in {V_{m}^{(2)}}^{*}$ are all Hadamard matrices. For a matrix $M=[a_{i, j}]$, denote $a_{i, j}$ by $(M)_{i, j}$. For any $c\neq d \in {V_{m}^{(2)}}^{*}$ and $i, j \in V_{n}^{(2)}$, since $F$ is vectorial bent, we have
\begin{equation}\label{10}
\begin{split}
(H_{c}H_{d})_{i,j}&=\sum_{u \in V_{n}^{(2)}}(-1)^{F_{c}(u+i)+F_{d}(u+j)}\\
&=2^{-2n}\sum_{u \in V_{n}^{(2)}}\sum_{x \in V_{n}^{(2)}}W_{F_{c}}(x)(-1)^{\langle u+i, x\rangle_{n}}\sum_{y \in V_{n}^{(2)}}W_{F_{d}}(y)(-1)^{\langle u+j, y\rangle_{n}}\\
&=2^{-n}\sum_{u \in V_{n}^{(2)}}\sum_{x, y \in V_{n}^{(2)}}(-1)^{(F_{c})^{*}(x)+(F_{d})^{*}(y)+\langle u+i, x\rangle_{n}+\langle u+j, y\rangle_{n}}\\
&=2^{-n}\sum_{x, y \in V_{n}^{(2)}}(-1)^{(F_{c})^{*}(x)+(F_{d})^{*}(y)+\langle i, x\rangle_{n}+\langle j, y\rangle_{n}}\sum_{u \in V_{n}^{(2)}}(-1)^{\langle u, x+y\rangle_{n}}\\
&=\sum_{x \in V_{n}^{(2)}}(-1)^{(F_{c})^{*}(x)+(F_{d})^{*}(x)+\langle i+j, x\rangle_{n}}\\
&=W_{(F_{c})^{*}+(F_{d})^{*}}(i+j).
\end{split}
\end{equation}
Since $F$ is vectorial dual-bent with $(F_{c})^{*}=(F^{*})_{c}, c \in {V_{m}^{(2)}}^{*}$, we have $(F_{c})^{*}+(F_{d})^{*}=(F^{*})_{c}+(F^{*})_{d}=(F^{*})_{c+d}=(F_{c+d})^{*}$. Thus by Eq. (5) and Eq. (10) we obtain
\begin{equation*}
(H_{c}H_{d})_{i, j}=W_{(F_{c+d})^{*}}(i+j)=2^{\frac{n}{2}}(-1)^{F_{c+d}(i+j)}=2^{\frac{n}{2}}(H_{c+d})_{i, j},
\end{equation*}
which implies that $H_{c}H_{d}=2^{\frac{n}{2}}H_{c+d}$.

(2) $\Rightarrow$ (1): Since $H_{c}, c \in {V_{m}^{(2)}}^{*}$ are all Hadamard matrices, by Lemma 5 we have that $F_{c}, c \in {V_{m}^{(2)}}^{*}$ are all Boolean bent functions, that is, $F$ is vectorial bent. For any $c\neq d \in {V_{m}^{(2)}}^{*}$ and $i, j \in V_{n}^{(2)}$, from Eq. (10) and $H_{c}H_{d}=2^{\frac{n}{2}}H_{c+d}$, we have
\begin{equation}\label{11}
W_{(F_{c})^{*}+(F_{d})^{*}}(i+j)=2^{\frac{n}{2}}(-1)^{F_{c+d}(i+j)}.
\end{equation}
By Eq. (11), for any $c \neq d \in {V_{m}^{(2)}}^{*}$, we have that
$(F_{c})^{*}+(F_{d})^{*}$ is a Boolean bent function and $((F_{c})^{*}+(F_{d})^{*})^{*}=F_{c+d}$, which implies that
\begin{equation}\label{12}
(F_{c})^{*}+(F_{d})^{*}=(F_{c+d})^{*}.
\end{equation}
Let $\{\alpha_{1}, \dots, \alpha_{m}\}$ be an arbitrary fixed basis of $V_{m}^{(2)}$. For any $x \in V_{n}^{(2)}$, let $G(x) \in V_{m}^{(2)}$ be given by the following equation system \\
\begin{equation*}
\left\{
\begin{split}
& \langle \alpha_{1}, G(x)\rangle_{m}=(F_{\alpha_{1}})^{*}(x),\\
& \langle \alpha_{2}, G(x)\rangle_{m}=(F_{\alpha_{2}})^{*}(x),\\
& \ \ \ \ \ \vdots\\
& \langle \alpha_{m}, G(x)\rangle_{m}=(F_{\alpha_{m}})^{*}(x).
\end{split}\right.
\end{equation*}
Then $G$ is a function from $V_{n}^{(2)}$ to $V_{m}^{(2)}$ satisfying $G_{\alpha_{i}}=(F_{\alpha_{i}})^{*}, 1\leq i \leq m$. For any $c \in {V_{m}^{(2)}}^{*}$, denote $c$ by $c=\alpha_{i_{1}}+\dots+\alpha_{i_{t}}$. Then
\begin{equation}\label{13}
\begin{split}
G_{c}(x)&=\langle c, G(x)\rangle_{m}\\
&=\langle \alpha_{i_{1}}, G(x)\rangle_{m}+\dots+\langle \alpha_{i_{t}}, G(x)\rangle_{m}\\
&=G_{\alpha_{i_{1}}}(x)+\dots+G_{\alpha_{i_{t}}}(x)\\
&=(F_{\alpha_{i_{1}}})^{*}(x)+\dots+(F_{\alpha_{i_{t}}})^{*}(x).
\end{split}
\end{equation}
Combine Eq. (12) and Eq. (13), we obtain
\begin{equation*}
G_{c}(x)=(F_{\alpha_{i_{1}}+\dots+\alpha_{i_{t}}})^{*}(x)=(F_{c})^{*}(x).
\end{equation*}
For any $c \in {V_{m}^{(2)}}^{*}$, since $G_{c}=(F_{c})^{*}$ is a Boolean bent function, we have that $G$ is vectorial bent. Therefore, $F$ is  vectorial dual-bent with $(F_{c})^{*}=(F^{*})_{c}, c \in {V_{m}^{(2)}}^{*}$, where $F^{*}=G$.
\end{proof}

Below we give an example to illustrate Theorem 3.

\begin{example}\label{Example 1}
Let $F: \mathbb{F}_{2^6} \times \mathbb{F}_{2^6}\rightarrow \mathbb{F}_{2^2}$ be defined by
\begin{equation*}
F(x_{1}, x_{2})=Tr_{2}^{6}(x_{1}x_{2}^{58}).
\end{equation*}
Then by Proposition 3 of \emph{\cite{WFW2023Be}}, $F$ is a vectorial dual-bent function with Condition A. For any $c \in \mathbb{F}_{2^2}^{*}$, define
\begin{equation*}
H_{c}=\left[(-1)^{Tr_{1}^{6}(c(x_{1}+y_{1})(x_{2}+y_{2})^{58})}\right]_{(x_{1}, x_{2}), (y_{1}, y_{2}) \in \mathbb{F}_{2^6} \times \mathbb{F}_{2^6}}.
\end{equation*}
Then by Theorem 3, $H_{c}, c \in \mathbb{F}_{2^2}^{*}$ are all Hadamard matrices and $H_{c}H_{d}=64H_{c+d}$ for any $c\neq d \in \mathbb{F}_{2^2}^{*}$.
\end{example}

In the following, we consider the case of $p$ being odd. First, we need a lemma.

For an odd prime $p$, define $U_{p}^{(1)}=\{\zeta_{p}^{i}: 0\leq i \leq p-1\}$ and $U_{p}^{(-1)}=\{-\zeta_{p}^{i}: 0\leq i \leq p-1\}$.

\begin{lemma}\label{Lemma 7}
Let $p$ be an odd prime. Let $F: V_{n}^{(p)}\rightarrow V_{m}^{(p)}$, where $n\geq 4$ is even and $2\leq m \leq \frac{n}{2}$. For any $c \in {V_{m}^{(p)}}^{*}$ and $z \in V_{n}^{(p)}$, define
\begin{equation} \label{14}
H_{c}^{(z)}=\left[\zeta_{p}^{F_{c}(x-y)-\langle z, x-y\rangle_{n}}\right]_{x, y \in V_{n}^{(p)}},
\end{equation}
where $F_{c}(x)=\langle c, F(x)\rangle_{m}$.
The following two statements are equivalent.

\emph{(1)} $F$ is a vectorial bent function with all component functions $F_{c}, c \in {V_{m}^{(p)}}^{*}$ being regular or weakly regular but not regular, that is, $\varepsilon_{F_{c}}=\varepsilon$ for all $c \in {V_{m}^{(p)}}^{*}$, where $\varepsilon \in \{\pm 1\}$ is a constant.

\emph{(2)} $H_{c}^{(z)}, c \in {V_{m}^{(p)}}^{*}, z \in V_{n}^{(p)}$ are all generalized Hadamard matrices for which there exists a constant $\varepsilon \in \{\pm 1\}$ such that
\begin{equation}\label{15}
p^{-\frac{n}{2}}\sum_{i \in V_{n}^{(p)}}(H_{c}^{(z)})_{i, 0} \in U_{p}^{(\varepsilon)} \ \text{ for all } \ c \in {V_{m}^{(p)}}^{*}, z \in V_{n}^{(p)},
\end{equation}
where for a matrix $M=[a_{i, j}]$, denote $a_{i, j}$ by $(M)_{i, j}$.
\end{lemma}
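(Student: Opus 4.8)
The plan is to prove the equivalence by computing the Walsh transform of $F_c$ directly from the matrix $H_c^{(z)}$, using the definition of a generalized Hadamard matrix as the intermediary. The key observation is that the $p$-ary function $g_{c,z}(x) = F_c(x) - \langle z, x\rangle_n$ satisfies $W_{g_{c,z}}(a) = W_{F_c}(a+z)$, so as $z$ ranges over $V_n^{(p)}$ and $c$ over ${V_m^{(p)}}^*$, the functions $g_{c,z}$ capture exactly the Walsh behaviour of all component functions at all points. By Lemma 5, $H_c^{(z)} = \left[\zeta_p^{g_{c,z}(x-y)}\right]_{x,y}$ is a generalized Hadamard matrix if and only if $g_{c,z}$ is $p$-ary bent, which (ranging over all $z$) is equivalent to $F_c$ being bent; so the ``matrix is generalized Hadamard'' part of (2) is equivalent to $F$ being vectorial bent. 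It then remains to show that, given vectorial bentness, the sign/root-of-unity condition (\ref{15}) is equivalent to $\varepsilon_{F_c} = \varepsilon$ for all $c$.

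First I would establish the bentness equivalence: for fixed $c$, apply Lemma 5 to each $g_{c,z}$, noting that $\{g_{c,z} : z \in V_n^{(p)}\}$ are all bent iff $F_c$ is bent (since $W_{g_{c,z}}(a) = W_{F_c}(a+z)$ and $|W_{g_{c,z}}(a)| = p^{n/2}$ for all $a$ iff $|W_{F_c}(b)| = p^{n/2}$ for all $b$). Hence $H_c^{(z)}$ is a generalized Hadamard matrix for all $c \in {V_m^{(p)}}^*, z \in V_n^{(p)}$ iff all component functions $F_c$ are bent, i.e.\ $F$ is vectorial bent.

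Next I would analyze the column sum. Computing directly,
\begin{equation*}
\sum_{i \in V_n^{(p)}} (H_c^{(z)})_{i,0} = \sum_{x \in V_n^{(p)}} \zeta_p^{F_c(x) - \langle z, x\rangle_n} = W_{F_c}(z).
\end{equation*}
So condition (\ref{15}) says precisely that $p^{-n/2} W_{F_c}(z) \in U_p^{(\varepsilon)}$ for all $c \in {V_m^{(p)}}^*$ and all $z \in V_n^{(p)}$, with a single constant $\varepsilon$. Now assuming $F$ is vectorial bent, by Eq.~(\ref{4}) (and the fact that $n$ is even, so the $\sqrt{-1}$ case does not arise) we have $W_{F_c}(z) = \pm p^{n/2} \zeta_p^{(F_c)^*(z)}$, hence $p^{-n/2} W_{F_c}(z) \in U_p^{(1)} \cup U_p^{(-1)}$ always; and $F_c$ is regular or weakly regular but not regular exactly when the sign $\pm$ is constant in $z$ (equal to $\varepsilon_{F_c}$), i.e.\ when $p^{-n/2} W_{F_c}(z) \in U_p^{(\varepsilon_{F_c})}$ for all $z$. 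Therefore (\ref{15}) holding with a common $\varepsilon$ for all $c$ is equivalent to $\varepsilon_{F_c} = \varepsilon$ being a constant independent of $c$. Combining with the bentness equivalence gives (1)$\Leftrightarrow$(2).

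The main subtlety I anticipate is bookkeeping the quantifiers carefully: condition (2) asserts a \emph{single} $\varepsilon$ works simultaneously for all $c$ and all $z$, and one must check that this is genuinely equivalent to ``all $F_c$ weakly regular with the same $\varepsilon_{F_c}$'' rather than just ``each $F_c$ weakly regular'' — the point being that a non-weakly-regular $F_c$ would force some $z$ with $p^{-n/2}W_{F_c}(z) \in U_p^{(1)}$ and some other $z'$ with $p^{-n/2}W_{F_c}(z') \in U_p^{(-1)}$, violating (\ref{15}) no matter how $\varepsilon$ is chosen. A minor point to state explicitly is that since $U_p^{(1)} \cap U_p^{(-1)} = \emptyset$ (as $p$ is odd), membership in $U_p^{(\varepsilon)}$ unambiguously pins down $\varepsilon$. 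Everything else is a routine Walsh-transform computation already modeled on Eq.~(\ref{10}).
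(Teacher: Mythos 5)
Your proposal is correct and follows essentially the same route as the paper's proof: apply Lemma 5 to the shifted functions $F_c(x)-\langle z,x\rangle_n$ (using $W_{F_c(x)-\langle z,x\rangle_n}(a)=W_{F_c}(z+a)$) to identify the generalized Hadamard condition with vectorial bentness, and identify the column sum $\sum_{i}(H_c^{(z)})_{i,0}$ with $W_{F_c}(z)$ so that Eq.~(15) becomes exactly the statement $\varepsilon_{F_c}=\varepsilon$ for all $c$. Your explicit handling of the quantifiers and of the disjointness of $U_p^{(1)}$ and $U_p^{(-1)}$ only spells out what the paper leaves implicit.
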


\begin{proof}
(1) $\Rightarrow$ (2): Since $F$ is a vectorial bent function, $F_{c}, c \in {V_{m}^{(p)}}^{*}$ are all bent functions. Further, $F_{c}(x)-\langle z, x\rangle_{n}$ is bent for any $c \in {V_{m}^{(p)}}^{*}$ and $z \in V_{n}^{(p)}$ since $W_{F_{c}(x)-\langle z, x\rangle_{n}}(a)=W_{F_{c}}(z+a), a \in V_{n}^{(p)}$. By Lemma 5, we have that $H_{c}^{(z)}, c \in {V_{m}^{(p)}}^{*}, z \in V_{n}^{(p)}$ are all generalized Hadamard matrices. For any $c \in {V_{m}^{(p)}}^{*}$ and $z \in V_{n}^{(p)}$, we have
\begin{equation}\label{16}
p^{-\frac{n}{2}}\sum_{i \in V_{n}^{(p)}}(H_{c}^{(z)})_{i, 0}=p^{-\frac{n}{2}}\sum_{i \in V_{n}^{(p)}}\zeta_{p}^{F_{c}(i)- \langle z, i\rangle_{n}}
=p^{-\frac{n}{2}}W_{F_{c}}(z).
\end{equation}
Since $F$ is vectorial bent with $\varepsilon_{F_{c}}=\varepsilon, c \in {V_{m}^{(p)}}^{*}$, by Eq. (16) we have that Eq. (15) holds.

(2) $\Rightarrow$ (1): Since $H_{c}^{(0)}, c \in {V_{m}^{(p)}}^{*}$ are all generalized Hadamard matrices, by Lemma 5 we have that $F_{c}, c \in {V_{m}^{(p)}}^{*}$ are all bent functions, then $F$ is vectorial bent. By Eq. (15) and Eq. (16), we have that for any $c \in {V_{m}^{(p)}}^{*}$, the component function $F_{c}$ is weakly regular with $\varepsilon_{F_{c}}=\varepsilon$.
\end{proof}

Based on Lemma 7, with a similar proof as Theorem 3, we give the following theorem, which provides a characterization of vectorial dual-bent functions with Condition A in terms of generalized Hadamard matrices when $p$ is an odd prime.

\begin{theorem}\label{Theorem 4}
Let $p$ be an odd prime. Let $F: V_{n}^{(p)}\rightarrow V_{m}^{(p)}$, where $n\geq 4$ is even and $2\leq m \leq \frac{n}{2}$. Let matrices $H_{c}^{(z)}, c \in {V_{m}^{(p)}}^{*}, z \in V_{n}^{(p)}$ be defined by \emph{Eq. (14)}. For simplicity, denote $H_{c}^{(0)}$ by $H_{c}$. The following two statements are equivalent.

\emph{(1)} $F$ is a vectorial dual-bent function with Condition A.

\emph{(2)} $H_{c}^{(z)}, c \in {V_{m}^{(p)}}^{*}, z \in V_{n}^{(p)}$ are all generalized Hadamard matrices for which there exists a constant $\varepsilon \in \{\pm 1\}$ with $\varepsilon=1$ if $p\neq 3$ such that \emph{Eq. (15)} holds and $H_{c}\overline{H_{d}}^{\top}=\varepsilon p^{\frac{n}{2}}H_{c-d}$ for any $c\neq d \in {V_{m}^{(p)}}^{*}$.
\end{theorem}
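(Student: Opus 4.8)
The plan is to mimic the structure of the proof of Theorem 3, replacing the Walsh-transform identities for $p=2$ with the ones for odd $p$ and keeping careful track of the constant $\varepsilon$. For the direction (1) $\Rightarrow$ (2): since a vectorial dual-bent function with Condition A is in particular a vectorial bent function whose component functions $F_{c}$ all have the same $\varepsilon_{F_{c}}=\varepsilon$ (with $\varepsilon=1$ when $p\neq 3$ by Proposition~\ref{Proposition 2}), Lemma~\ref{Lemma 7} immediately gives that all $H_{c}^{(z)}$ are generalized Hadamard matrices and that Eq.~(15) holds. It then remains to prove $H_{c}\overline{H_{d}}^{\top}=\varepsilon p^{\frac{n}{2}}H_{c-d}$ for $c\neq d$. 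First I would compute the $(i,j)$ entry of $H_{c}\overline{H_{d}}^{\top}$: expanding $\zeta_{p}^{F_{c}(i-u)}$ and $\overline{\zeta_{p}^{F_{d}(j-u)}}=\zeta_{p}^{-F_{d}(j-u)}$ via the inverse Walsh transform (Eq.~(2)), summing over $u$ to produce a delta forcing the two transform arguments to coincide, and using $W_{F_{c}}(x)=\varepsilon p^{\frac{n}{2}}\zeta_{p}^{(F_{c})^{*}(x)}$, $\overline{W_{F_{d}}(x)}=\varepsilon p^{\frac{n}{2}}\zeta_{p}^{-(F_{d})^{*}(x)}$ (here one must be careful: $\overline{W_{F_{d}}(y)}=\varepsilon p^{n/2}\zeta_p^{-(F_d)^*(y)}$ only because $n$ is even, so the $\pm\sqrt{-1}$ case of Eq.~(4) does not arise), I would arrive at $(H_{c}\overline{H_{d}}^{\top})_{i,j}=p^{-n}\sum_{x}\varepsilon^2 p^{n}\zeta_{p}^{(F_{c})^{*}(x)-(F_{d})^{*}(x)+\langle i-j,x\rangle_{n}\cdot(\text{sign})}$, which is $W_{(F_{c})^{*}-(F_{d})^{*}}$ evaluated at $\pm(i-j)$. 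Using Condition A, $(F_{c})^{*}-(F_{d})^{*}=(F^{*})_{c}-(F^{*})_{d}=(F^{*})_{c-d}=(F_{c-d})^{*}$, so this equals $W_{(F_{c-d})^{*}}(\pm(i-j))$, and since $F_{c-d}$ is weakly regular with $\varepsilon_{F_{c-d}}=\varepsilon$, by Eq.~(5) we have $\varepsilon_{(F_{c-d})^{*}}=\varepsilon^{-1}=\varepsilon$ and $((F_{c-d})^{*})^{*}(x)=F_{c-d}(-x)$, giving $W_{(F_{c-d})^{*}}(\pm(i-j))=\varepsilon p^{\frac{n}{2}}\zeta_{p}^{F_{c-d}(\mp(i-j))}$. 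The $F$-with-Condition-A property $F(ax)=F(x)$ for $a\in\mathbb{F}_{p}^{*}$ (from the proof of Theorem~1 of \cite{WFW2023Be}) then absorbs the sign, yielding $(H_{c}\overline{H_{d}}^{\top})_{i,j}=\varepsilon p^{\frac{n}{2}}\zeta_{p}^{F_{c-d}(i-j)}=\varepsilon p^{\frac{n}{2}}(H_{c-d})_{i,j}$.

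For the direction (2) $\Rightarrow$ (1): by Lemma~\ref{Lemma 7}, the hypothesis that all $H_{c}^{(z)}$ are generalized Hadamard matrices satisfying Eq.~(15) is equivalent to $F$ being vectorial bent with $\varepsilon_{F_{c}}=\varepsilon$ for all $c$, with $\varepsilon=1$ if $p\neq 3$. So I only need to extract the dual-bent property with $(F_{c})^{*}=(F^{*})_{c}$ from the relation $H_{c}\overline{H_{d}}^{\top}=\varepsilon p^{\frac{n}{2}}H_{c-d}$. Reading the entrywise computation above backwards, that relation forces $W_{(F_{c})^{*}-(F_{d})^{*}}(i-j)=\varepsilon p^{\frac{n}{2}}\zeta_{p}^{F_{c-d}(i-j)}$ for all $i,j$, hence for all $a\in V_{n}^{(p)}$, $W_{(F_{c})^{*}-(F_{d})^{*}}(a)=\varepsilon p^{\frac{n}{2}}\zeta_{p}^{F_{c-d}(a)}$; this says $(F_{c})^{*}-(F_{d})^{*}$ is a weakly regular bent function with dual $F_{c-d}$, so (taking duals, using Eq.~(5) and that Condition A will eventually give $F(x)=F(-x)$) $(F_{c})^{*}-(F_{d})^{*}=(F_{c-d})^{*}$ for all distinct $c,d\in {V_{m}^{(p)}}^{*}$. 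Then I would fix a basis $\alpha_{1},\dots,\alpha_{m}$ of $V_{m}^{(p)}$, define $G: V_{n}^{(p)}\to V_{m}^{(p)}$ by the linear system $\langle \alpha_{i},G(x)\rangle_{m}=(F_{\alpha_{i}})^{*}(x)$, and show exactly as in Eq.~(13) of the proof of Theorem~3 — but now over $\mathbb{F}_{p}$, writing $c=\sum_{j} a_{j}\alpha_{j}$ with $a_{j}\in\mathbb{F}_{p}$ and iterating the additivity relation — that $G_{c}=(F_{c})^{*}$ for every $c\in {V_{m}^{(p)}}^{*}$; this requires a short induction handling scalar multiples $a_j\alpha_j$, which is where the odd-$p$ case genuinely differs from the Boolean one. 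Since each $G_{c}=(F_{c})^{*}$ is bent, $G$ is vectorial bent, so $F^{*}:=G$ witnesses that $F$ is vectorial dual-bent with $(F_{c})^{*}=(F^{*})_{c}$; together with $\varepsilon_{F_{c}}=\varepsilon$ for all $c$, this is precisely Condition A.

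\textbf{Main obstacle.} The principal technical nuisance — and the reason the odd-$p$ statement needs Lemma~\ref{Lemma 7} and the extra data $\{H_{c}^{(z)}\}_z$ rather than just $\{H_{c}\}$ — is the $\pm$ sign ambiguity in Eq.~(4): unlike the Boolean case where $W_{f}(a)=2^{n/2}(-1)^{f^{*}(a)}$ is pinned down, for odd $p$ a single matrix $H_{c}$ only tells you $|W_{F_{c}}(a)|=p^{n/2}$, not the sign $\varepsilon_{F_{c}}$, and certainly not that the sign is constant in $a$; one needs the family over all shifts $z$ plus condition~(15) to recover weak regularity and the common value $\varepsilon$. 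The second delicate point is bookkeeping the sign of the argument ($i-j$ versus $-(i-j)$ versus $j-i$) through the conjugate-transpose and the inverse Walsh transform, and checking that it is harmlessly absorbed using $F(x)=F(-x)$ (equivalently $F(ax)=F(x)$), which holds for vectorial dual-bent functions with Condition A by the proof of Theorem~1 of \cite{WFW2023Be}; in the (2) $\Rightarrow$ (1) direction one must be slightly careful about the logical order, establishing $F(x)=F(-x)$ before invoking it. Everything else is a routine transcription of the argument for Theorem~3 with $(-1)^{(\cdot)}$ replaced by $\zeta_{p}^{(\cdot)}$ and the scalar $2^{n/2}$ replaced by $\varepsilon p^{n/2}$.
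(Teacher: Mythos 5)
Your plan coincides with the paper's proof: Lemma 7 supplies the equivalence between the Hadamard-matrix family with Eq.~(15) and vectorial bentness with a common $\varepsilon_{F_{c}}=\varepsilon$, the entrywise inverse-Walsh computation gives $(H_{c}\overline{H_{d}}^{\top})_{i,j}=W_{(F_{c})^{*}-(F_{d})^{*}}(j-i)$, Condition A's additivity of duals settles (1) $\Rightarrow$ (2), and the converse extracts $(F_{c})^{*}-(F_{d})^{*}=(F_{c-d})^{*}$ and rebuilds $F^{*}=G$ from a basis, with the scalar-multiple induction you mention (the paper gets $(F_{c})^{*}+(F_{d})^{*}=(F_{c+d})^{*}$ for $c+d\neq 0$ by applying the subtraction identity to the pair $(c+d,d)$, which is exactly what your induction needs).

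The one point to repair is your use of $F(x)=F(-x)$ to "absorb the sign." In the direction (2) $\Rightarrow$ (1) this is circular as written: $F(x)=F(-x)$ (equivalently $F(ax)=F(x)$) is a consequence of Condition A, which is precisely what you are trying to prove, and there is no evident way to derive it from hypothesis (2) beforehand. Fortunately it is not needed in either direction, because the signs match exactly: the matrix relation gives $W_{(F_{c})^{*}-(F_{d})^{*}}(j-i)=\varepsilon p^{\frac{n}{2}}\zeta_{p}^{F_{c-d}(i-j)}$, i.e.\ $W_{(F_{c})^{*}-(F_{d})^{*}}(a)=\varepsilon p^{\frac{n}{2}}\zeta_{p}^{F_{c-d}(-a)}$, so $(F_{c})^{*}-(F_{d})^{*}$ is weakly regular with sign $\varepsilon$ and dual $x\mapsto F_{c-d}(-x)$. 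By Eq.~(5), $(F_{c-d})^{*}$ is also weakly regular with sign $\varepsilon_{(F_{c-d})^{*}}=\varepsilon^{-1}=\varepsilon$ and dual $((F_{c-d})^{*})^{*}(x)=F_{c-d}(-x)$; since a weakly regular bent function is determined by its sign and its dual via the inverse Walsh transform, $(F_{c})^{*}-(F_{d})^{*}=(F_{c-d})^{*}$ follows with no symmetry assumption on $F$. The same exact bookkeeping makes the absorption step superfluous in (1) $\Rightarrow$ (2) as well (there $W_{(F_{c-d})^{*}}(j-i)=\varepsilon p^{\frac{n}{2}}\zeta_{p}^{F_{c-d}(i-j)}$ directly, which is $\varepsilon p^{\frac{n}{2}}(H_{c-d})_{i,j}$). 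With that correction your argument is the paper's proof.
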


\begin{proof}
(1) $\Rightarrow$ (2): Since $F$ is a vectorial dual-bent function with Condition A, by Proposition 2, we have that for any $c \in {V_{m}^{(p)}}^{*}$, $F_{c}$ is weakly regular bent with $\varepsilon_{F_{c}}=\varepsilon$, where $\varepsilon \in \{\pm 1\}$ is a constant with $\varepsilon=1$ if $p\neq 3$. By Lemma 7, we have that $H_{c}^{(z)}, c \in {V_{m}^{(p)}}^{*}, z \in V_{n}^{(p)}$ are all generalized Hadamard matrices and Eq. (15) holds. For any $c\neq d \in {V_{m}^{(p)}}^{*}$ and $i, j \in V_{n}^{(p)}$, since $F$ is vectorial bent with $\varepsilon_{F_{c}}=\varepsilon, c \in {V_{m}^{(p)}}^{*}$, we have
\begin{equation}\label{17}
\begin{split}
(H_{c}\overline{H_{d}}^{\top})_{i,j}&=\sum_{u \in V_{n}^{(p)}}\zeta_{p}^{F_{c}(i-u)-F_{d}(j-u)}\\
&=p^{-2n}\sum_{u \in V_{n}^{(p)}}\sum_{x \in V_{n}^{(p)}}W_{F_{c}}(x)\zeta_{p}^{\langle i-u, x\rangle_{n}}\sum_{y \in V_{n}^{(p)}}\overline{W_{F_{d}}(y)}\zeta_{p}^{-\langle j-u, y\rangle_{n}}\\
&=p^{-n}\sum_{u \in V_{n}^{(p)}}\sum_{x\in V_{n}^{(p)}}\varepsilon \zeta_{p}^{(F_{c})^{*}(x)+\langle i-u, x\rangle_{n}}
\sum_{y \in V_{n}^{(p)}}\varepsilon \zeta_{p}^{-(F_{d})^{*}(y)-\langle j-u, y\rangle_{n}}\\
&=p^{-n}\sum_{x, y \in V_{n}^{(p)}}\zeta_{p}^{(F_{c})^{*}(x)-(F_{d})^{*}(y)+\langle i, x\rangle_{n}-\langle j, y\rangle_{n}}\sum_{u \in V_{n}^{(p)}}\zeta_{p}^{\langle u, y-x\rangle_{n}}\\
&=\sum_{x \in V_{n}^{(p)}}\zeta_{p}^{(F_{c})^{*}(x)-(F_{d})^{*}(x)+\langle i-j, x\rangle_{n}}\\
&=W_{(F_{c})^{*}-(F_{d})^{*}}(j-i).
\end{split}
\end{equation}
Since $F$ is vectorial dual-bent with $(F_{c})^{*}=(F^{*})_{c}, c \in {V_{m}^{(p)}}^{*}$, for any $c\neq d \in {V_{m}^{(p)}}^{*}$ we have $(F_{c})^{*}-(F_{d})^{*}=(F^{*})_{c}-(F^{*})_{d}=(F^{*})_{c-d}=(F_{c-d})^{*}$. Hence by Eq. (5) and Eq. (17) we have
\begin{equation*}
(H_{c}\overline{H_{d}}^{\top})_{i, j}=W_{(F_{c-d})^{*}}(j-i)=\varepsilon p^{\frac{n}{2}}\zeta_{p}^{F_{c-d}(i-j)}=\varepsilon p^{\frac{n}{2}}(H_{c-d})_{i, j},
\end{equation*}
which implies that $H_{c}\overline{H_{d}}^{\top}=\varepsilon p^{\frac{n}{2}}H_{c-d}$.

(2) $\Rightarrow$ (1): Since $H_{c}^{(z)}, c \in {V_{m}^{(p)}}^{*}, z \in V_{n}^{(p)}$ are all generalized Hadamard matrices and there is a constant $\varepsilon \in \{\pm 1\}$ such that Eq. (15) holds, by Lemma 7 we have that $F$ is a vectorial bent function for which for any $c \in {V_{m}^{(p)}}^{*}$, $F_{c}$ is weakly regular with $\varepsilon_{F_{c}}=\varepsilon$. For any $c\neq d \in {V_{m}^{(p)}}^{*}$ and $i, j \in V_{n}^{(p)}$, by Eq. (17) and $H_{c}\overline{H_{d}}^{\top}=\varepsilon p^{\frac{n}{2}}H_{c-d}$, we have
\begin{equation}\label{18}
W_{(F_{c})^{*}-(F_{d})^{*}}(j-i)=\varepsilon p^{\frac{n}{2}}\zeta_{p}^{F_{c-d}(i-j)}.
\end{equation}
By Eq. (18), for any $c \neq d \in {V_{m}^{(p)}}^{*}$, we have that
$(F_{c})^{*}-(F_{d})^{*}$ is a weakly regular bent function with $\varepsilon_{(F_{c})^{*}-(F_{d})^{*}}=\varepsilon$ and $((F_{c})^{*}-(F_{d})^{*})^{*}(x)=F_{c-d}(-x)$, which implies that
\begin{equation}\label{19}
(F_{c})^{*}-(F_{d})^{*}=(F_{c-d})^{*}.
\end{equation}
Since Eq. (19) holds for all $c\neq d \in {V_{m}^{(p)}}^{*}$, for any $c, d \in {V_{m}^{(p)}}^{*}$ with $c+d \neq 0$, we have
\begin{equation}\label{20}
(F_{c})^{*}+(F_{d})^{*}=(F_{c+d})^{*}.
\end{equation}
Let $\{\alpha_{1}, \dots, \alpha_{m}\}$ be an arbitrary fixed basis of $V_{m}^{(p)}$. For any $x \in V_{n}^{(p)}$, let $G(x) \in V_{m}^{(p)}$ be given by the following equation system
\begin{equation*}
\left\{
\begin{split}
& \langle \alpha_{1}, G(x)\rangle_{m}=(F_{\alpha_{1}})^{*}(x),\\
& \langle \alpha_{2}, G(x)\rangle_{m}=(F_{\alpha_{2}})^{*}(x),\\
& \ \ \ \ \ \vdots\\
& \langle \alpha_{m}, G(x)\rangle_{m}=(F_{\alpha_{m}})^{*}(x).
\end{split}\right.
\end{equation*}
Then $G$ is a function from $V_{n}^{(p)}$ to $V_{m}^{(p)}$ satisfying $G_{\alpha_{i}}=(F_{\alpha_{i}})^{*}, 1\leq i \leq m$. For any $c \in {V_{m}^{(p)}}^{*}$, denote $c$ by $c=a_{i_{1}}\alpha_{i_{1}}+\dots+a_{i_{t}}\alpha_{i_{t}}$, where $a_{i_{j}} \in \mathbb{F}_{p}^{*}, 1\leq j \leq t$. Then by Eq. (20) we have
\begin{equation*}
\begin{split}
G_{c}(x)&=\langle c, G(x)\rangle_{m}\\
&=a_{i_{1}}\langle \alpha_{i_{1}}, G(x)\rangle_{m}+\dots+a_{i_{t}}\langle \alpha_{i_{t}}, G(x)\rangle_{m}\\
&=a_{i_{1}}G_{\alpha_{i_{1}}}(x)+\dots+a_{i_{t}}G_{\alpha_{i_{t}}}(x)\\
&=a_{i_{1}}(F_{\alpha_{i_{1}}})^{*}(x)+\dots+a_{i_{t}}(F_{\alpha_{i_{t}}})^{*}(x)\\
&=(F_{a_{i_{1}}\alpha_{i_{1}}})^{*}(x)+\dots+(F_{a_{i_{t}}\alpha_{i_{t}}})^{*}(x)\\
&=(F_{a_{i_{1}}\alpha_{i_{1}}+\dots+a_{i_{t}}\alpha_{i_{t}}})^{*}(x)\\
&=(F_{c})^{*}(x).
\end{split}
\end{equation*}
For any $c \in {V_{m}^{(p)}}^{*}$, since $G_{c}=(F_{c})^{*}$ is bent, we have that $G$ is vectorial bent. Therefore, $F$ is vectorial dual-bent with $(F_{c})^{*}=(F^{*})_{c}, c \in {V_{m}^{(p)}}^{*}$, where $F^{*}=G$.
\end{proof}

Below we give an example to illustrate Theorem 4.

\begin{example}\label{Example 2}
Let $F: \mathbb{F}_{3^6} \times \mathbb{F}_{3^6}\rightarrow \mathbb{F}_{3^2}$ be defined by
\begin{equation*}
F(x_{1}, x_{2})=Tr_{2}^{6}(x_{1}x_{2}^{717}).
\end{equation*}
Then by Proposition 3 of \emph{\cite{WFW2023Be}}, $F$ is a vectorial dual-bent function with Condition A and the corresponding $\varepsilon=1$. For any $c \in \mathbb{F}_{3^2}^{*}, z=(z_{1}, z_{2}) \in \mathbb{F}_{3^6} \times \mathbb{F}_{3^6}$, define
\begin{equation*}
H_{c}^{(z)}=\left[\zeta_{3}^{Tr_{1}^{6}(c(x_{1}-y_{1})(x_{2}-y_{2})^{717})-Tr_{1}^{6}(z_{1}(x_{1}-y_{1})+z_{2}(x_{2}-y_{2}))}\right]_{(x_{1}, x_{2}), (y_{1}, y_{2}) \in \mathbb{F}_{3^6} \times \mathbb{F}_{3^6}}.
\end{equation*}
Denote $H_{c}^{(0)}$ by $H_{c}$. Then by Theorem 4, $H_{c}^{(z)}, c \in \mathbb{F}_{3^2}^{*}, z=(z_{1}, z_{2}) \in \mathbb{F}_{3^6} \times \mathbb{F}_{3^6}$ are all generalized Hadamard matrices for which
\begin{equation*}
729^{-1}\sum_{i \in \mathbb{F}_{3^6} \times \mathbb{F}_{3^6}}(H_{c}^{(z)})_{i, (0, 0)} \in \{1, \zeta_{3}, \zeta_{3}^{2}\} \ \text{ for all } \ c \in \mathbb{F}_{3^2}^{*}, z=(z_{1}, z_{2}) \in \mathbb{F}_{3^6} \times \mathbb{F}_{3^6},
\end{equation*}
and $H_{c}\overline{H_{d}}^{\top}=729H_{c-d}$ for any $c\neq d \in \mathbb{F}_{3^2}^{*}$.
\end{example}

\section{A characterization of vectorial dual-bent functions with Condition A in terms of bent partitions when $p=2$} \label{sec: 6}

When $p$ is an odd prime, a characterization of vectorial dual-bent functions with Condition A in terms of bent partitions has been given in \cite{WFW2023Be}, see Lemma 2. In this section, when $p=2$, we give a characterization of vectorial dual-bent functions with Condition A in terms of bent partitions. First, we give a lemma, which characterizes bent partitions of $V_{n}^{(2)}$ of depth $2^m$ in terms of vectorial bent functions.

\begin{lemma}\label{Lemma 8}
Let $\Gamma=\{A_{i}, i \in V_{m}^{(2)}\}$ be a partition of $V_{n}^{(2)}$, where $n\geq 4$ is even, $2\leq m\leq \frac{n}{2}$. Define $F: V_{n}^{(2)} \rightarrow V_{m}^{(2)}$ as
\begin{equation*}
F(x)=\sum_{i \in V_{m}^{(2)}}\delta_{A_{i}}(x)i.
\end{equation*}
The following two statements are equivalent.

\emph{(1)} $\Gamma$ is a bent partition.

\emph{(2)} $F$ is a vectorial bent function for which there exists a function $G: V_{n}^{(2)}\rightarrow V_{m}^{(2)}$ and a set $S\subseteq V_{n}^{(2)}$ such that
\begin{equation*}
(F_{c})^{*}(x)=G_{c}(x)+\delta_{S}(x), c \in {V_{m}^{(2)}}^{*}, x \in V_{n}^{(2)}.
\end{equation*}
\end{lemma}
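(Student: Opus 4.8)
The plan is to translate the definition of a bent partition into a statement about the Walsh transforms of the component functions $F_c$, $c\in{V_m^{(2)}}^*$, and then recognize the resulting structure. Recall that a bent function constructed from $\Gamma$ corresponds to a choice, for each $i\in\mathbb{F}_2$, of exactly $2^{m-1}$ of the sets $A_j$ to lie in its preimage; equivalently, it is a function of the form $g\circ F$ where $g:V_m^{(2)}\to\mathbb{F}_2$ is balanced, i.e. $g=\ell$ for some nonzero linear functional, or $g=\ell+1$, or more generally any balanced $g$. Actually the cleanest route: $\Gamma$ is a bent partition iff for every balanced function $g:V_m^{(2)}\to\mathbb{F}_2$ the composition $g\circ F$ is bent. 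First I would show it suffices to test this on the \emph{affine} balanced functions $g_c(y)=\langle c,y\rangle_m$ and $g_c(y)=\langle c,y\rangle_m+1$, $c\in{V_m^{(2)}}^*$; this is where I would invoke (or reprove) the observation that once all $F_c$ are bent, an arbitrary balanced $g\circ F$ has Walsh transform expressible through the $W_{F_c}$, and boundedness of all $|W_{g\circ F}|$ forces a rigid relation among the duals. So the real content is: $\Gamma$ is a bent partition $\iff$ $F$ is vectorial bent \emph{and} the duals $(F_c)^*$ satisfy the cocycle-type condition $(F_c)^*+(F_d)^*=(F_{c+d})^*+(\text{something that does not depend on }c,d)$.

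The key computation is the analogue of Eq.~(10). For $c\neq d\in{V_m^{(2)}}^*$, the function $F_c+F_d=F_{c+d}$ is bent (vectorial bentness), and on the Walsh side one gets, exactly as in Eq.~(10), $W_{F_c+F_d}(a)=\sum_{x}(-1)^{(F_c)^*(x)+(F_d)^*(x)+\langle a,x\rangle_n}\cdot 2^{-n}(\text{Kronecker }\delta)$ — more precisely the computation in (10) shows $\sum_u(-1)^{F_c(u+i)+F_d(u+j)}=W_{(F_c)^*+(F_d)^*}(i+j)$, and since $F_c+F_d=F_{c+d}$ is bent this must equal $\pm 2^{n/2}(-1)^{(F_{c+d})^*(i+j)}$ in absolute value $2^{n/2}$. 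Hence $(F_c)^*+(F_d)^*$ is bent with dual $F_{c+d}$, so $(F_c)^*+(F_d)^*=(F_{c+d})^*$ whenever $c\neq d$. Now I would argue the converse-type normalization: pick a basis $\alpha_1,\dots,\alpha_m$ of $V_m^{(2)}$, set $G$ to be the vectorial function with $G_{\alpha_i}=(F_{\alpha_i})^*$ (as in the proof of Theorem 3), and show that the failure of additivity of $c\mapsto(F_c)^*$ on \emph{all} of $V_m^{(2)}$ (not just $c\neq d$) is captured by a single correction term. The point is that the "bent partition" hypothesis is strictly weaker than "vectorial dual-bent with $(F_c)^*=(F^*)_c$": it only forces $(F_c)^*+(F_d)^*=(F_{c+d})^*$ for $c\neq d$, which by a standard argument about such "almost linear" maps into $\mathbb{F}_2^{V_n^{(2)}}$ forces $(F_c)^* = G_c + \delta_S$ for some fixed $G$ vectorial bent and some fixed set $S$ (the defect $\delta_S=(F_c)^*-G_c$, which one checks is independent of $c$).

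For the direction (2) $\Rightarrow$ (1): assuming $F$ vectorial bent and $(F_c)^*=G_c+\delta_S$, I would take an arbitrary bent function $h$ constructed from $\Gamma$, write $h=g\circ F$ with $g$ balanced, and compute $W_h(a)=\sum_x(-1)^{g(F(x))+\langle a,x\rangle_n}$. Expanding $(-1)^{g(F(x))}=2^{-m}\sum_{c\in V_m^{(2)}}\widehat{g}(c)(-1)^{F_c(x)}$ (finite Fourier expansion on $V_m^{(2)}$, with $F_0\equiv 0$) and then using $(-1)^{F_c(x)}=2^{-n}\sum_z W_{F_c}(z)(-1)^{\langle z,x\rangle_n}=2^{-n/2}\sum_z(-1)^{(F_c)^*(z)+\langle z,x\rangle_n}$ for $c\neq 0$, one gets after summing over $x$ a single surviving term: $W_h(a)=2^{-m}\big(\widehat g(0)\cdot 2^{n}\delta_0(a) + \sum_{c\neq 0}\widehat g(c)\,2^{n/2}(-1)^{(F_c)^*(a)}\big)$. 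Since $g$ is balanced, $\widehat g(0)=0$, so $W_h(a)=2^{n/2-m}\sum_{c\neq 0}\widehat g(c)(-1)^{G_c(a)+\delta_S(a)}=2^{n/2-m}(-1)^{\delta_S(a)}\sum_{c\neq 0}\widehat g(c)(-1)^{G_c(a)}$. The last sum is, up to the $\widehat g(0)$ term which is $0$, equal to $2^m(-1)^{g(G(a))}$, giving $|W_h(a)|=2^{n/2}$ for all $a$, i.e. $h$ is bent. The main obstacle I anticipate is the (1) $\Rightarrow$ (2) step where I must pass from "the cocycle identity holds only for $c\neq d$" to the clean global form $(F_c)^*=G_c+\delta_S$ with a \emph{single} set $S$; this requires care because over $\mathbb{F}_2$ the distinction $c\neq d$ versus $c=d$ is exactly the subtlety that separates bent partitions from vectorial dual-bent functions, and I would handle it by fixing one index, defining $S$ via $\delta_S:=(F_{\alpha_1})^*-G_{\alpha_1}=0$ won't do — rather defining $S$ so that $(F_c)^*-G_c$ is the constant-in-$c$ defect and verifying consistency using the pairwise identities $(F_c)^*+(F_d)^*=(F_{c+d})^*$ for all the needed triples.
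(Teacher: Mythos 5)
Your direction (2) $\Rightarrow$ (1) is sound and is essentially the paper's own argument in Fourier-analytic form: every function constructed from $\Gamma$ is $g\circ F$ with $g$ balanced, and expanding $(-1)^{g(F(x))}$ over the characters of $V_{m}^{(2)}$ together with $W_{F_{c}}(a)=2^{\frac{n}{2}}(-1)^{G_{c}(a)+\delta_{S}(a)}$ gives $W_{g\circ F}(a)=2^{\frac{n}{2}}(-1)^{g(G(a))+\delta_{S}(a)}$, which is the same computation as Eqs. (24)--(26).

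The direction (1) $\Rightarrow$ (2) has a genuine gap. Your key step claims that, because $F_{c}+F_{d}=F_{c+d}$ is bent, the quantity $\sum_{u}(-1)^{F_{c}(u+i)+F_{d}(u+j)}=W_{(F_{c})^{*}+(F_{d})^{*}}(i+j)$ has modulus $2^{\frac{n}{2}}$, hence $(F_{c})^{*}+(F_{d})^{*}=(F_{c+d})^{*}$ for $c\neq d$. This does not follow: the left-hand side is a cross-correlation of $F_{c}$ and $F_{d}$, not the Walsh transform of $F_{c}+F_{d}$, and it has constant modulus $2^{\frac{n}{2}}$ precisely when $(F_{c})^{*}+(F_{d})^{*}$ is itself bent --- which is (part of) the vectorial dual-bent property you are trying to establish, not a consequence of vectorial bentness or of bentness of the affine compositions. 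Moreover the claimed identity is simply false for bent partitions in general: translate a partition coming from a vectorial dual-bent function with Condition A by $a\neq 0$; then each $(F_{c})^{*}(u)$ acquires the common summand $\langle a,u\rangle_{n}$, i.e. $\delta_{S}$ with $S=\{u:\langle a,u\rangle_{n}=1\}$, and one gets $(F_{c})^{*}+(F_{d})^{*}=(F_{c+d})^{*}+\delta_{S}\neq(F_{c+d})^{*}$. Indeed, if your pairwise identity held for all $c\neq d$, the basis construction you quote from Theorem 3 would yield $(F_{c})^{*}=G_{c}$ with no correction term at all, so you would be proving that $F$ is vectorial dual-bent with Condition A --- a strictly stronger conclusion than (2), and one that is not equivalent to (1) (Theorem 5 needs the additional condition $\chi_{u}(D_{F,i})\in\{-2^{\frac{n}{2}-m},-2^{\frac{n}{2}-m}+2^{\frac{n}{2}}\}$ for $u\neq 0$). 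The related reduction you propose, that it suffices to test bentness on the affine balanced functions $\langle c,\cdot\rangle_{m}$ and $\langle c,\cdot\rangle_{m}+1$, is likewise unjustified: that only asserts vectorial bentness of $F$, which does not imply (2). The bent-partition hypothesis must enter through the non-affine balanced compositions, and this is exactly how the paper proceeds: applying Dillon's character-sum formula (Eq. (21)) to every union of $2^{m-1}$ parts forces, for each fixed $u$, the values $\chi_{u}(A_{i})$ to be all equal except at a single index $G(u)$, with an overall sign whose locus defines $S$ (Eqs. (22)--(23)); the relation $W_{F_{c}}(u)=\sum_{i}(-1)^{\langle c,i\rangle_{m}}\chi_{u}(A_{i})$ then yields $(F_{c})^{*}=G_{c}+\delta_{S}$ directly. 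Without an argument of this kind, your proof of (1) $\Rightarrow$ (2) does not go through.
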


\begin{proof}
(1) $\Rightarrow$ (2): By the result in \cite{Dillon1974El}, for any Boolean bent function $f: V_{n}^{(2)}\rightarrow \mathbb{F}_{2}$ and $u \in V_{n}^{(2)}, j \in \mathbb{F}_{2}$, we have
\begin{equation}\label{21}
\chi_{u}(D_{f, j})=\left\{
\begin{split}
2^{n-1}\delta_{0}(u)+2^{\frac{n}{2}-1}, & \ \text{ if } f^{*}(u)=j, \\
2^{n-1}\delta_{0}(u)-2^{\frac{n}{2}-1}, & \ \text{ if } f^{*}(u)=j+1.
\end{split}\right.
\end{equation}
By Eq. (21) and the definition of bent partitions, we have that for any fixed $u \in V_{n}^{(2)}$,
\begin{equation}\label{22}
\chi_{u}(D_{F, i})=\chi_{u}(A_{i})=\left\{
\begin{split}
2^{n-m}\delta_{0}(u)-2^{\frac{n}{2}-m}, & \ \text{ if } i \in V_{m}^{(2)}\backslash \{G(u)\}, \\
2^{n-m}\delta_{0}(u)-2^{\frac{n}{2}-m}+2^{\frac{n}{2}}, & \ \text{ if } i=G(u),
\end{split}\right.
\end{equation}
or
\begin{equation}\label{23}
\chi_{u}(D_{F, i})=\chi_{u}(A_{i})=\left\{
\begin{split}
2^{n-m}\delta_{0}(u)+2^{\frac{n}{2}-m}, & \ \text{ if } i \in V_{m}^{(2)}\backslash \{G(u)\}, \\
2^{n-m}\delta_{0}(u)+2^{\frac{n}{2}-m}-2^{\frac{n}{2}}, & \ \text{ if } i=G(u),
\end{split}\right.
\end{equation}
where $G$ is some function from $V_{n}^{(2)}$ to $V_{m}^{(2)}$. Let
\begin{equation*}
S=\{u \in V_{n}^{(2)}: \chi_{u}(D_{F, i}), i \in V_{m}^{(2)} \text{ satisfy Eq. (23)}\}.
\end{equation*}
Then for any $c \in {V_{m}^{(2)}}^{*}$ and $u \in V_{n}^{(2)}$ we obtain
\begin{equation*}
\begin{split}
W_{F_{c}}(u)&=\sum_{x \in V_{n}^{(2)}}(-1)^{\langle c, F(x)\rangle_{m}+\langle u, x\rangle_{n}}\\
&=\sum_{i \in V_{m}^{(2)}}\sum_{x \in V_{n}^{(2)}: F(x)=i}(-1)^{\langle c, F(x)\rangle_{m}+\langle u, x\rangle_{n}}\\
&=\sum_{i \in V_{m}^{(2)}}(-1)^{\langle c, i\rangle_{m}}\sum_{x \in V_{n}^{(2)}: F(x)=i}(-1)^{\langle u, x\rangle_{n}}\\
&=\sum_{i \in V_{m}^{(2)}}(-1)^{\langle c, i\rangle_{m}}\chi_{u}(D_{F, i})\\
&=\left\{
\begin{split}
\sum_{i \in V_{m}^{(2)}}(-1)^{\langle c, i\rangle_{m}}(2^{n-m}\delta_{0}(u)+2^{\frac{n}{2}-m}-2^{\frac{n}{2}}\delta_{G(u)}(i)), & \ \text{ if } u \in S,\\
\sum_{i \in V_{m}^{(2)}}(-1)^{\langle c, i\rangle_{m}}(2^{n-m}\delta_{0}(u)-2^{\frac{n}{2}-m}+2^{\frac{n}{2}}\delta_{G(u)}(i)), & \ \text { if } u \notin S,
\end{split}\right.\\
&=\left\{
\begin{split}
(2^{n-m}\delta_{0}(u)+2^{\frac{n}{2}-m})\sum_{i \in V_{m}^{(2)}}(-1)^{\langle c, i\rangle_{m}}-2^{\frac{n}{2}}(-1)^{\langle c, G(u)\rangle_{m}}, & \ \text{ if } u \in S,\\
(2^{n-m}\delta_{0}(u)-2^{\frac{n}{2}-m}) \sum_{i \in V_{m}^{(2)}}(-1)^{\langle c, i\rangle_{m}}+2^{\frac{n}{2}}(-1)^{\langle c, G(u)\rangle_{m}}, & \ \text { if } u \notin S,
\end{split}\right.\\
&=\left\{
\begin{split}
2^{\frac{n}{2}}(-1)^{1+\langle c, G(u)\rangle_{m}}, & \ \text{ if } u \in S,\\
2^{\frac{n}{2}}(-1)^{\langle c, G(u)\rangle_{m}}, & \ \text { if } u \notin S,
\end{split}\right.
\end{split}
\end{equation*}
which implies that $F$ is a vectorial bent function with $(F_{c})^{*}(x)=G_{c}(x)+\delta_{S}(x), c \in {V_{m}^{(2)}}^{*}, x \in V_{n}^{(2)}$.

(2) $\Rightarrow$ (1): With the same argument as in the proof of Proposition 3 of \cite{WF2023Ne}, for any $u \in V_{n}^{(2)}, i \in V_{m}^{(2)}$ we have
\begin{equation}\label{24}
     \chi_{u}(D_{F, i})=2^{n-m}\delta_{0}(u)+2^{-m}\sum_{c \in {V_{m}^{(2)}}^{*}}W_{F_{c}}(u)(-1)^{\langle c, i\rangle_{m}}.
\end{equation}
Since $F$ is a vectorial bent function with $(F_{c})^{*}(x)=G_{c}(x)+\delta_{S}(x), c \in {V_{m}^{(2)}}^{*}$, by Eq. (24) we have
\begin{equation}\label{25}
\begin{split}
\chi_{u}(A_{i})
&=\chi_{u}(D_{F, i})\\
&=2^{n-m}\delta_{0}(u)+2^{\frac{n}{2}-m}\sum_{c \in {V_{m}^{(2)}}^{*}}(-1)^{(F_{c})^{*}(u)+\langle c, i\rangle_{m}}\\
&=2^{n-m}\delta_{0}(u)+2^{\frac{n}{2}-m}\sum_{c \in {V_{m}^{(2)}}^{*}}(-1)^{G_{c}(u)+\delta_{S}(u)+\langle c, i\rangle_{m}} \\
&=2^{n-m}\delta_{0}(u)+(-1)^{\delta_{S}(u)}2^{\frac{n}{2}-m}\sum_{c \in {V_{m}^{(2)}}^{*}}(-1)^{\langle c,G(u)+i\rangle_{m}} \\ &=2^{n-m}\delta_{0}(u)+(-1)^{\delta_{S}(u)}2^{\frac{n}{2}-m}(2^{m}\delta_{G(u)}(i)-1).\\
\end{split}
\end{equation}
For any union $D$ of $2^{m-1}$ sets of $\{A_{i}, i \in V_{m}^{(2)}\}$, we have
\begin{equation}\label{26}
 \chi_{u}(D)=\left\{\begin{split}
                      2^{n-1}\delta_{0}(u)+(-1)^{\delta_{S}(u)}2^{\frac{n}{2}-1},  & \ \text{ if } A_{G(u)}\subseteq D,\\
                      2^{n-1}\delta_{0}(u)-(-1)^{\delta_{S}(u)}2^{\frac{n}{2}-1},  & \ \text{ if } A_{G(u)}\nsubseteq D.
                    \end{split}
 \right.
\end{equation}
Let $f: V_{n}^{(2)}\rightarrow \mathbb{F}_{2}$ be a function for which for each $j \in \mathbb{F}_{2}$, there are exactly $2^{m-1}$ sets $A_{i}$ in $\Gamma$ in its preimage set. By Eq. (26), for any $u \in V_{n}^{(2)}$ we have
\begin{equation*}
  \chi_{u}(D_{f, j})=\left\{\begin{split}
                               2^{n-1}\delta_{0}(u)+(-1)^{\delta_{S}(u)}2^{\frac{n}{2}-1}, & \ \text{ if } j=g(u), \\
                               2^{n-1}\delta_{0}(u)-(-1)^{\delta_{S}(u)}2^{\frac{n}{2}-1},  & \ \text{ if } j=g(u)+1,
                            \end{split}\right.
\end{equation*}
where $g(u)=f(A_{G(u)})$. Then we obtain
\begin{equation*}
\begin{split}
W_{f}(u)&=\sum_{x \in V_{n}^{(2)}}(-1)^{f(x)+\langle u, x\rangle_{n}}\\
&=\sum_{j \in \mathbb{F}_{2}}\sum_{x \in V_{n}^{(2)}: f(x)=j}(-1)^{f(x)+\langle u, x\rangle_{n}}\\
&=\sum_{j \in \mathbb{F}_{2}}(-1)^{j}\sum_{x \in V_{n}^{(2)}: f(x)=j}(-1)^{\langle u, x\rangle_{n}}\\
&=\sum_{j \in \mathbb{F}_{2}}(-1)^{j}\chi_{u}(D_{f, j})\\
\end{split}
\end{equation*}
\begin{equation*}
\begin{split}
&=(2^{n-1}\delta_{0}(u)+(-1)^{\delta_{S}(u)}2^{\frac{n}{2}-1})(-1)^{g(u)}+(2^{n-1}\delta_{0}(u)-(-1)^{\delta_{S}(u)}2^{\frac{n}{2}-1})(-1)^{g(u)+1}\\
&=2^{\frac{n}{2}}(-1)^{g(u)+\delta_{S}(u)},
\end{split}
\end{equation*}
which implies that $f$ is a Boolean bent function, and thus $\Gamma$ is a bent partition.
\end{proof}

The following theorem gives a characterization of vectorial dual-bent functions $F: V_{n}^{(2)}\rightarrow V_{m}^{(2)}$ with Condition A in terms of bent partitions.

\begin{theorem}\label{Theorem 5}
Let $F: V_{n}^{(2)}\rightarrow V_{m}^{(2)}$, where $n\geq 4$ is even and $2\leq m \leq \frac{n}{2}$. The following two statements are equivalent.

\emph{(1)} $F$ is a vectorial dual-bent function with $(F_{c})^{*}=(F^{*})_{c}, c \in {V_{m}^{(2)}}^{*}$.

\emph{(2)} $\Gamma=\{D_{F, i}, i \in V_{m}^{(2)}\}$ is a bent partition of $V_{n}^{(2)}$ with $\chi_{u}(D_{F, i}) \in \{-2^{\frac{n}{2}-m}, -2^{\frac{n}{2}-m}+2^{\frac{n}{2}}\}, u \in {V_{n}^{(2)}}^{*}, i \in V_{m}^{(2)}$.
\end{theorem}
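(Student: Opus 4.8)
The plan is to route everything through Lemma~8, which translates the bent-partition property for depth-$2^m$ partitions into a statement about vectorial bent functions, rather than through the partial-difference-set description in Corollary~1. For $(1)\Rightarrow(2)$: if $F$ is vectorial dual-bent with $(F_c)^*=(F^*)_c$, then $F$ is vectorial bent and the pair $(G,S)=(F^*,\emptyset)$ satisfies $(F_c)^*(x)=G_c(x)+\delta_S(x)$, so by Lemma~8 the partition $\Gamma=\{D_{F,i},\,i\in V_m^{(2)}\}$ is a bent partition. For the character values, substitute $W_{F_c}(u)=2^{\frac{n}{2}}(-1)^{(F^*)_c(u)}$ (Eq.~(3)) into Eq.~(24): for $u\in{V_n^{(2)}}^*$ one gets $\chi_u(D_{F,i})=2^{\frac{n}{2}-m}(2^m\delta_{F^*(u)}(i)-1)$, which equals $-2^{\frac{n}{2}-m}+2^{\frac{n}{2}}$ when $i=F^*(u)$ and $-2^{\frac{n}{2}-m}$ otherwise, so $\chi_u(D_{F,i})\in\{-2^{\frac{n}{2}-m},\,-2^{\frac{n}{2}-m}+2^{\frac{n}{2}}\}$.

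For $(2)\Rightarrow(1)$: since $\Gamma$ is a bent partition, Lemma~8 supplies a function $G\colon V_n^{(2)}\to V_m^{(2)}$ and a set $S\subseteq V_n^{(2)}$ with $(F_c)^*(x)=G_c(x)+\delta_S(x)$, and then Eq.~(3) and Eq.~(24) give (cf.\ Eq.~(25)) $\chi_u(D_{F,i})=2^{n-m}\delta_0(u)+(-1)^{\delta_S(u)}2^{\frac{n}{2}-m}(2^m\delta_{G(u)}(i)-1)$. First I would show $S\subseteq\{0\}$: if some $u\in S$ were nonzero, then choosing $i\neq G(u)$ (possible since $|V_m^{(2)}|\geq 4$) gives $\chi_u(D_{F,i})=2^{\frac{n}{2}-m}$, and for $m\geq 2$ this value equals neither $-2^{\frac{n}{2}-m}$ nor $-2^{\frac{n}{2}-m}+2^{\frac{n}{2}}$, contradicting the hypothesis of (2). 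Next I would rule out $S=\{0\}$ by a divisibility argument: from $\sum_{u\in V_n^{(2)}}\chi_u(D_{F,i})=2^n\delta_{F(0)}(i)$ and $\chi_u(D_{F,i})\in\{-2^{\frac{n}{2}-m},\,-2^{\frac{n}{2}-m}+2^{\frac{n}{2}}\}$ for $u\neq 0$, one gets $|D_{F,i}|=\chi_0(D_{F,i})\equiv-2^{\frac{n}{2}-m}\pmod{2^{\frac{n}{2}}}$ for every $i$; but $S=\{0\}$ forces, via the above formula at $u=0$, $|D_{F,i}|=2^{n-m}+2^{\frac{n}{2}-m}\equiv 2^{\frac{n}{2}-m}\pmod{2^{\frac{n}{2}}}$ for every $i\neq G(0)$, and $-2^{\frac{n}{2}-m}\equiv 2^{\frac{n}{2}-m}\pmod{2^{\frac{n}{2}}}$ would force $m\leq 1$, contrary to $2\leq m\leq\frac{n}{2}$. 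Hence $S=\emptyset$, so $(F_c)^*=G_c$ for every $c\in{V_m^{(2)}}^*$; then each $G_c=(F_c)^*$ is a Boolean bent function, so $G$ is vectorial bent, and taking $F^*=G$ yields $(F_c)^*=(F^*)_c$, i.e.\ $F$ is vectorial dual-bent with $(F_c)^*=(F^*)_c$.

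The two character computations and the deduction $S\subseteq\{0\}$ are routine; the delicate step — the main obstacle — is excluding the residual case $S=\{0\}$. In that case $\Gamma$ still satisfies the stated character condition at every nonzero $u$, yet $F$ need not be vectorial dual-bent (one has $(F_c)^*=\langle c,G(\cdot)\rangle_m+\delta_0$, so $c\mapsto(F_c)^*$ is not additive and no vectorial dual $F^*$ with $(F^*)_c=(F_c)^*$ exists), so this case must be killed, which is exactly what the parity/divisibility argument does and where the hypotheses $m\geq 2$ and $m\leq\frac{n}{2}$ enter.
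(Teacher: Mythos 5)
Your proposal is correct and follows essentially the same route as the paper: both directions pass through Lemma 8, and the only substantive work — forcing the value at $u=0$, i.e.\ $|D_{F,i}| \in \{2^{n-m}-2^{\frac{n}{2}-m},\,2^{n-m}-2^{\frac{n}{2}-m}+2^{\frac{n}{2}}\}$ — is done by a counting/divisibility argument based on $\sum_{u \in V_{n}^{(2)}}\chi_{u}(D_{F,i})=2^{n}\delta_{F(0)}(i)$ together with $m\geq 2$ and $m\leq \frac{n}{2}$. The only cosmetic difference is how the residual case is excluded: you track the pair $(G,S)$ from Lemma 8, show $S\subseteq\{0\}$ and kill $S=\{0\}$ by a congruence modulo $2^{\frac{n}{2}}$, whereas the paper invokes Lemma 1 to list the alternative cardinalities and reaches the contradiction $2^{n}=2^{\frac{n}{2}}+2^{m}b_{j}+2$ modulo $2^{m}$ — equivalent arguments in substance.
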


\begin{proof}
By the proof of Lemma 8, $F$ is a vectorial dual-bent function with $(F_{c})^{*}=(F^{*})_{c}, c \in {V_{m}^{(2)}}^{*}$ if and only if $\Gamma=\{D_{F, i}, i \in V_{m}^{(2)}\}$ is a bent partition with $\chi_{u}(D_{F, i}) \in \{2^{n-m}\delta_{0}(u)-2^{\frac{n}{2}-m}, 2^{n-m}\delta_{0}(u)-2^{\frac{n}{2}-m}+2^{\frac{n}{2}}\}, u \in V_{n}^{(2)}, i \in V_{m}^{(2)}$. In the following, we only need to show that when $\Gamma=\{D_{F, i}, i \in V_{m}^{(2)}\}$ is a bent partition and $\chi_{u}(D_{F, i}) \in \{-2^{\frac{n}{2}-m}, -2^{\frac{n}{2}-m}+2^{\frac{n}{2}}\}, u \in {V_{n}^{(2)}}^{*}, i \in V_{m}^{(2)}$, then $\chi_{0}(D_{F, i})=|D_{F, i}| \in \{2^{n-m}-2^{\frac{n}{2}-m}, 2^{n-m}-2^{\frac{n}{2}-m}+2^{\frac{n}{2}}\}, i \in V_{m}^{(2)}$.

For any $i \in V_{m}^{(2)}$, let $b_{i}=|\{u \in {V_{n}^{(2)}}^{*}: \chi_{u}(D_{F, i})=-2^{\frac{n}{2}-m}+2^{\frac{n}{2}}\}|$. Assume that there is $i$ such that $|D_{F, i}| \notin \{2^{n-m}-2^{\frac{n}{2}-m}, 2^{n-m}-2^{\frac{n}{2}-m}+2^{\frac{n}{2}}\}$. Then by Lemma 1, there exists $i_{0}$ such that $|D_{F, i_{0}}|=2^{n-m}+2^{\frac{n}{2}-m}-2^{\frac{n}{2}}, |D_{F, i}|=2^{n-m}+2^{\frac{n}{2}-m}, i \neq i_{0}$. Let $j \in V_{m}^{(2)}$ with $j\neq F(0), i_{0}$. Then $0 \notin D_{F, j}$ and $|D_{F, j}|=2^{n-m}+2^{\frac{n}{2}-m}$. Since
\begin{equation*}
\sum_{u \in V_{n}^{(2)}}\chi_{u}(D_{F, j})=\sum_{u \in V_{n}^{(2)}}\sum_{x \in D_{F, j}}(-1)^{\langle u, x\rangle_{n}}=\sum_{x \in D_{F, j}}\sum_{u \in V_{n}^{(2)}}(-1)^{\langle u, x\rangle_{n}}=2^{n}\delta_{D_{F, j}}(0)=0,
\end{equation*}
and
\begin{equation*}
\begin{split}
\sum_{u \in V_{n}^{(2)}}\chi_{u}(D_{F, j})&=|D_{F, j}|+(-2^{\frac{n}{2}-m}+2^{\frac{n}{2}})b_{j}-2^{\frac{n}{2}-m}(2^n-1-b_{j})\\
&=2^{\frac{n}{2}-m}(2^{\frac{n}{2}}-2^{n}+2+2^{m}b_{j}),
\end{split}
\end{equation*}
we have
\begin{equation}\label{27}
2^{n}=2^{\frac{n}{2}}+2^{m}b_{j}+2.
\end{equation}
Note that $b_{j}\neq 0$ by $n\geq 4$. Since $m\leq \frac{n}{2}$, we have $2^{m} \mid 2^{n}, 2^{m} \mid 2^{\frac{n}{2}}, 2^{m} \mid 2^{m}b_{j}$. Thus by Eq. (27), $2^m \mid 2$, which contradicts $m\geq 2$. Therefore, $|D_{F, i}| \in \{2^{n-m}-2^{\frac{n}{2}-m}, 2^{n-m}-2^{\frac{n}{2}-m}+2^{\frac{n}{2}}\}, i \in V_{m}^{(2)}$.
\end{proof}

Below we give an example to illustrate Theorem 5.
\begin{example}\label{Example 3}
Let $F: \mathbb{F}_{2^6} \times \mathbb{F}_{2^6} \times \mathbb{F}_{2^4} \times \mathbb{F}_{2^4}\rightarrow \mathbb{F}_{2^2}$ be defined by
\begin{equation*}
F(x_{1}, x_{2}, x_{3}, x_{4})=(Tr_{2}^{4}(x_{3}x_{4}^{14}))^{3}Tr_{2}^{6}(x_{1}^{52}x_{2}-x_{1}x_{2}^{58})+Tr_{2}^{6}(x_{1}x_{2}^{58})+Tr_{2}^{4}(\alpha x_{3}x_{4}^{14}),
\end{equation*}
where $\alpha$ is a primitive element of $\mathbb{F}_{2^4}$. By Theorem 5 of \emph{\cite{WFW2023Be}}, $F$ is a vectorial dual-bent function with Condition A. By Theorem 5, $\{D_{F, i}, i \in \mathbb{F}_{2^2}\}$ is a bent partition with $\chi_{u}(D_{F, i})\in \{-256, 768\}, u \in (\mathbb{F}_{2^6} \times \mathbb{F}_{2^6} \times \mathbb{F}_{2^4} \times \mathbb{F}_{2^4})^{*}, i \in \mathbb{F}_{2^2}$.
\end{example}

\section{New characterizations of certain bent partitions} \label{sec: 7}

In this section, we give some new characterizations of bent partitions with Condition $\mathcal{C}$ when $p$ is an odd prime, and bent partitions with condition given in Theorem 5 when $p=2$, respectively.

\begin{theorem}\label{Theorem 6}
Let $p$ be an odd prime. Let $\Gamma=\{A_{i}, i \in V_{m}^{(p)} \}$ be a partition of $V_{n}^{(p)}$, where $n\geq 4$ is even and $2\leq m \leq \frac{n}{2}$. Denote $0 \in A_{i_{0}}$ and $I=\{\sum_{i \in V_{m}^{(p)}}\delta_{A_{i}}(x)i: x \in {V_{n}^{(p)}}^{*}\}$. The following statements are pairwise equivalent.

\emph{(1)} $\Gamma$ is a bent partition with Condition $\mathcal{C}$.

\emph{(2)} For any $i \in V_{m}^{(p)}$, $A_{i}^{*}$ is a regular $(p^{n}, s_{i}(p^{\frac{n}{2}}-\varepsilon), \varepsilon p^{\frac{n}{2}}+s_{i}^{2}-3\varepsilon s_{i}, s_{i}^{2}-\varepsilon s_{i})$ partial difference set, where $s_{i}=p^{\frac{n}{2}-m}+\varepsilon \delta_{i_{0}}(i)$, $\varepsilon \in \{\pm 1\}$ is a constant with $\varepsilon=1$ if $p\neq 3$.

\emph{(3)} Let
\begin{equation*}
\begin{split}
&R_{id}=\{(x, x): x \in V_{n}^{(p)}\},\\
&R_{i}=\{(x, y): x, y \in V_{n}^{(p)}, x-y \in A_{i}^{*}\}, i \in I.
\end{split}
\end{equation*}
Then $\{R_{id}, R_{i}, i \in I\}$ is an $|I|$-class amorphic association scheme for which $|I|\geq 3$ and for any $i \in I$, the intersection number $p_{i, i}^{id}=p^{n-m}-\varepsilon p^{\frac{n}{2}-m}+\delta_{i_{0}}(i)(\varepsilon p^{\frac{n}{2}}-1)$, where $\varepsilon \in \{\pm 1\}$ is a constant with $\varepsilon=1$ if $p\neq 3$.

\emph{(4)} Let
\begin{equation*}
C_{\widetilde{A_{i}^{*}}}=\{c_{\alpha}=(\langle \alpha, x\rangle_{n})_{x \in \widetilde{A_{i}^{*}}}: \alpha \in V_{n}^{(p)}\}, i \in I,
\end{equation*}
where $\widetilde{A_{i}^{*}}$ is a subset of $A_{i}^{*}$ for which any two elements in $\widetilde{A_{i}^{*}}$ are linearly independent and for any $x \in A_{i}^{*}$, there exist $a \in \mathbb{F}_{p}^{*}, x' \in \widetilde{A_{i}^{*}}$ such that $x=ax'$. Then for any $i \in I$, $C_{\widetilde{A_{i}^{*}}}$ is a two-weight $[\frac{p^{n-m}-\varepsilon p^{\frac{n}{2}-m}+\delta_{i_{0}}(i)(\varepsilon p^{\frac{n}{2}}-1)}{p-1}, n]$ projective linear code and the two nonzero weights are
\begin{equation*}
\begin{split}
& w_{1}=p^{n-m-1}+\frac{1-\varepsilon +2\varepsilon \delta_{i_{0}}(i)}{2}p^{\frac{n}{2}-1},\\
& w_{2}=p^{n-m-1}+\frac{-1-\varepsilon +2\varepsilon \delta_{i_{0}}(i)}{2}p^{\frac{n}{2}-1},
\end{split}
\end{equation*}
where $\varepsilon \in \{\pm 1\}$ is a constant with $\varepsilon=1$ if $p\neq 3$.

\emph{(5)} Let
\begin{equation*}
H_{c}^{(z)}=\left[\zeta_{p}^{\langle c, \sum_{i \in V_{m}^{(p)}}\delta_{A_{i}}(x-y)i\rangle_{m}- \langle z, x-y\rangle_{n}}\right]_{x, y \in V_{n}^{(p)}}, c \in {V_{m}^{(p)}}^{*}, z \in V_{n}^{(p)},
\end{equation*}
and $H_{c}=H_{c}^{(0)}$.
Then $H_{c}^{(z)}, c \in {V_{m}^{(p)}}^{*}, z \in V_{n}^{(p)}$ are all generalized Hadamard matrices for which there exists a constant $\varepsilon \in \{\pm 1\}$ with $\varepsilon=1$ if $p\neq 3$ such that \emph{Eq. (15)} holds and $H_{c}\overline{H_{d}}^{\top}=\varepsilon p^{\frac{n}{2}}H_{c-d}$ for any $c\neq d \in {V_{m}^{(p)}}^{*}$.

Furthermore, if any one of the above statements holds, then $I=V_{m}^{(p)}$ and $|I|=p^m$ except one case that $p=3$, $n=2m$ and $\varepsilon=-1$ (in such a case, $I=V_{m}^{(3)} \backslash \{i_{0}\}$ and $|I|=3^m-1$).
\end{theorem}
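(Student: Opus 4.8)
The plan is to reduce every one of statements (1)--(5) to the single assertion that the vectorial $p$-ary function $F : V_{n}^{(p)} \rightarrow V_{m}^{(p)}$ defined by $F(x) = \sum_{i \in V_{m}^{(p)}} \delta_{A_{i}}(x) i$ is a vectorial dual-bent function with Condition A, and then to quote the characterizations already established in Sections III--VI. First I would record the dictionary between the objects attached to the partition $\Gamma$ and those attached to $F$: by construction $D_{F, i} = A_{i}$ for every $i \in V_{m}^{(p)}$, hence $D_{F, i}^{*} = A_{i}^{*}$; since $0 \in A_{i_{0}}$ we have $F(0) = i_{0}$, so $\delta_{F(0)}(i) = \delta_{i_{0}}(i)$ for all $i$; and $F({V_{n}^{(p)}}^{*}) = \{\sum_{i \in V_{m}^{(p)}} \delta_{A_{i}}(x) i : x \in {V_{n}^{(p)}}^{*}\} = I$. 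With this dictionary, the sets $R_{id}, R_{i}$, the codes $C_{\widetilde{A_{i}^{*}}}$, and the matrices $H_{c}^{(z)}$ appearing in (3), (4), (5) are literally the objects $R_{id}, R_{i}$, $C_{\widetilde{D_{F, i}^{*}}}$, $H_{c}^{(z)}$ of Theorems 1, 2 and 4, after substituting $A_{i}^{*} = D_{F, i}^{*}$ and $\delta_{i_{0}}(i) = \delta_{F(0)}(i)$ in their parameters, weights and intersection numbers.

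Next, since $p$ is an odd prime, Lemma 2 applies and shows that statement (1) --- namely that $\Gamma$ is a bent partition with Condition $\mathcal{C}$ --- is equivalent to the statement that $F$ is a vectorial dual-bent function with Condition A. I would then close the loop around this pivot: Proposition 3 gives the equivalence of ``$F$ is a vectorial dual-bent function with Condition A'' and statement (2); Theorem 1 gives its equivalence with statement (3); Theorem 2 gives its equivalence with statement (4); and Theorem 4 gives its equivalence with statement (5). Chaining these equivalences through the common pivot proves that (1)--(5) are pairwise equivalent. Throughout, the constant $\varepsilon$ is unambiguous: it is $\varepsilon_{F_{c}}$ (which is independent of $c \in {V_{m}^{(p)}}^{*}$ under Condition A), it agrees with the $\varepsilon$ of Condition $\mathcal{C}$ by the proof of Lemma 2 in \cite{WFW2023Be}, and $\varepsilon = 1$ whenever $p \neq 3$ by Proposition 2.

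The concluding assertion about $I$ follows from the ``furthermore'' clauses of Theorem 1 (or equally of Theorem 2): once (1)--(5) hold, $F$ is a vectorial dual-bent function with Condition A, so $I = F({V_{n}^{(p)}}^{*}) = V_{m}^{(p)}$ and $|I| = p^{m}$, except in the single case $p = 3$, $n = 2m$, $\varepsilon = -1$, where $I = V_{m}^{(3)} \backslash \{F(0)\} = V_{m}^{(3)} \backslash \{i_{0}\}$ and $|I| = 3^{m} - 1$.

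I do not expect any analytic obstacle here, since all the substantive work was carried out in Sections III--VI; the only point demanding care is the bookkeeping of the dictionary between $\Gamma$ and $F$ --- in particular verifying that the parameters $s_{i} = p^{\frac{n}{2}-m} + \varepsilon \delta_{i_{0}}(i)$, the weights $w_{1}, w_{2}$, and the intersection number $p_{i,i}^{id}$ stated for $\Gamma$ are exactly those proved for $F$ after the identification $\delta_{F(0)}(i) = \delta_{i_{0}}(i)$ --- together with the observation that the hypothesis ``$p$ odd'' is precisely what Lemma 2 requires.
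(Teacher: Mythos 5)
Your proposal is correct and follows exactly the paper's own route: Lemma 2 converts statement (1) into the assertion that $F(x)=\sum_{i \in V_{m}^{(p)}}\delta_{A_{i}}(x)i$ is vectorial dual-bent with Condition A, and then Proposition 3 and Theorems 1, 2, 4 (with the dictionary $D_{F,i}=A_{i}$, $F(0)=i_{0}$, $F({V_{n}^{(p)}}^{*})=I$) give the equivalences with (2)--(5) and the concluding claim about $I$. Your extra bookkeeping of the parameters and of $\varepsilon$ is exactly the implicit content of the paper's one-line proof.
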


\begin{proof}
By Lemma 2, statement (1) holds if and only if $F: V_{n}^{(p)}\rightarrow V_{m}^{(p)}$ is a vectorial dual-bent function with Condition A, where
\begin{equation*}
F(x)=\sum_{i \in V_{m}^{(p)}}\delta_{A_{i}}(x)i.
\end{equation*}
Then the result follows from Proposition 3 and Theorems 1, 2, 4.
\end{proof}

\begin{theorem}\label{Theorem 7}
Let $\Gamma=\{A_{i}, i \in V_{m}^{(2)} \}$ be a partition of $V_{n}^{(2)}$, where $n\geq 4$ is even and $2\leq m \leq \frac{n}{2}$. Denote $0 \in A_{i_{0}}$. The following statements are pairwise equivalent.

\emph{(1)} $\Gamma$ is a bent partition with $\chi_{u}(A_{i}) \in \{-2^{\frac{n}{2}-m}, -2^{\frac{n}{2}-m}+2^{\frac{n}{2}}\}, u \in {V_{n}^{(2)}}^{*}, i \in V_{m}^{(2)}$.

\emph{(2)} For any $i \in V_{m}^{(2)}$, $A_{i}^{*}$ is a regular $(2^{n}, s_{i}(2^{\frac{n}{2}}-1), 2^{\frac{n}{2}}+s_{i}^{2}-3 s_{i}, s_{i}^{2}-s_{i})$ partial difference set, where $s_{i}=2^{\frac{n}{2}-m}+\delta_{i_{0}}(i)$.

\emph{(3)} Let
\begin{equation*}
\begin{split}
& R_{id}=\{(x, x): x \in V_{n}^{(2)}\}, \\
& R_{i}=\{(x, y): x, y \in V_{n}^{(2)}, x+y \in A_{i}^{*}\}, i \in V_{m}^{(2)}.
\end{split}
\end{equation*}
Then $\{R_{id}, R_{i}, i \in V_{m}^{(2)}\}$ is a $2^m$-class amorphic association scheme for which for any $i \in V_{m}^{(2)}$, the intersection number $p_{i, i}^{id}=2^{n-m}-2^{\frac{n}{2}-m}+\delta_{i_{0}}(i)(2^{\frac{n}{2}}-1)$.

\emph{(4)} Let
\begin{equation*}
C_{A_{i}^{*}}=\{c_{\alpha}=(\langle \alpha, x\rangle_{n})_{x \in A_{i}^{*}}: \alpha \in V_{n}^{(2)}\}, i \in V_{m}^{(2)}.
\end{equation*}
Then for any $i \in V_{m}^{(2)}$, $C_{A_{i}^{*}}$ is a two-weight $[2^{n-m}- 2^{\frac{n}{2}-m}+\delta_{i_{0}}(i)(2^{\frac{n}{2}}-1), n]$ projective linear code and the two nonzero weights are
\begin{equation*}
\begin{split}
& w_{1}=2^{n-m-1},\\
& w_{2}=2^{n-m-1}-2^{\frac{n}{2}-1}+\delta_{i_{0}}(i)2^{\frac{n}{2}}.
\end{split}
\end{equation*}

\emph{(5)} Let
\begin{equation*}
H_{c}=\left[(-1)^{\langle c, \sum_{i \in V_{m}^{(2)}}\delta_{A_{i}}(x+y)i\rangle_{m}}\right]_{x, y \in V_{n}^{(2)}}, c \in {V_{m}^{(2)}}^{*}.
\end{equation*}
Then $H_{c}, c \in {V_{m}^{(2)}}^{*}$ are all Hadamard matrices and $H_{c}H_{d}=2^{\frac{n}{2}}H_{c+d}$ for any $c\neq d \in {V_{m}^{(2)}}^{*}$.
\end{theorem}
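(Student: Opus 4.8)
The plan is to reduce Theorem 7 to the characterizations of vectorial dual-bent functions satisfying $(F_{c})^{*}=(F^{*})_{c}$ obtained in Sections III--VI, in the same way that Theorem 6 was deduced from Proposition 3 and Theorems 1, 2, 4 in the odd-characteristic case. First I would define $F: V_{n}^{(2)}\rightarrow V_{m}^{(2)}$ by $F(x)=\sum_{i \in V_{m}^{(2)}}\delta_{A_{i}}(x)i$, so that $D_{F, i}=A_{i}$ and $D_{F, i}^{*}=A_{i}^{*}$ for every $i \in V_{m}^{(2)}$, and $F(0)=i_{0}$. With this identification, statement (1) of Theorem 7 is literally statement (2) of Theorem 5 for the partition $\{D_{F, i}, i \in V_{m}^{(2)}\}$; hence Theorem 5 gives that statement (1) holds if and only if $F$ is a vectorial dual-bent function with $(F_{c})^{*}=(F^{*})_{c}, c \in {V_{m}^{(2)}}^{*}$.

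It then remains to match statements (2)--(5) against the results of Sections III--VI. I would invoke Corollary 1 for the equivalence (1) $\Leftrightarrow$ (2), Corollary 2 for (1) $\Leftrightarrow$ (3), Corollary 3 for (1) $\Leftrightarrow$ (4), and Theorem 3 for (1) $\Leftrightarrow$ (5). A couple of identifications have to be recorded: since $p=2$ one has $\widetilde{D_{F, i}^{*}}=D_{F, i}^{*}$, so the code $C_{A_{i}^{*}}$ of statement (4) is exactly the code $C_{D_{F, i}^{*}}$ of Corollary 3; and $\langle c, \sum_{i \in V_{m}^{(2)}}\delta_{A_{i}}(x+y)i\rangle_{m}=\langle c, F(x+y)\rangle_{m}=F_{c}(x+y)$, so the matrix $H_{c}$ of statement (5) coincides with the matrix $\left[(-1)^{F_{c}(x+y)}\right]_{x, y \in V_{n}^{(2)}}$ of Theorem 3, while the parameters and weights appearing in (2) and (4) are precisely those given by Corollaries 1 and 3 with $\delta_{F(0)}(i)=\delta_{i_{0}}(i)$. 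Since each of Corollaries 1, 2, 3 and Theorem 3 asserts that the displayed object exists if and only if $F$ is a vectorial dual-bent function with $(F_{c})^{*}=(F^{*})_{c}$, chaining all these equivalences with the one supplied by Theorem 5 yields the pairwise equivalence of (1)--(5).

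The one respect in which the argument is simpler than that of Theorem 6 --- and the reason Theorem 7 carries no exceptional case and no index set $I$ --- is that for $p=2$ every $D_{F, i}^{*}$ is automatically nonempty: by Corollary 1 its cardinality is $s_{i}(2^{\frac{n}{2}}-1)$ with $s_{i}=2^{\frac{n}{2}-m}+\delta_{i_{0}}(i)\geq 1$, which is positive because $n\geq 4$ forces $2^{\frac{n}{2}}-1\geq 3$; consequently $F({V_{n}^{(2)}}^{*})=V_{m}^{(2)}$ and the number of nontrivial relations equals $2^{m}\geq 4$, so the hypothesis $d\geq 3$ of Lemma 4 used inside Corollary 2 is met and every statement can be phrased uniformly over all of $V_{m}^{(2)}$. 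I do not anticipate a genuine obstacle: the mathematical content is entirely contained in Sections III--VI, and the only work left is the bookkeeping of these identifications together with the elementary positivity check just described.
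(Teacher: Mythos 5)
Your proposal is correct and follows essentially the same route as the paper: defining $F(x)=\sum_{i \in V_{m}^{(2)}}\delta_{A_{i}}(x)i$, using Theorem 5 to identify statement (1) with $F$ being vectorial dual-bent with $(F_{c})^{*}=(F^{*})_{c}$, and then citing Corollaries 1, 2, 3 and Theorem 3 for the remaining equivalences. The extra positivity/no-exceptional-case remark is harmless bookkeeping already absorbed by those corollaries.
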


\begin{proof}
By Theorem 5, statement (1) holds if and only if $F: V_{n}^{(2)}\rightarrow V_{m}^{(2)}$ is a vectorial dual-bent function with $(F_{c})^{*}=(F^{*})_{c}, c \in {V_{m}^{(2)}}^{*}$, where
\begin{equation*}
F(x)=\sum_{i \in V_{m}^{(2)}}\delta_{A_{i}}(x)i.
\end{equation*}
Then the result follows from Corollaries 1, 2, 3 and Theorem 3.
\end{proof}

\begin{remark}\label{Remark 5}
As far as we know, the known bent partitions $\Gamma$ of $V_{n}^{(2)}$ of depth $2^m$ with $m\geq 2$ given in \emph{\cite{AKM2023Ge,MP2021Be,WFW2023Be}} all satisfy the statement \emph{(1)} of Theorem 7.
\end{remark}

\section{Association schemes from general vectorial dual-bent functions with $F(0)=0, F(x)=F(-x)$ and $2\leq m \leq \frac{n}{2}$} \label{sec: 8}

In \cite{AKMO2023Ve}, Anbar \emph{et al.} showed that vectorial dual-bent functions $F: V_{n}^{(p)} \rightarrow V_{m}^{(p)}$ with $F(0)=0$ and all component functions $F_{c}, c \in {V_{m}^{(p)}}^{*}$ being regular or weakly regular but not regular (that is, $\varepsilon_{F_{c}}, c \in {V_{m}^{(p)}}^{*}$ are all the same) can induce association schemes. Note that for such vectorial dual-bent functions, $n$ must be even and $m \leq \frac{n}{2}$. In this section, we give a necessary and sufficient condition on constructing association schemes from general vectorial dual-bent functions $F: V_{n}^{(p)} \rightarrow V_{m}^{(p)}$ with $F(0)=0, F(x)=F(-x)$ and $2\leq m \leq \frac{n}{2}$. Note that the known vectorial dual-bent functions $F$ all satisfy $F(x)=F(-x)$. Based on our result, more association schemes can be yielded from some vectorial dual-bent functions $F: V_{n}^{(p)} \rightarrow V_{m}^{(p)}$ for which $n$ can be odd, or $n$ is even and  $\varepsilon_{F_{c}}, c \in {V_{m}^{(p)}}^{*}$ are not all the same. First, we need two lemmas.

\begin{lemma}\label{Lemma 9}
Let $F: V_{n}^{(p)} \rightarrow V_{m}^{(p)}$ be a vectorial dual-bent function with $F(0)=0, F(x)=F(-x)$, and $F^{*}$ be a vectorial dual of $F$. Then $F^{*}$ is also a vectorial dual-bent function with $F^{*}(0)=0, F^{*}(x)=F^{*}(-x)$.
\end{lemma}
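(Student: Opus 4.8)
The plan is to unpack the definition of a vectorial dual and then verify the three required properties of $F^{*}$ one at a time, using only the relations already established in Subsection \ref{subsec: 2.1}. Recall that $F^{*}$ being a vectorial dual of $F$ means $(F_{c})^{*}=(F^{*})_{\sigma(c)}$ for all $c\in{V_{m}^{(p)}}^{*}$, where $\sigma$ is some permutation of ${V_{m}^{(p)}}^{*}$; in particular each $(F^{*})_{d}=(F^{*})_{\sigma(c)}=(F_{c})^{*}$ (with $c=\sigma^{-1}(d)$) is the dual of a bent function, hence bent. The fact that $F^{*}$ is again vectorial dual-bent is essentially immediate: the set $\{(F_{c})^{*}:c\in{V_{m}^{(p)}}^{*}\}$ is exactly $\{(F^{*})_{d}:d\in{V_{m}^{(p)}}^{*}\}$, so the duals of the component functions of $F^{*}$ are the functions $((F_{c})^{*})^{*}$; by Eq. (5), $((F_{c})^{*})^{*}(x)=F_{c}(-x)$, and since $F(-x)=F(x)$ we get $((F_{c})^{*})^{*}=F_{c}$. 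Therefore the span of the duals of the component functions of $F^{*}$ (together with the zero function) is precisely $\mathcal{V}_{F}$ reinterpreted, an $m$-dimensional space of bent functions, so $F^{*}$ is vectorial dual-bent.

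Next I would check $F^{*}(0)=0$. The key computation is the evaluation of a bent function's dual at the origin. For each $c$, $W_{F_{c}}(0)=\sum_{x}\zeta_{p}^{F_{c}(x)}$, and by Eqs. (3)--(4) this equals (a unit times) $p^{n/2}\zeta_{p}^{(F_{c})^{*}(0)}$. On the other hand, since $F(0)=0$ we have $F_{c}(0)=\langle c,F(0)\rangle_{m}=0$, and I would argue that $\sum_{x}\zeta_{p}^{F_{c}(x)}$ has a specific form forcing $(F_{c})^{*}(0)=0$ — this is where the hypothesis $F(x)=F(-x)$ and $F(0)=0$ enters together with the known structure of Walsh coefficients of bent functions with $f(0)=0$. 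More cleanly: $F(x)=F(-x)$ implies $F_{c}$ is even, so $W_{F_{c}}(0)=\sum_{x}\zeta_{p}^{F_{c}(x)}$ is a real (indeed rational integer or a real multiple of $\sqrt{-1}$ according to the case) quantity, and comparing with $\varepsilon_{F_c}p^{n/2}\zeta_p^{(F_c)^*(0)}$ pins down $\zeta_p^{(F_c)^*(0)}\in\{\pm1\}\cdot(\text{root of unity})$; since the exponent lies in $\mathbb{F}_p$, this forces $(F_{c})^{*}(0)=0$ for every $c$. Hence $(F^{*})_{d}(0)=0$ for all $d\in{V_{m}^{(p)}}^{*}$, i.e. $\langle d,F^{*}(0)\rangle_{m}=0$ for all $d\neq0$, which gives $F^{*}(0)=0$.

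Finally, for $F^{*}(x)=F^{*}(-x)$ I would use the evenness of each $(F_{c})^{*}$. Since $F_{c}(x)=F_{c}(-x)$, the dual satisfies $W_{F_c}(-a)=\sum_x\zeta_p^{F_c(x)-\langle -a,x\rangle_n}=\sum_x\zeta_p^{F_c(-x)+\langle a,x\rangle_n}=\overline{W_{F_c}(a)}$ after substituting $x\mapsto -x$; combined with the weakly-regular-style formula this yields $(F_{c})^{*}(-a)=(F_{c})^{*}(a)$, i.e. each $(F_{c})^{*}$ is even. Alternatively, apply the identity $((F_{c})^{*})^{*}(x)=F_{c}(-x)=F_{c}(x)$ together with Eq. (5) applied to $(F_c)^*$: $(((F_c)^*)^*)^*(x)=(F_c)^*(-x)$, and since $((F_c)^*)^*=F_c$ we get $(F_c)^*(-x)=(F_c)^*(x)$. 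Thus $(F^{*})_{d}(-x)=(F^{*})_{d}(x)$ for all $d\neq0$, and since the component functions separate points of $V_{m}^{(p)}$ this forces $F^{*}(-x)=F^{*}(x)$. The only mild subtlety — the ``main obstacle'' — is handling the unit factors $\varepsilon_{F_c}$ and the possible $\sqrt{-1}$ in Eq. (4) carefully enough to conclude that the relevant exponents in $\mathbb{F}_p$ actually vanish; once one observes that evenness makes the Walsh sums real (up to the global $\sqrt{-1}$ in the odd-$n$, $p\equiv3\pmod4$ case, which is itself $a$-independent), this is routine.
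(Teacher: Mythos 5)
Your overall skeleton (reduce everything to the component functions, use the duality identity to see that the duals of the components of $F^{*}$ are the $F_{c}$ themselves, then transfer $f(0)=0$ and evenness to the duals) is the same as the paper's, but two of your key steps have genuine gaps. First, you repeatedly invoke Eq.~(5), which the paper states only for \emph{weakly regular} bent functions. Lemma 9 makes no weak regularity assumption — indeed the whole point of Section VIII (see the quantity $\varepsilon_{F_{c}}(x)$ in Theorem 8, which is allowed to depend on $x$) is to cover vectorial dual-bent functions whose components need not be weakly regular. The identity $((F_{c})^{*})^{*}(x)=F_{c}(-x)$ in the generality you need is Theorem 3.1 of \cite{OP2020Du} (valid for any bent function whose dual is bent), which is exactly what the paper cites; with Eq.~(5) alone your argument only covers the case of weakly regular components. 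Your ``alternative'' derivation of the evenness of $(F_{c})^{*}$ is fine once this substitution is made, but your first derivation is wrong: for an even $F_{c}$, substituting $x\mapsto -x$ gives $W_{F_{c}}(-a)=W_{F_{c}}(a)$, not $\overline{W_{F_{c}}(a)}$ — complex conjugation corresponds to negating the \emph{values} of $F_{c}$, not its argument.

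The more serious gap is the step $F^{*}(0)=0$. Your claim that evenness of $F_{c}$ forces $W_{F_{c}}(0)=\sum_{x}\zeta_{p}^{F_{c}(x)}$ to be real (or $\sqrt{-1}$ times a real) rests on the same conflation of $F_{c}(-x)$ with $-F_{c}(x)$; evenness of the function says nothing about the symmetry of its value distribution under $j\mapsto -j$, which is what realness would require. The statement you actually need — that a bent $f$ with $f(0)=0$, $f(x)=f(-x)$ and bent dual satisfies $f^{*}(0)=0$ — is a nontrivial result for odd $p$ (it is Proposition II.1 of \cite{OP2022Tw}, which the paper cites), and for $p=2$ it is \emph{false} componentwise: $f(x_{1},x_{2})=x_{1}x_{2}+x_{1}+x_{2}$ is bent with $f(0)=0$ (evenness is vacuous in characteristic $2$) but $W_{f}(0)=-2$, so $f^{*}(0)=1$. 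Hence for $p=2$ no purely componentwise argument of the kind you sketch can work, and realness of $W_{F_{c}}(0)$ gives nothing since $(-1)$-power sums are always real; the paper instead treats $p=2$ separately, deducing $F^{*}(0)=0$ from the vectorial dual-bent structure via the proofs of Corollary 2 and Proposition 5 of \cite{CMP2021Ve}. So the part you labelled the ``main obstacle'' and declared routine is precisely where a real argument (or the cited results) is required, and your proposed justification for it does not stand.
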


\begin{proof}
Since $F^{*}$ is a vectorial dual of $F$, $(F_{c})^{*}=(F^{*})_{\sigma(c)}, c \in {V_{m}^{(p)}}^{*}$ for some permutation $\sigma$ over ${V_{m}^{(p)}}^{*}$. Since $F$ is vectorial dual-bent, we have that $(F_{c})^{*}, c \in {V_{m}^{(p)}}^{*}$ are all bent functions. By Theorem 3.1 of \cite{OP2020Du}, for any $p$-ary bent function $f$ whose dual $f^{*}$ is also bent, $(f^{*})^{*}(x)=f(-x)$ holds. Thus, for any $c \in {V_{m}^{(p)}}^{*}$,
\begin{equation*}
((F^{*})_{c})^{*}(x)=((F_{\sigma^{-1}(c)})^{*})^{*}(x)=F_{\sigma^{-1}(c)}(-x)=F_{\sigma^{-1}(c)}(x),
\end{equation*}
which implies that $F^{*}$ is a vectorial dual-bent function and a vectorial dual of $F^{*}$ is $F$. When $p=2$, obviously $F^{*}(x)=F^{*}(-x)$, and by the proof of Corollary 2 and Proposition 5 of \cite{CMP2021Ve}, we have $F^{*}(0)=0$. When $p$ is an odd prime, for any $p$-ary bent function $f$ with $f(x)=0, f(x)=f(-x)$, by Proposition II. 1 of \cite{OP2022Tw}, $f^{*}(0)=0, f^{*}(x)=f^{*}(-x)$. Thus for any $c \in {V_{m}^{(p)}}^{*}$, from $F_{c}(0)=0, F_{c}(x)=F_{c}(-x)$, we have $(F^{*})_{\sigma (c)}(0)=(F_{c})^{*}(0)=0, (F^{*})_{\sigma(c)}(-x)=(F_{c})^{*}(-x)=(F_{c})^{*}(x)=(F^{*})_{\sigma(c)}(x)$, which implies that $F^{*}(0)=0, F^{*}(x)=F^{*}(-x)$.
\end{proof}

\begin{lemma}\label{Lemma 10}
Let $F: V_{n}^{(p)} \rightarrow V_{m}^{(p)}$ be a vectorial dual-bent function with $F(0)=0, F(x)=F(-x)$ and $2\leq m \leq \frac{n}{2}$, and $F^{*}$ be a vectorial dual of $F$. Denote $\varepsilon_{F_{c}}(0)=p^{-\frac{n}{2}}\zeta_{p}^{-(F_{c})^{*}(0)}W_{F_{c}}(0)$, $c \in {V_{m}^{(p)}}^{*}$. Then
\begin{itemize}
  \item When $m<\frac{n}{2}$, then $|F({V_{n}^{(p)}}^{*})|=|F^{*}({V_{n}^{(p)}}^{*})|=p^m$;
  \item When $n$ is even and $m=\frac{n}{2}$, if $\varepsilon_{F_{c}}(0)=-1$ for all $c \in {V_{m}^{(p)}}^{*}$, then $|F({V_{n}^{(p)}}^{*})|=|F^{*}({V_{n}^{(p)}}^{*})|=p^m-1$, and if $\varepsilon_{F_{c}}(0), c \in {V_{m}^{(p)}}^{*}$ are not all $-1$, then $|F({V_{n}^{(p)}}^{*})|=|F^{*}({V_{n}^{(p)}}^{*})|=p^m$.
\end{itemize}
\end{lemma}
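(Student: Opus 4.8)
The plan is to count each fibre $D_{F,i}=\{x\in V_{n}^{(p)}:F(x)=i\}$ by a character sum and then read off which values $i\in V_{m}^{(p)}$ are attained on ${V_{n}^{(p)}}^{*}$. By orthogonality of the characters of $V_{m}^{(p)}$,
\[
|D_{F,i}|=\frac{1}{p^{m}}\sum_{c\in V_{m}^{(p)}}\zeta_{p}^{-\langle c,i\rangle_{m}}\sum_{x\in V_{n}^{(p)}}\zeta_{p}^{\langle c,F(x)\rangle_{m}}=p^{n-m}+p^{-m}\sum_{c\in{V_{m}^{(p)}}^{*}}\zeta_{p}^{-\langle c,i\rangle_{m}}W_{F_{c}}(0).
\]
Writing $\sigma$ for the permutation of ${V_{m}^{(p)}}^{*}$ with $(F_{c})^{*}=(F^{*})_{\sigma(c)}$, Lemma~\ref{Lemma 9} gives $F^{*}(0)=0$, hence $(F_{c})^{*}(0)=(F^{*})_{\sigma(c)}(0)=\langle\sigma(c),F^{*}(0)\rangle_{m}=0$ for every $c\in{V_{m}^{(p)}}^{*}$, so $W_{F_{c}}(0)=\varepsilon_{F_{c}}(0)p^{\frac{n}{2}}$ with $|\varepsilon_{F_{c}}(0)|=1$, and $\varepsilon_{F_{c}}(0)\in\{\pm 1\}$ whenever $n$ is even. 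Substituting gives the working formula
\[
|D_{F,i}|=p^{n-m}+p^{\frac{n}{2}-m}\sum_{c\in{V_{m}^{(p)}}^{*}}\varepsilon_{F_{c}}(0)\zeta_{p}^{-\langle c,i\rangle_{m}},
\]
in which the sum over $c$ has modulus at most $p^{m}-1$. Since $F(0)=0$ forces $0\in D_{F,0}$, we have $i\in F({V_{n}^{(p)}}^{*})$ if and only if $|D_{F,i}|\geq 1+\delta_{0}(i)$. All of this applies verbatim to $F^{*}$, which by Lemma~\ref{Lemma 9} is again a vectorial dual-bent function with $F^{*}(0)=0$, $F^{*}(x)=F^{*}(-x)$ and the same $n$, $m$.

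First suppose $m<\frac{n}{2}$, so $n-m>\frac{n}{2}$ and $p^{n-m}>p^{\frac{n}{2}}$. Then $|D_{F,i}|\geq p^{n-m}-p^{\frac{n}{2}}+p^{\frac{n}{2}-m}>0$ for every $i$, and a routine lower bound (using $n\geq 4$, $m\geq 2$) improves this to $|D_{F,0}|\geq 2$. Hence every $i\in V_{m}^{(p)}$ is attained on ${V_{n}^{(p)}}^{*}$ and $|F({V_{n}^{(p)}}^{*})|=p^{m}$; applying the same argument to $F^{*}$ yields $|F^{*}({V_{n}^{(p)}}^{*})|=p^{m}$.

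Now suppose $n$ is even and $m=\frac{n}{2}$. Then $p^{\frac{n}{2}-m}=1$ and $\varepsilon_{F_{c}}(0)\in\{\pm 1\}$, so $|D_{F,i}|=p^{m}+\sum_{c\in{V_{m}^{(p)}}^{*}}\varepsilon_{F_{c}}(0)\zeta_{p}^{-\langle c,i\rangle_{m}}$. For $i\neq 0$ the sum over $c$ has modulus at most $p^{m}-1<p^{m}$, so $|D_{F,i}|\geq 1$. For $i=0$ the integer $|D_{F,0}|=p^{m}+\sum_{c\in{V_{m}^{(p)}}^{*}}\varepsilon_{F_{c}}(0)$ equals $1$ exactly when $\sum_{c}\varepsilon_{F_{c}}(0)=-(p^{m}-1)$, which by the equality case of the triangle inequality for a sum of $p^{m}-1$ numbers of modulus $1$ happens exactly when $\varepsilon_{F_{c}}(0)=-1$ for all $c$, and is at least $3$ otherwise. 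Therefore $|F({V_{n}^{(p)}}^{*})|=p^{m}-1$ if $\varepsilon_{F_{c}}(0)=-1$ for all $c\in{V_{m}^{(p)}}^{*}$ and $|F({V_{n}^{(p)}}^{*})|=p^{m}$ otherwise; the analogous dichotomy for $|F^{*}({V_{n}^{(p)}}^{*})|$, governed by the signs $\varepsilon_{(F^{*})_{c}}(0)$, follows by applying the same reasoning to $F^{*}$.

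It remains to show that $\varepsilon_{F_{c}}(0)=-1$ for all $c\in{V_{m}^{(p)}}^{*}$ holds if and only if $\varepsilon_{(F^{*})_{c}}(0)=-1$ for all $c\in{V_{m}^{(p)}}^{*}$; this is the main obstacle, since the direct character computation for $|D_{F^{*},0}|$ only re-expresses it in terms of itself. The tool I would use is the identity $\sum_{a\in V_{n}^{(p)}}W_{f}(a)^{2}=p^{n}W_{2f}(0)$, valid for any bent $f$ with $f(x)=f(-x)$ (expand the square, use orthogonality, then $f(-x)=f(x)$), together with $\sum_{a\in V_{n}^{(p)}}W_{f}(a)^{2}=p^{n}\sum_{a\in V_{n}^{(p)}}\zeta_{p}^{2f^{*}(a)}=p^{n}W_{2f^{*}}(0)$, which holds because $n$ even forces every spectral sign of $f$ to square to $1$; hence $W_{2f}(0)=W_{2f^{*}}(0)$. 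Taking $f=F_{c}$ and using $2F_{c}=F_{2c}$, $2(F_{c})^{*}=(F^{*})_{2\sigma(c)}$, together with $(F_{2c})^{*}(0)=((F^{*})_{2\sigma(c)})^{*}(0)=0$ (Lemma~\ref{Lemma 9}), gives $\varepsilon_{F_{2c}}(0)=\varepsilon_{(F^{*})_{2\sigma(c)}}(0)$ for every $c\neq 0$. Since $p$ must be odd in this branch (for $p=2$ all Boolean bent functions are regular, so $\varepsilon_{F_{c}}(0)=1$ and the branch is vacuous), both $c\mapsto 2c$ and $c\mapsto 2\sigma(c)$ permute ${V_{m}^{(p)}}^{*}$, and the claimed equivalence follows; interchanging $F$ and $F^{*}$ (legitimate because $F$ is a vectorial dual of $F^{*}$ by Lemma~\ref{Lemma 9}) supplies both implications. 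Combining this with the two dichotomies completes the proof.
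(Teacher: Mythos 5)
Your argument is correct, and its skeleton matches the paper's: count the fibres via $|D_{F,i}|=p^{n-m}+p^{-m}\sum_{c\neq 0}\zeta_{p}^{-\langle c,i\rangle_{m}}W_{F_{c}}(0)$, use Lemma 9 to get $(F_{c})^{*}(0)=0$ so that $W_{F_{c}}(0)=\varepsilon_{F_{c}}(0)p^{\frac{n}{2}}$, bound the character sum for $i\neq 0$, treat $i=0$ separately, and transfer the dichotomy to $F^{*}$. You diverge from the paper in two places, both legitimately. First, for $m<\frac{n}{2}$ you dispose of the fibre over $0$ with the crude modulus bound $p^{n-m}-p^{\frac{n}{2}}+p^{\frac{n}{2}-m}>1$, whereas the paper analyzes the exact equality condition $p^{\frac{n}{2}-m}\sum_{c}\varepsilon_{F_{c}}(0)=1-p^{n-m}$ and rules it out using $\varepsilon_{F_{ac}}(0)=\varepsilon_{F_{c}}(0)\eta_{1}(a)$ (\c{C}e\c{s}melio\u{g}lu--Meidl) when $n$ is odd and a divisibility argument when $n$ is even; your shortcut is valid because for $m<\frac{n}{2}$ the required modulus $p^{\frac{n}{2}}-p^{m-\frac{n}{2}}$ strictly exceeds the attainable $p^{m}-1$, so you save two case analyses and one citation. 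Second, for the transfer to $F^{*}$ when $m=\frac{n}{2}$, the paper simply invokes Proposition 2 of the cited work of Wang--Fu ($\varepsilon_{f^{*}}(0)=\varepsilon_{f}(0)$ for even $n$ and $f(x)=f(-x)$), which yields $\varepsilon_{(F^{*})_{c}}(0)=\varepsilon_{F_{\sigma^{-1}(c)}}(0)$ directly; you instead re-derive the needed sign transfer self-containedly via the second-moment identity $\sum_{a}W_{f}(a)^{2}=p^{n}W_{2f}(0)=p^{n}W_{2f^{*}}(0)$ (valid since $n$ even makes all spectral signs $\pm1$), combined with $2F_{c}=F_{2c}$, $2(F_{c})^{*}=(F^{*})_{2\sigma(c)}$ and the bijectivity of $c\mapsto 2c$ for odd $p$, which gives $\varepsilon_{F_{2c}}(0)=\varepsilon_{(F^{*})_{2\sigma(c)}}(0)$ and hence the same multiset of signs. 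Your version is slightly weaker in detail (you only get the signs up to a relabeling, but that is all the lemma needs) and slightly stronger in self-containedness; the paper's version buys brevity by outsourcing to the earlier reference. The only spots left as exercises --- the numeric check that the crude bound already forces $|D_{F,0}|\geq 2$, and the verification $((F^{*})_{2\sigma(c)})^{*}(0)=0$ from Lemma 9 applied to $F^{*}$ --- are indeed routine.
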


\begin{proof}
By Proposition 3 of \cite{WF2023Ne} and Eq. (24), for any $u \in V_{n}^{(p)}, i \in V_{m}^{(p)}$ we have
\begin{equation}\label{28}
     |D_{F, i}^{*}|=p^{n-m}+p^{-m}\sum_{c \in {V_{m}^{(p)}}^{*}}W_{F_{c}}(0)\zeta_{p}^{-\langle c, i\rangle_{m}}-\delta_{0}(i).
\end{equation}

Since $F^{*}$ is a vectorial dual of $F$, $(F_{c})^{*}=(F^{*})_{\sigma(c)}, c \in {V_{m}^{(p)}}^{*}$ for some permutation $\sigma$ over ${V_{m}^{(p)}}^{*}$. For any $c \in {V_{m}^{(p)}}^{*}$, $W_{F_{c}}(0)=\varepsilon _{F_{c}}(0)p^{\frac{n}{2}}\zeta_{p}^{(F_{c})^{*}(0)}=\varepsilon _{F_{c}}(0)p^{\frac{n}{2}}\zeta_{p}^{(F^{*})_{\sigma(c)}(0)}$. Since $F$ is vectorial dual-bent with $F(0)=0, F(x)=F(-x)$, by Lemma 9 we have $F^{*}(x)=0$ and $W_{F_{c}}(0)=\varepsilon _{F_{c}}(0)p^{\frac{n}{2}}, c \in {V_{m}^{(p)}}^{*}$. By Eq. (28), for any $i \in V_{m}^{(p)}$ we have
\begin{equation}\label{29}
 |D_{F, i}^{*}|=p^{n-m}+p^{\frac{n}{2}-m}\sum_{c \in {V_{m}^{(p)}}^{*}}\varepsilon_{F_{c}}(0)\zeta_{p}^{-\langle c, i\rangle_{m}}-\delta_{0}(i).
\end{equation}
By Eq. (29), for any $i \in {V_{m}^{(p)}}^{*}$, if $|D_{F, i}^{*}|=0$, then
\begin{equation*}
|\sum_{c \in {V_{m}^{(p)}}^{*}}\varepsilon_{F_{c}}(0)\zeta_{p}^{-\langle c, i\rangle_{m}}|=p^{\frac{n}{2}}.
\end{equation*}
Since $m\leq \frac{n}{2}$, we have $|\sum_{c \in {V_{m}^{(p)}}^{*}}\varepsilon_{F_{c}}(0)\zeta_{p}^{-\langle c, i\rangle_{m}}|\leq p^m-1< p^{\frac{n}{2}}$. Hence, for any $i \in {V_{m}^{(p)}}^{*}$, $|D_{F, i}^{*}|\neq 0$. When $i=0$, by Eq. (29) we have that $|D_{F, 0}^{*}|=0$ if and only if
\begin{equation*}
p^{\frac{n}{2}-m}\sum_{c \in {V_{m}^{(p)}}^{*}}\varepsilon_{F_{c}}(0)=1-p^{n-m}.
\end{equation*}
When $n$ is odd, by Theorem 1 of \cite{CM2013A},
\begin{equation}\label{30}
\varepsilon_{F_{a c}}(0)=\varepsilon_{F_{c}}(0)\eta_{1}(a) \ \text{ for any } a \in \mathbb{F}_{p}^{*}, c \in {V_{m}^{(p)}}^{*},
\end{equation}
where $\eta_{1}$ denotes the quadratic character of $\mathbb{F}_{p}$. From Eq. (30) and $\sum_{a \in \mathbb{F}_{p}^{*}}\eta_{1}(a)=0$, we can obtain $\sum_{c \in {V_{m}^{(p)}}^{*}}\varepsilon_{F_{c}}(0)=0$. Thus, when $n$ is odd, $|D_{F, 0}^{*}|\neq 0$ and $|F({V_{n}^{(p)}}^{*})|=p^m$. When $n$ is even and $m<\frac{n}{2}$, $p \mid p^{\frac{n}{2}-m}\sum_{c \in {V_{m}^{(p)}}^{*}}\varepsilon_{F_{c}}(0)$ (Note that $\varepsilon_{F_{c}}(0) \in \{\pm 1\}$ when $n$ is even) and $p \nmid (1-p^{n-m})$, thus $|D_{F, 0}^{*}|\neq 0$ and $|F({V_{n}^{(p)}}^{*})|=p^m$. When $n$ is even, $m=\frac{n}{2}$ and $\varepsilon_{F_{c}}(0)=-1, c \in {V_{m}^{(p)}}^{*}$, $p^{\frac{n}{2}-m}\sum_{c \in {V_{m}^{(p)}}^{*}}\varepsilon_{F_{c}}(0)=1-p^{n-m}$, thus $|D_{F, 0}^{*}|=0$ and $|F({V_{n}^{(p)}}^{*})|=p^m-1$. When $n$ is even, $m=\frac{n}{2}$ and $\varepsilon_{F_{c}}(0), c \in {V_{m}^{(p)}}^{*}$ are not all $-1$, $p^{\frac{n}{2}-m}\sum_{c \in {V_{m}^{(p)}}^{*}}\varepsilon_{F_{c}}(0)$ $ \neq 1-p^{n-m}$, thus $|D_{F, 0}^{*}|\neq 0$ and $|F({V_{n}^{(p)}}^{*})|=p^m$. From the above arguments, we have that the result of Lemma 10 holds for $F$. By Lemma 9, we have that $F^{*}$ is also vectorial dual-bent with $F^{*}(0)=0, F^{*}(-x)=F^{*}(x)$. By Proposition 2 of \cite{WF2022On}, for any $p$-ary bent function $f: V_{n}^{(p)} \rightarrow \mathbb{F}_{p}$ which satisfies that $n$ is even, $f(x)=f(-x)$ and the dual $f^{*}$ is also bent, $\varepsilon_{f^{*}}(0)=\varepsilon_{f}(0)$ holds. When $n$ is even, since $F_{c}, c \in {V_{m}^{(p)}}^{*}$ are all bent with $F_{c}(x)=F_{c}(-x)$ and the duals $(F_{c})^{*}, c \in {V_{m}^{(p)}}^{*}$ are also bent, we have $\varepsilon_{(F^{*})_{c}}(0)=\varepsilon_{(F_{\sigma^{-1}(c)})^{*}}(0)=\varepsilon_{F_{\sigma^{-1}(c)}}(0), c \in {V_{m}^{(p)}}^{*}$ and $\{\varepsilon_{(F^{*})_{c}}(0), c \in {V_{m}^{(p)}}^{*}\}=\{\varepsilon_{F_{c}}(0), c \in {V_{m}^{(p)}}^{*}\}$. Therefore, the result of Lemma 10 also holds for $F^{*}$.
\end{proof}

The following theorem gives a necessary and sufficient condition on constructing association schemes from general vectorial dual-bent functions $F: V_{n}^{(p)} \rightarrow V_{m}^{(p)}$ with $F(0)=0,F(x)=F(-x)$ and $2\leq m \leq \frac{n}{2}$.

\begin{theorem}\label{Theorem 8}
Let $F: V_{n}^{(p)} \rightarrow V_{m}^{(p)}$ be a vectorial dual-bent function with $F(0)=0, F(x)=F(-x)$ and $2\leq m \leq \frac{n}{2}$, and $F^{*}$ be a vectorial dual of $F$. Denote $I=F({V_{n}^{(p)}}^{*})$ and $\varepsilon_{F_{c}}(x)=p^{-\frac{n}{2}}\zeta_{p}^{-(F_{c})^{*}(x)}W_{F_{c}}(x), c \in {V_{m}^{(p)}}^{*}, x \in V_{n}^{(p)}$. Define
\begin{equation*}
\begin{split}
& R_{id}=\{(x, x): x \in V_{n}^{(p)}\},\\
& R_{i}=\{(x, y): x, y \in V_{n}^{(p)}, x-y \in D_{F, i}^{*}\}, i \in I.
\end{split}
\end{equation*}
Then
\begin{itemize}
  \item $I=V_{m}^{(p)}$ and $|I|=p^m$ except one case that $n$ is even, $m=\frac{n}{2}$ and $\varepsilon_{F_{c}}(0)=-1, c \in {V_{m}^{(p)}}^{*}$ (in such a case, $I={V_{m}^{(p)}}^{*}$ and $|I|=p^m-1$).
  \item The following two statements are equivalent.

  \emph{(1)} $\{R_{id}, R_{i}, i \in I\}$ is an $|I|$-class association scheme.

  \emph{(2)} For any $\beta, \beta' \in {V_{n}^{(p)}}^{*}$ with $F^{*}(\beta)=F^{*}(\beta')$, $\varepsilon_{F_{c}}(\beta)=\varepsilon_{F_{c}}(\beta'), c \in {V_{m}^{(p)}}^{*}$.
\end{itemize}
\end{theorem}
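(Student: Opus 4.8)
The first bullet will be read off directly from Lemma~10: for $m<\frac n2$ (in particular for odd $n$) it gives $|I|=p^m$, so $I=V_m^{(p)}$; for $n$ even and $m=\frac n2$ it gives $|I|=p^m$ unless all $\varepsilon_{F_c}(0)=-1$, in which case $|I|=p^m-1$ and, since the proof of Lemma~10 yields $D_{F,0}^{*}=\emptyset$ there, $F(0)=0\notin I$ and $I={V_m^{(p)}}^{*}$. For the equivalence I will first note that axioms (1)--(3) of an association scheme hold automatically: $R_{id}$ is the diagonal; $\{0\}$ together with the $D_{F,i}^{*}$, $i\in I$, partition $V_n^{(p)}$, so $R_{id}$ and the $R_i$ partition $V_n^{(p)}\times V_n^{(p)}$; and $F(x)=F(-x)$ forces $-D_{F,i}^{*}=D_{F,i}^{*}$, hence $R_i^{\top}=R_i$. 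Thus the candidate scheme is symmetric and the only issue is whether the intersection numbers are constant, i.e.\ whether $\mathrm{span}_{\mathbb C}\big\{\overline{\{0\}},\ \overline{D_{F,i}^{*}}:i\in I\big\}$ is a subalgebra of the group ring $\mathbb C[V_n^{(p)}]$. The plan is to transport this to the character side and count a ``dual partition''.

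Under the Fourier transform $\overline A\mapsto\big(w\mapsto\chi_w(A)\big)$, which carries the group-ring product to the pointwise product of the associated functions, the $1+|I|$ images of $\overline{\{0\}}$ and the $\overline{D_{F,i}^{*}}$ are linearly independent, and they all lie in the algebra $\mathcal F$ of functions on $V_n^{(p)}$ that are constant on the blocks of the partition $\mathcal P'$ defined by the level sets of $w\mapsto\big(\chi_w(D_{F,i}^{*})\big)_{i\in I}$; since $\dim\mathcal F=|\mathcal P'|$, the span is a subalgebra — i.e.\ statement (1) holds — if and only if it equals $\mathcal F$, i.e.\ if and only if $|\mathcal P'|=1+|I|$. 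To count the blocks of $\mathcal P'$ I will compute $\chi_w(D_{F,i})$ for $w\neq0$: expanding $\delta_i(F(x))=\tfrac1{p^m}\sum_{c}\zeta_p^{F_c(x)-\langle c,i\rangle_m}$, using $F_c(x)=F_c(-x)$ to rewrite $\sum_x\zeta_p^{F_c(x)+\langle w,x\rangle_n}=W_{F_c}(-w)=W_{F_c}(w)$, and $W_{F_c}(w)=p^{n/2}\varepsilon_{F_c}(w)\zeta_p^{(F_c)^{*}(w)}$ with $(F_c)^{*}=(F^{*})_{\sigma(c)}$ for the permutation $\sigma$ of ${V_m^{(p)}}^{*}$ attached to $F^{*}$, this gives
\begin{equation*}
\chi_w(D_{F,i})=p^{\frac n2-m}\sum_{c\in{V_m^{(p)}}^{*}}\varepsilon_{F_c}(w)\,\zeta_p^{\langle\sigma(c),F^{*}(w)\rangle_m-\langle c,i\rangle_m}\qquad(w\neq0).
\end{equation*}
Since $\chi_w(D_{F,i}^{*})=\chi_w(D_{F,i})-\delta_0(i)$, $\mathcal P'$ is equally the level-set partition of $w\mapsto(\chi_w(D_{F,i}))_{i\in I}$, and taking the discrete Fourier transform in $i$ yields, for all $c_0\neq0$ and $w\neq0$,
\begin{equation*}
\sum_{i\in V_m^{(p)}}\chi_w(D_{F,i})\,\zeta_p^{\langle c_0,i\rangle_m}=p^{\frac n2}\,\varepsilon_{F_{c_0}}(w)\,\zeta_p^{\langle\sigma(c_0),F^{*}(w)\rangle_m}
\end{equation*}
(when $0\notin I$ the missing $i=0$ coordinate is the known constant $\chi_w(D_{F,0})=1$, so this is still legitimate). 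Hence the $\mathcal P'$-block of $w\neq0$ is determined by, and determines, the family $\big(\varepsilon_{F_{c_0}}(w)\zeta_p^{\langle\sigma(c_0),F^{*}(w)\rangle_m}\big)_{c_0\neq0}$.

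From here the conclusion follows: $\{0\}$ is its own $\mathcal P'$-block, and if $w,w'\in{V_n^{(p)}}^{*}$ lie in one block then $\varepsilon_{F_{c_0}}(w)\zeta_p^{\langle\sigma(c_0),F^{*}(w)\rangle_m}=\varepsilon_{F_{c_0}}(w')\zeta_p^{\langle\sigma(c_0),F^{*}(w')\rangle_m}$ for every $c_0\neq0$; as all relevant $\varepsilon$-values lie in $\{\pm1\}$ or all lie in $\{\pm\sqrt{-1}\}$, the ratio of the two $\varepsilon$'s lies in $\{\pm1\}$, so $\zeta_p^{\langle\sigma(c_0),F^{*}(w)-F^{*}(w')\rangle_m}\in\{\pm1\}$ for all $c_0$, and since $p$ is prime (so $-1\notin\langle\zeta_p\rangle$ for odd $p$, trivially for $p=2$) this forces $\langle\sigma(c_0),F^{*}(w)-F^{*}(w')\rangle_m=0$ for all $c_0\neq0$, hence $F^{*}(w)=F^{*}(w')$. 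Thus each $\mathcal P'$-block inside ${V_n^{(p)}}^{*}$ lies in a single fibre of $F^{*}$; conversely two points of one fibre of $F^{*}$ are in the same $\mathcal P'$-block iff $\varepsilon_{F_c}$ agrees on them for all $c\neq0$. Because $F^{*}$ has exactly $|I|$ nonempty fibres over ${V_n^{(p)}}^{*}$ (Lemma~10, via $|F^{*}({V_n^{(p)}}^{*})|=|I|$), I get $|\mathcal P'|\ge1+|I|$, with equality iff every such fibre is a single $\mathcal P'$-block, i.e.\ iff statement (2) holds; combined with ``(1) $\iff|\mathcal P'|=1+|I|$'' this proves (1) $\iff$ (2). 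The main obstacle I expect is precisely this character-side dictionary: recognizing that being an association scheme is equivalent to the dual partition $\mathcal P'$ attaining the minimal number $1+|I|$ of blocks, and then reading off from the explicit formula for $\chi_w(D_{F,i})$ that condition (2) is exactly the statement that $\mathcal P'$ restricted to ${V_n^{(p)}}^{*}$ is the fibre partition of $F^{*}$. The supporting ingredients — the evenness identities, the Fourier inversion in $i$, Lemma~10, and $-1\notin\langle\zeta_p\rangle$ for odd $p$ — are routine.
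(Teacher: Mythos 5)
Your proposal is correct, and its endgame coincides with the paper's: both arguments reduce statement (1) to the count ``the number of distinct vectors $\big(W_{F_{c}}(\beta)\big)_{c \in {V_{m}^{(p)}}^{*}}$, $\beta \in {V_{n}^{(p)}}^{*}$, equals $|I|$'', then factor $W_{F_{c}}(\beta)=\varepsilon_{F_{c}}(\beta)p^{\frac n2}\zeta_{p}^{(F^{*})_{\sigma(c)}(\beta)}$ so that equality of Walsh vectors is equivalent to equality of the $\varepsilon$'s together with $F^{*}(\beta)=F^{*}(\beta')$, and finally use Lemma 10 (both for the first bullet and for $|F({V_{n}^{(p)}}^{*})|=|F^{*}({V_{n}^{(p)}}^{*})|$) to turn that count into condition (2). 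The difference is where the counting criterion comes from: the paper simply cites Theorem 2 of \cite{AKMO2023Ve} (even correcting its statement from $F(V_{n}^{(p)})$ to $F({V_{n}^{(p)}}^{*})$), whereas you re-derive it from scratch via the translation-scheme/Schur-ring dictionary — the span of $\overline{\{0\}}$ and the $\overline{D_{F,i}^{*}}$ is closed under convolution iff, after Fourier transform, the dual level-set partition $\mathcal{P}'$ has exactly $1+|I|$ blocks — together with the explicit computation of $\chi_{w}(D_{F,i})$ via DFT in $i$. This buys a self-contained proof (and makes transparent why $|I|$ is the right target number), at the cost of carrying the standard but unproved facts that the other scheme axioms are automatic and that a unital, block-separating subspace is a subalgebra iff it is the full block algebra. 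One small repair: your parenthetical justification ``$-1\notin\langle\zeta_{p}\rangle$ for odd $p$, trivially for $p=2$'' is garbled for $p=2$, where $-1\in\langle\zeta_{2}\rangle$; the correct reason there is that $\varepsilon_{F_{c}}(x)=1$ identically when $p=2$, so the ratio of the two $\varepsilon$'s is exactly $1$ and the exponent is forced to vanish directly — the conclusion you need still holds.
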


\begin{proof}
By Lemma 10 and its proof, we have that $I=V_{m}^{(p)}$ and $|I|=p^m$ except one case that $n$ is even, $m=\frac{n}{2}$ and $\varepsilon_{F_{c}}(0)=-1, c \in {V_{m}^{(p)}}^{*}$ (in such a case, $I={V_{m}^{(p)}}^{*}$ and $|I|=p^m-1$).

Since $F^{*}$ is a vectorial dual of $F$, $(F_{c})^{*}=(F^{*})_{\sigma(c)}, c \in {V_{m}^{(p)}}^{*}$ for some permutation $\sigma$ over ${V_{m}^{(p)}}^{*}$. By
\begin{equation*}
W_{F_{c}}(x)=\varepsilon_{F_{c}}(x)p^{\frac{n}{2}}\zeta_{p}^{(F_{c})^{*}(x)}=\varepsilon_{F_{c}}(x)p^{\frac{n}{2}}\zeta_{p}^{(F^{*})_{\sigma(c)}(x)}, c \in {V_{m}^{(p)}}^{*}, x \in V_{n}^{(p)},
\end{equation*}
where $\varepsilon_{F_{c}}(x) \in \{\pm 1, \pm \sqrt{-1}\}$ with $\varepsilon_{F_{c}}(x)=1$ if $p=2$, we have that for any $\beta, \beta' \in {V_{n}^{(p)}}^{*}$,
\begin{equation}\label{31}
\begin{split}
& W_{F_{c}}(\beta)=W_{F_{c}}(\beta'), c \in {V_{m}^{(p)}}^{*}\\
& \Leftrightarrow \varepsilon_{F_{c}}(\beta)=\varepsilon_{F_{c}}(\beta'), (F^{*})_{\sigma(c)}(\beta)=(F^{*})_{\sigma(c)}(\beta'), c \in {V_{m}^{(p)}}^{*}\\
& \Leftrightarrow \varepsilon_{F_{c}}(\beta)=\varepsilon_{F_{c}}(\beta'), c \in {V_{m}^{(p)}}^{*}, F^{*}(\beta)=F^{*}(\beta').
\end{split}
\end{equation}
By Lemma 10, $|F({V_{n}^{(p)}}^{*})|=|F^{*}({V_{n}^{(p)}}^{*})|$. Therefore, by relation (31),
\begin{equation*}
\begin{split}
& |\{(W_{F_{c}}(\beta))_{c \in {V_{m}^{(p)}}^{*}}: \beta \in {V_{n}^{(p)}}^{*}\}|=|I|\\
& \Leftrightarrow \text{ for any } \beta, \beta' \in {V_{n}^{(p)}}^{*} \text{ with } F^{*}(\beta)=F^{*}(\beta'), \varepsilon_{F_{c}}(\beta)=\varepsilon_{F_{c}}(\beta'), c \in {V_{m}^{(p)}}^{*}.
\end{split}
\end{equation*}
By Theorem 2 of \cite{AKMO2023Ve} (Note that $I=F(V_{n}^{(p)})$ in Theorem 2 of \cite{AKMO2023Ve} should be corrected as $I=F({V_{n}^{(p)} }^{*})$), $\{R_{id}, R_{i}, i \in I\}$ is an $|I|$-class association scheme if and only if
\begin{equation*}
|\{(W_{F_{c}}(\beta))_{c \in {V_{m}^{(p)}}^{*}}: \beta \in {V_{n}^{(p)}}^{*}\}|=|I|.
\end{equation*}
Hence, $\{R_{id}, R_{i}, i \in I\}$ is an $|I|$-class association scheme if and only if for any $\beta, \beta' \in {V_{n}^{(p)}}^{*}$ with $F^{*}(\beta)=F^{*}(\beta')$, we have $\varepsilon_{F_{c}}(\beta)=\varepsilon_{F_{c}}(\beta'), c \in {V_{m}^{(p)}}^{*}$.
\end{proof}

The following corollary is directly from Theorem 8, which states that for a vectorial dual-bent function $F: V_{n}^{(p)}\rightarrow V_{m}^{(p)}$ with $F(0)=0, F(x)=F(-x)$ and $2\leq m \leq \frac{n}{2}$, association schemes can be induced from $F$ if $F_{c}$ is weakly regular for any $c \in {V_{m}^{(p)}}^{*}$.

\begin{corollary} \label{Corollary 4}
Let $F: V_{n}^{(p)}\rightarrow V_{m}^{(p)}$ be a vectorial dual-bent function with $F(0)=0, F(x)=F(-x)$ and $2\leq m \leq \frac{n}{2}$. Denote $I=F({V_{n}^{(p)}}^{*})$. Define
\begin{equation*}
\begin{split}
& R_{id}=\{(x, x): x \in V_{n}^{(p)}\},\\
& R_{i}=\{(x, y): x, y \in V_{n}^{(p)}, x-y \in D_{F, i}^{*}\}, i \in I.
\end{split}
\end{equation*}
If $F_{c}$ is weakly regular for any $c \in {V_{m}^{(p)}}^{*}$, then $\{R_{id}, R_{i}, i \in I\}$ is an $|I|$-class association scheme, where $I=V_{m}^{(p)}$ and $|I|=p^m$ except one case that $n$ is even, $m=\frac{n}{2}$ and $\varepsilon_{F_{c}}=-1, c \in {V_{m}^{(p)}}^{*}$ (in such a case, $I={V_{m}^{(p)}}^{*}$ and $|I|=p^m-1$).
\end{corollary}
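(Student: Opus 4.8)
The plan is to obtain Corollary~\ref{Corollary 4} as an immediate consequence of Theorem~\ref{Theorem 8}, whose standing hypotheses ($F$ vectorial dual-bent, $F(0)=0$, $F(x)=F(-x)$, $2\leq m\leq \frac{n}{2}$) coincide with those assumed here. Recall that Theorem~\ref{Theorem 8} reduces the existence of the $|I|$-class association scheme $\{R_{id},R_{i},i\in I\}$ to the single condition: for all $\beta,\beta'\in {V_{n}^{(p)}}^{*}$ with $F^{*}(\beta)=F^{*}(\beta')$ one has $\varepsilon_{F_{c}}(\beta)=\varepsilon_{F_{c}}(\beta')$ for every $c\in {V_{m}^{(p)}}^{*}$, where $\varepsilon_{F_{c}}(x)=p^{-\frac{n}{2}}\zeta_{p}^{-(F_{c})^{*}(x)}W_{F_{c}}(x)$. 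So the whole task is to check that the weak regularity hypothesis makes this condition vacuous.

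First I would unwind the definition of weak regularity: a $p$-ary bent function $f$ is weakly regular exactly when $W_{f}(a)=\varepsilon_{f}p^{\frac{n}{2}}\zeta_{p}^{f^{*}(a)}$ for a constant $\varepsilon_{f}$ independent of $a$. Applying this to each component function $F_{c}$ (which is bent, and whose dual $(F_{c})^{*}$ is bent since $F$ is vectorial dual-bent), we get $\varepsilon_{F_{c}}(x)=p^{-\frac{n}{2}}\zeta_{p}^{-(F_{c})^{*}(x)}W_{F_{c}}(x)=\varepsilon_{F_{c}}$ for every $x\in V_{n}^{(p)}$, i.e.\ the map $x\mapsto\varepsilon_{F_{c}}(x)$ is constant. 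Hence $\varepsilon_{F_{c}}(\beta)=\varepsilon_{F_{c}}=\varepsilon_{F_{c}}(\beta')$ holds trivially for all $\beta,\beta'$, so statement (2) of Theorem~\ref{Theorem 8} is satisfied, and statement (1) of Theorem~\ref{Theorem 8} yields that $\{R_{id},R_{i},i\in I\}$ is an $|I|$-class association scheme.

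It then only remains to read off $|I|$. Theorem~\ref{Theorem 8} states that $I=V_{m}^{(p)}$ and $|I|=p^{m}$ except in the case $n$ even, $m=\frac{n}{2}$, and $\varepsilon_{F_{c}}(0)=-1$ for all $c\in {V_{m}^{(p)}}^{*}$, where $I={V_{m}^{(p)}}^{*}$ and $|I|=p^{m}-1$; since $\varepsilon_{F_{c}}(0)=\varepsilon_{F_{c}}$ under the present hypotheses, this exceptional case is precisely ``$n$ even, $m=\frac{n}{2}$, $\varepsilon_{F_{c}}=-1$ for all $c$'', as claimed. I do not expect a genuine obstacle here: once Theorem~\ref{Theorem 8} is available the corollary is essentially automatic, the only point needing a moment's care being to note that weak regularity forces $\varepsilon_{F_{c}}(x)$ itself (not merely its modulus) to be constant in $x$.
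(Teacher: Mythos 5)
Your proposal is correct and matches the paper's route exactly: the paper also derives Corollary 4 directly from Theorem 8, since weak regularity of each $F_{c}$ makes $\varepsilon_{F_{c}}(x)$ constant in $x$, so condition (2) of Theorem 8 holds vacuously and the exceptional case for $|I|$ is read off via $\varepsilon_{F_{c}}(0)=\varepsilon_{F_{c}}$. Nothing further is needed.
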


By using two classes of vectorial dual-bent functions $F: V_{n}^{(p)}\rightarrow V_{m}^{(p)}$ given in \cite{CMP2018Ve, WF2023Ne} for which $n$ can be odd, or $n$ is even and $\varepsilon_{F_{c}}, c \in {V_{m}^{(p)}}^{*}$ are not all the same, we can obtain more association schemes.

\begin{corollary}\label{5}
Let $p$ be an odd prime. Let $F: \mathbb{F}_{p^n} \rightarrow \mathbb{F}_{p^m}$ be defined as $F(x)=Tr_{m}^{n}(\alpha x^{2})$, where $m\geq 2, m \mid n, m\neq n$. Denote $I=F(\mathbb{F}_{p^n}^{*})$. Define
\begin{equation*}
\begin{split}
& R_{id}=\{(x, x): x \in \mathbb{F}_{p^n}\},\\
& R_{i}=\{(x, y): x, y \in \mathbb{F}_{p^n}, x-y \in D_{F, i}^{*}\}, i \in I.
\end{split}
\end{equation*}
Then $\{R_{id}, R_{i}, i \in I\}$ is an $|I|$-class association scheme, where $I=\mathbb{F}_{p^m}$ and $|I|=p^m$ except one case that $n$ is even, $m=\frac{n}{2}$ and $\eta_{n}(\alpha)=\xi^{n}$ (in such a case, $I=\mathbb{F}_{p^m}^{*}$ and $|I|=p^m-1$), where $\eta_{n}$ denotes the quadratic character of $\mathbb{F}_{p^n}$, $\xi=1$ if $p\equiv 1 \pmod 4$ and $\xi=\sqrt{-1}$ if $p \equiv 3 \pmod 4$.
\end{corollary}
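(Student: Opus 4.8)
The plan is to obtain Corollary 5 as a direct application of Corollary 4, the only genuine work being the explicit evaluation of the Walsh coefficients of the component functions of $F(x)=Tr_m^n(\alpha x^2)$. First I would check the hypotheses of Corollary 4: clearly $F(0)=0$ and $F(-x)=F(x)$, and since $m\ge 2$, $m\mid n$ and $m\ne n$ we have $n/m\ge 2$, i.e. $2\le m\le \frac n2$. That $F$ is vectorial dual-bent is known from \cite{CMP2018Ve,WF2023Ne}; explicitly, the component functions are $F_c(x)=\langle c,Tr_m^n(\alpha x^2)\rangle_m=Tr_1^n(c\alpha x^2)$ for $c\in\mathbb{F}_{p^m}^*$, and by transitivity of the trace ($Tr_1^n(\gamma\delta)=Tr_1^m(\gamma\,Tr_m^n(\delta))$ for $\gamma\in\mathbb{F}_{p^m}$, $\delta\in\mathbb{F}_{p^n}$) one verifies $(F_c)^*(x)=Tr_1^m\big(c^{-1}Tr_m^n(-(4\alpha)^{-1}x^2)\big)$, so $F^*=G$ with $G(x)=Tr_m^n(-(4\alpha)^{-1}x^2)$ and associated permutation $\sigma(c)=c^{-1}$ on $\mathbb{F}_{p^m}^*$.

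Next I would show that every component function $F_c$, $c\in\mathbb{F}_{p^m}^*$, is weakly regular. Completing the square gives $W_{F_c}(a)=\zeta_p^{-Tr_1^n((4c\alpha)^{-1}a^2)}\sum_{y\in\mathbb{F}_{p^n}}\zeta_p^{Tr_1^n(c\alpha y^2)}$, and the standard evaluation of the quadratic Gauss sum over $\mathbb{F}_{p^n}$, namely $\sum_{y\in\mathbb{F}_{p^n}}\zeta_p^{Tr_1^n(\beta y^2)}=\eta_n(\beta)(-1)^{n-1}\xi^n p^{\frac n2}$ with $\eta_n$ the quadratic character of $\mathbb{F}_{p^n}$ and $\xi$ as in the statement, yields
\[
W_{F_c}(a)=\eta_n(c\alpha)(-1)^{n-1}\xi^n\,p^{\frac n2}\,\zeta_p^{(F_c)^*(a)},\qquad a\in\mathbb{F}_{p^n}.
\]
Hence $\varepsilon_{F_c}=\eta_n(c\alpha)(-1)^{n-1}\xi^n$ is independent of $a$, so $F_c$ is weakly regular. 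Corollary 4 then immediately gives that $\{R_{id},R_i,i\in I\}$ is an $|I|$-class association scheme, with $I=\mathbb{F}_{p^m}$ and $|I|=p^m$ except when $n$ is even, $m=\frac n2$ and $\varepsilon_{F_c}=-1$ for all $c\in\mathbb{F}_{p^m}^*$, in which case $I=\mathbb{F}_{p^m}^*$ and $|I|=p^m-1$.

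It then remains to identify this exceptional case with the condition $\eta_n(\alpha)=\xi^n$. Suppose $n$ is even and $m=\frac n2$, so $[\mathbb{F}_{p^n}:\mathbb{F}_{p^m}]=2$; then every $c\in\mathbb{F}_{p^m}^*$ is a square in $\mathbb{F}_{p^n}^*$ (indeed $c^{(p^n-1)/2}=(c^{\,p^m-1})^{(p^m+1)/2}=1$), so $\eta_n(c)=1$ and $\varepsilon_{F_c}=-\eta_n(\alpha)\xi^n$, which is already constant in $c$. Since $n$ is even, $\xi^n\in\{\pm1\}$, so $\varepsilon_{F_c}=-1$ for all $c$ precisely when $\eta_n(\alpha)\xi^n=1$, i.e. $\eta_n(\alpha)=\xi^n$; in that situation $I=\mathbb{F}_{p^m}^*$ and $|I|=p^m-1$ by the first bullet of Theorem 8 (equivalently Lemma 10), and otherwise $I=\mathbb{F}_{p^m}$, $|I|=p^m$. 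This completes the argument.

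The step I expect to be the main obstacle is the sign bookkeeping in the Gauss-sum evaluation: one must fix the normalization of the quadratic Gauss sum over $\mathbb{F}_{p^n}$ precisely enough to extract the factor $(-1)^{n-1}\xi^n$, and then observe that for $n$ even this factor is real, so the equality $\varepsilon_{F_c}=-1$ can be rephrased as $\eta_n(\alpha)=\xi^n$ rather than $\eta_n(\alpha)=\xi^{-n}$. Once this normalization is settled, the dual-bentness of $F$, the weak regularity of the $F_c$, and the squares-in-a-quadratic-extension argument are all routine.
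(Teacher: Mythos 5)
Your proposal is correct and follows essentially the same route as the paper: verify $F(0)=0$, $F(x)=F(-x)$, $2\le m\le \frac{n}{2}$, establish that each $F_c$ is weakly regular with $\varepsilon_{F_c}=(-1)^{n-1}\xi^{n}\eta_{n}(\alpha c)$, and then invoke Corollary 4. The only difference is cosmetic: the paper simply cites Example 1 of \cite{CMP2018Ve} for the Walsh-transform data, whereas you rederive it via completing the square and the quadratic Gauss sum and spell out the (correct) translation of the exceptional case $\varepsilon_{F_c}=-1$ into $\eta_n(\alpha)=\xi^n$ using $\eta_n(c)=1$ for $c\in\mathbb{F}_{p^m}^*$ when $m=\frac{n}{2}$.
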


\begin{proof}
Obviously, $F(0)=0, F(x)=F(-x)$. By Example 1 of \cite{CMP2018Ve}, $F$ is a vectorial dual-bent function for which for any $c \in \mathbb{F}_{p^m}^{*}$, the component function $F_{c}(x)=Tr_{1}^{n}(\alpha c x^{2})$ is weakly regular with $\varepsilon_{F_{c}}=(-1)^{n-1}\xi^{n}\eta_{n}(\alpha c)$. Then the result follows from Corollary 4.
\end{proof}

Below we give an example to illustrate Corollary 5.
\begin{example}\label{4}
Let $p=3, n=6, m=2$. Define $F(x)=Tr_{2}^{6}(x^{2}), x \in \mathbb{F}_{3^6}$. Then $F$ is a vectorial dual-bent function for which for any $c \in \mathbb{F}_{3^2}^{*}$, the component function $F_{c}(x)=Tr_{1}^{6}(c x^{2})$ is weakly regular with $\varepsilon_{F_{c}}=(-1)^{6-1}(\sqrt{-1})^{6}\eta_{6}(c)=\eta_{6}(c)$. Note that $\{\eta_{6}(c): c \in \mathbb{F}_{3^2}^{*}\}=\{\pm1\}$. Let
\begin{equation*}
\begin{split}
& R_{id}=\{(x, x): x \in \mathbb{F}_{3^6}\},\\
& R_{i}=\{(x, y): x, y \in \mathbb{F}_{3^6}, x-y \in D_{F, i}^{*}\}, i \in \mathbb{F}_{3^2}.
\end{split}
\end{equation*}
By Corollary 5, $\{R_{id}, R_{i}, i \in \mathbb{F}_{3^2}\}$ is a $9$-class association scheme.
\end{example}

\begin{corollary}\label{Corollary 6}
Let $p$ be an odd prime. Let $r_{1}, r_{2}, m$ be positive integers with $m\geq 2, m \mid r_{1}, m \mid r_{2}$. For $i \in \mathbb{F}_{p^m}$, define $H(i; x): \mathbb{F}_{p^{r_{1}}}\rightarrow \mathbb{F}_{p^{m}}$ as $H(0; x)=Tr_{m}^{r_{1}}(\alpha_{1} x^{2})$, $H(i; x)=Tr_{m}^{r_{1}}(\alpha_{2} x^{2})$ if $i$ is a square in $\mathbb{F}_{p^m}^{*}$, $H(i; x)=Tr_{m}^{r_{1}}(\alpha_{3} x^{2})$ if $i$ is a non-square in $\mathbb{F}_{p^m}^{*}$, where $\alpha_{j}, 1\leq j \leq 3$ are all squares or all non-squares in $\mathbb{F}_{p^{r_{1}}}^{*}$. Define $G: \mathbb{F}_{p^{r_{2}}} \times \mathbb{F}_{p^{r_{2}}} \rightarrow \mathbb{F}_{p^m}$ as $G(y_{1}, y_{2})=Tr_{m}^{r_{2}}(\beta y_{1} L(y_{2}))$, where $\beta \in \mathbb{F}_{p^{r_{2}}}^{*}$ and $L(x)=\sum a_{i} x^{q^{i}} (q=p^m)$ is a $q$-polynomial over $\mathbb{F}_{p^{r_{2}}}$ inducing a permutation of $\mathbb{F}_{p^{r_{2}}}$. Let $F: \mathbb{F}_{p^{r_{1}}} \times \mathbb{F}_{p^{r_{2}}} \times \mathbb{F}_{p^{r_{2}}}\rightarrow \mathbb{F}_{p^m}$ be defined as
\begin{equation*}
F(x, y_{1}, y_{2})=H(Tr_{m}^{r_{2}}(\gamma y_{2}^{2}); x)+G(y_{1}, y_{2}),
\end{equation*}
where $\gamma \in \mathbb{F}_{p^{r_{2}}}^{*}$. Define
\begin{equation*}
\begin{split}
& R_{id}=\{(x, x): x \in \mathbb{F}_{p^{r_{1}}} \times \mathbb{F}_{p^{r_{2}}} \times \mathbb{F}_{p^{r_{2}}}\},\\
& R_{i}=\{(x, y): x, y \in \mathbb{F}_{p^{r_{1}}} \times \mathbb{F}_{p^{r_{2}}} \times \mathbb{F}_{p^{r_{2}}}, x-y \in D_{F, i}^{*}\}, i \in \mathbb{F}_{p^m}.
\end{split}
\end{equation*}
Then $\{R_{id}, R_{i}, i \in \mathbb{F}_{p^m}\}$ is a $p^m$-class association scheme.
\end{corollary}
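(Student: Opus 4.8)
The plan is to obtain this result as a direct application of Corollary~\ref{Corollary 4}; the only substantive work is verifying its hypotheses for the given $F$. Write $n=r_{1}+2r_{2}$ and regard an element of $\mathbb{F}_{p^{r_{1}}}\times \mathbb{F}_{p^{r_{2}}}\times \mathbb{F}_{p^{r_{2}}}$ as $z=(x,y_{1},y_{2})$. First I would record the easy facts. Since $m\mid r_{1}$, $m\mid r_{2}$ and $r_{1},r_{2}\geq 1$, we have $r_{1}\geq m$ and $r_{2}\geq m$, so $n\geq 3m$ and in particular $2\leq m<\tfrac{n}{2}$. Moreover $F(0)=0$ because $L$ is linear (so $L(0)=0$) and every branch of $H$ is a quadratic form vanishing at the origin; and $F(-z)=F(z)$ because the parameter $Tr_{m}^{r_{2}}(\gamma y_{2}^{2})$ is invariant under $y_{2}\mapsto -y_{2}$, each branch $Tr_{m}^{r_{1}}(\alpha_{j}x^{2})$ is even in $x$, and $G(-y_{1},-y_{2})=Tr_{m}^{r_{2}}(\beta(-y_{1})L(-y_{2}))=Tr_{m}^{r_{2}}(\beta y_{1}L(y_{2}))=G(y_{1},y_{2})$ by linearity of $L$.

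Second, I would recall from \cite{WF2023Ne} that $F$ is a vectorial dual-bent function, and then check that each component function $F_{c}$, $c\in \mathbb{F}_{p^{m}}^{*}$, is weakly regular. Fixing $c$ and $(a,b_{1},b_{2})$, one computes $W_{F_{c}}(a,b_{1},b_{2})$ by summing first over $y_{1}$: since $\langle c,G(y_{1},y_{2})\rangle_{m}=Tr_{1}^{r_{2}}(c\beta y_{1}L(y_{2}))$ is linear in $y_{1}$, this inner sum is $p^{r_{2}}\delta_{0}(c\beta L(y_{2})-b_{1})$, and because $L$ permutes $\mathbb{F}_{p^{r_{2}}}$ the subsequent sum over $y_{2}$ retains only the single value $\widetilde{y}_{2}=L^{-1}(b_{1}/(c\beta))$. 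What remains, up to the unit $\zeta_{p}^{-Tr_{1}^{r_{2}}(b_{2}\widetilde{y}_{2})}$, is the Walsh transform at $a$ of the quadratic form $x\mapsto Tr_{1}^{r_{1}}(c\alpha x^{2})$, where $\alpha\in\{\alpha_{1},\alpha_{2},\alpha_{3}\}$ is the branch selected by $\widetilde{y}_{2}$; this is weakly regular bent with a sign depending only on $p$, $r_{1}$ and the quadratic character value $\eta_{r_{1}}(c\alpha)$. The decisive point is that since $\alpha_{1},\alpha_{2},\alpha_{3}$ are all squares or all non-squares in $\mathbb{F}_{p^{r_{1}}}^{*}$, the value $\eta_{r_{1}}(c\alpha)$ — hence this sign — is the same for every branch, so it is independent of $\widetilde{y}_{2}$ and therefore of $(a,b_{1},b_{2})$. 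Thus $\varepsilon_{F_{c}}$ is a constant, i.e. $F_{c}$ is weakly regular, and $|W_{F_{c}}(a,b_{1},b_{2})|=p^{r_{2}+r_{1}/2}=p^{n/2}$ reconfirms that $F$ is vectorial bent.

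Finally, I would apply Corollary~\ref{Corollary 4} to $F$: it is a vectorial dual-bent function with $F(0)=0$, $F(x)=F(-x)$, $2\leq m\leq \tfrac{n}{2}$, and all component functions weakly regular, so $\{R_{id},R_{i},\,i\in I\}$ is an $|I|$-class association scheme with $I=F({V_{n}^{(p)}}^{*})$. Since $m<\tfrac{n}{2}$, the exceptional case of Corollary~\ref{Corollary 4} ($n$ even, $m=\tfrac{n}{2}$, $\varepsilon_{F_{c}}=-1$ for all $c$) cannot occur, so $I=\mathbb{F}_{p^{m}}$ and $|I|=p^{m}$, which is the assertion. I expect the weak-regularity verification of the second step to be the main obstacle — specifically, the observation that the sign of the inner quadratic Walsh sum is insensitive to which of the three branches of $H$ is active, which is exactly where the hypothesis that $\alpha_{1},\alpha_{2},\alpha_{3}$ are all squares or all non-squares enters; alternatively, if one cites \cite{WF2023Ne} also for the weak regularity of the components, the first and third steps alone suffice.
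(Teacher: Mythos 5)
Your proposal is correct and follows essentially the same route as the paper: check $F(0)=0$ and $F(-z)=F(z)$, invoke Theorem 1 of \cite{WF2023Ne} for vectorial dual-bentness together with weak regularity of all components (where the hypothesis that $\alpha_{1},\alpha_{2},\alpha_{3}$ share the same quadratic character forces $\varepsilon_{F_{c}}=(-1)^{r_{1}-1}\xi^{r_{1}}\eta_{r_{1}}(\alpha_{1}c)$ to be constant), and then apply Corollary 4. The only differences are cosmetic: you re-derive the weak regularity by the Walsh computation that the paper delegates to the proof of Theorem 1 of \cite{WF2023Ne}, and you make explicit the observation (left implicit in the paper) that $n=r_{1}+2r_{2}\geq 3m>2m$ rules out the exceptional case of Corollary 4, so $|I|=p^{m}$.
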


\begin{proof}
It is easy to see that $F(0, 0, 0)=0, F(x, y_{1}, y_{2})=F(-x, -y_{1}, -y_{2})$. By Theorem 1 of \cite{WF2023Ne} and its proof, $F$ is a vectorial dual-bent function for which for any $c \in \mathbb{F}_{p^m}^{*}$, the component function $F_{c}$ is weakly regular with $\varepsilon_{F_{c}}=(-1)^{r_{1}-1}\xi^{r_{1}}\eta_{r_{1}}(\alpha_{1} c)$, where $\eta_{r_{1}}$ denotes the quadratic character of $\mathbb{F}_{p^{r_{1}}}$, $\xi=1$ if $p\equiv 1 \pmod 4$ and $\xi=\sqrt{-1}$ if $p \equiv 3 \pmod 4$. Then the result follows from Corollary 4.
\end{proof}

Below we give an example to illustrate Corollary 6.
\begin{example}\label{5}
Let $p=5, r_{1}=r_{2}=9, m=3$ and $\alpha$ be a primitive element in $\mathbb{F}_{5^9}$. Then $n=r_{1}+2r_{2}=27$ is odd. For $i \in \mathbb{F}_{5^3}$, let $H(i; x)=Tr_{3}^{9}(x^{2}), x \in \mathbb{F}_{5^9}$ if $i=0$ and $H(i; x)=Tr_{3}^{9}(\alpha^{2} x^{2}), x \in \mathbb{F}_{5^9}$ if $i\in \mathbb{F}_{5^3}^{*}$. Define $F: V_{27}^{(5)}\rightarrow \mathbb{F}_{5^3}$ as $F(x, y_{1}, y_{2})=H(Tr_{3}^{9}(y_{2}^{2}); x)+Tr_{3}^{9}(y_{1}y_{2})=(Tr_{3}^{9}(y_{2}^{2}))^{124} Tr_{3}^{9}((\alpha^{2}-1)x^{2})+Tr_{3}^{9}(x^{2}+y_{1}y_{2})$, where $V_{27}^{(5)}=\mathbb{F}_{5^9} \times \mathbb{F}_{5^9} \times \mathbb{F}_{5^9}$. Let
\begin{equation*}
\begin{split}
& R_{id}=\{(x, x): x \in V_{27}^{(5)}\},\\
& R_{i}=\{(x, y): x, y \in V_{27}^{(5)}, x-y \in D_{F, i}^{*}\}, i \in \mathbb{F}_{5^3}.
\end{split}
\end{equation*}
By Corollary 6, $\{R_{id}, R_{i}, i \in \mathbb{F}_{5^3}\}$ is a $125$-class association scheme.
\end{example}

\section{Conclusion}
\label{sec: 6}
In this paper, we further studied vectorial dual-bent functions $F: V_{n}^{(p)} \rightarrow V_{m}^{(p)}$, where $2\leq m \leq \frac{n}{2}$. First, we gave new characterizations of vectorial dual-bent functions with Condition A in terms of amorphic association schemes (Theorem 1), linear codes (Theorem 2), generalized Hadamard matrices (Theorems 3 and 4), and bent partitions when $p=2$ (Theorem 5). Second, based on the relations between vectorial dual-bent functions with Condition A and bent partitions, new characterizations of certain bent partitions in terms of amorphic association schemes, linear codes and generalized Hadamard matrices were presented (Theorems 6 and 7). Finally, for general vectorial dual-bent functions $F: V_{n}^{(p)} \rightarrow V_{m}^{(p)}$ with $F(0)=0, F(x)=F(-x), 2\leq m \leq \frac{n}{2}$, we gave a necessary and sufficient condition on constructing association schemes (Theorem 8) and more association schemes were constructed (Corollaries 5 and 6).

\end{document}